\documentclass[aps,
    pra,
    superscriptaddress,
    graphicx,
    amsmath,
    amssymb,
    reprint,
    floatfix,
    ]{revtex4-1}
\usepackage[utf8]{inputenc}
\usepackage{graphicx} 
\usepackage{physics}
\usepackage{qcircuit}
\usepackage{xcolor}
\usepackage[normalem]{ulem}
\usepackage{verbatim}
\usepackage{dsfont}
\usepackage{cancel}
\usepackage{enumitem}
\usepackage{amsthm}
\usepackage[colorlinks=true,citecolor=blue]{hyperref}
\usepackage[english]{babel}
\usepackage{orcidlink}

\usepackage[percent]{overpic}

\newcommand{\lf}[0]{\left( }
\newcommand{\ri}[0]{ \right)}
\newcommand{\mean}[1]{\left \langle #1 \right \rangle}
\renewcommand{\ketbra}[2]{\left| #1 \right \rangle \left \langle #2 \right |}
\renewcommand{\braket}[1]{\left \langle #1 \right \rangle}
\newtheorem{prop}{Proposition}
\newtheorem{define}{Definition}
\newtheorem{thm}{Theorem}
\newtheorem*{thm*}{Theorem}

\newtheorem{lemma}{Lemma}
\newtheorem*{lemma*}{Lemma}

\newcommand{\tc}{\tilde{C}}

\def \addCQuIC {Center for Quantum Information and Control, University of New Mexico, Albuquerque, 87131, NM, USA}
\def \addPandAUNM {Department of Physics and Astronomy, University of New Mexico, Albuquerque, NM, 87106, USA}
\begin{document}
\setlength{\abovecaptionskip}{5pt plus 2pt minus 2pt}
\setlength{\textfloatsep}{5pt plus 2pt minus 2pt}

\title{Heisenberg-limited metrology in the presence of non-Markovian noise with finite control rates} 
\author{Shravan Shravan\,\orcidlink{0009-0004-6383-5854}}
\email{shravan@unm.edu}
\affiliation{\addCQuIC} \affiliation{\addPandAUNM}
\author{Tyler G. Thurtell\,\orcidlink{0000-0001-8731-6160}}
\email{tthurtell@unm.edu}
\affiliation{\addCQuIC} \affiliation{\addPandAUNM}

\begin{abstract}
    Recently, it has been shown that protocols utilizing infinitely fast controls, such as quantum error correction, can in principle restore Heisenberg-limited frequency estimation in the presence of a broad class of non-Markovian noise models arising from coupling to finite-dimensional environments. However, these controls differ substantially from those used to address Markovian noise, and their underlying physical mechanism remains unclear. In this work, we establish a direct connection between these protocols and the quantum Zeno effect, and extend the framework to infinite-dimensional environments. We delineate three types of constructions: (a) protocols that rely solely on measurements, (b) protocols using an active recovery after measurements and (c) protocols using dynamical decoupling, and rigorously analyze the performance of each when controls can be applied at only a finite rate. While protocols relying purely on measurements can be engineered for noise models where active recoveries are fundamentally impossible, they result in the quantum Fisher information exhibiting a quadratically worse dependence on the control frequency. Surprisingly, while dynamical decoupling protocols are possible whenever protocols relying only on measurements can be engineered, the quantum Fisher information has the same dependence on control frequency as the active recovery protocols. Numerical simulations suggest that the improvement offered by dynamical decoupling may work in regimes beyond the perturbative setting where our rigorous theorems apply.

\end{abstract}
\maketitle

\section{Introduction}
Quantum metrology is concerned with making high-precision measurements of physical parameters using quantum probes. A typical protocol consists of preparing a probe state that is sensitive to the parameters of interest, allowing it to evolve under the corresponding dynamics, and finally performing a measurement to extract information~\cite{giovannetti_advances_2011}. More generally, the evolution may be interspersed with control operations designed to enhance the estimation precision~\cite{sekatski_quantum_2017}. An essential metrological task is frequency estimation, where the goal is to estimate the frequency $\omega$ associated with a Hamiltonian $H=\omega G$, with $G$ a Hermitian operator. The performance of an estimation scheme is usually quantified by the variance of the associated estimator $\hat{\omega}$, which is often assumed to be unbiased. The quantum Cram\'er-Rao bound sets a fundamental limit on the achievable precision, stating that the variance of any unbiased estimator satisfies
\begin{equation}
    \lf \Delta \hat{\omega} \ri^2 \geq \frac{1}{R\mathcal{F}}.
\end{equation}
where $\mathcal{F}$ is the quantum Fisher information (QFI) and $R$ is the number of independent repetitions of the experiment~\cite{BraunsteinCaves}.

The scaling of the QFI with respect to the total interrogation time $t$ serves as a key benchmark in quantum metrology.  Two fundamental limits are typically distinguished in the literature. If $\mathcal{F}$ asymptotically scales as $t^2$, the precision is said to achieve Heisenberg-limited scaling (HL scaling)~\cite{Higgins_2007}. This is the optimal precision achievable for quantum probes. In the absence of noise, HL scaling can be achieved by preparing a probe state that is a coherent superposition of eigenstates of $G$. In contrast, when the evolution is noisy, decoherence typically leads to an asymptotic decay of the QFI.  In such settings, the typical strategy often consists of restricting the interrogation time to short intervals $\tau \ll \tau_D$, where $\tau_D$ is the characteristic decoherence timescale, and repeating the experiment $R$ times so that $R\tau = t$. For such schemes, $\mathcal{F}$ scales as $\tau^2$ for each repetition, implying that $R\mathcal{F}$ scales as $t$, which is referred to as the standard quantum limit (SQL). This scaling corresponds to the best precision achievable without exploiting quantum coherence and is often equated with the classical limit of estimation. A central question in quantum metrology is whether one can design protocols that mitigate the effects of noise and restore HL scaling in noisy frequency estimation. 

One prominent approach to restoring HL scaling is to use quantum error (QEC) and related control techniques. In standard quantum computation, quantum error-correcting codes (QECCs) are designed so that the local error operators act trivially after projection onto the codespace, while logical operators usually correspond to highly non-local entangling operations~\cite{KnillLaflamme}. By contrast, in quantum metrology, the signal, which may be a local operator, must be encoded  as a non-trivial logical operator. In other words, the codespace of a code constructed for metrology must admit a continuous group of transversal logical operations corresponding to the action of the unitary group generated by the signal acting on the physical qubits.  As a consequence, the Eastin–Knill theorem~\cite{eastin_restrictions_2009} imposes strong restrictions on the noise models for which HL scaling can be recovered using QECCs. Nevertheless, it was previously shown that for specific cases, HL scaling can be achieved using QEC~\cite{sekatski_dynamical_2016,kwon_restoring_2025,tan_enhancement_2013,QEC_for_met_arrad,QEC_for_met_dur,QEC_for_met_kessler,QEC_for_met_ozeri}. The push towards finding ultimate precision limits came through the powerful tool of channel extension bounds~\cite{fujiwara_fibre_2008,sekatski_quantum_2017,demkowicz-dobrzanski_elusive_2012}. In particular, Zhou \textit{et al.}~\cite{zhou_zhang_etal} and Demkowicz-Dobrzanski \textit{et al.}~\cite{demkowicz-dobrzanski_adaptive_2017} established general conditions that Markovian noise must satisfy for HL scaling to be achievable. They show that Heisenberg-limited scaling can be achieved if and only if the signal $G$ and the Lindblad operators of the noise $L_i$ satisfy the Hamiltonian-not-in-Lindblad-span (HNLS) condition, 
\begin{equation}
    G \notin \text{span} \lf L_i, L_i^\dagger, L^\dagger_i L_j  \ri.
\end{equation}
These results were subsequently extended to uncorrelated noise~\cite{zhou_asymptotic_2021} and causal superpositions~\cite{kurdzialek_using_2023}, and were further connected to the notion of transversality and to the approximate Eastin-Knill theorem~\cite{Zhou2021newperspectives, kubica_using_2021}. As a result, precision limits under Markovian noise are now well understood. By contrast, for general non-Markovian noise models, the conditions required to attain HL scaling remain far less clear.

In the literature, non-Markovianity has been defined and quantified in numerous, often inequivalent ways, leading to a lack of a unified framework for studying generic non-Markovian noise~\cite{NM_review,NM_review_2,NM_meas1,NM_meas2}. Nevertheless, non-Markovian metrology has been studied in specific contexts. In Ref.~\cite{ChinHuelgaPlenio} (also studied in Ref.~\cite{macieszczak2015zeno}), Chin \textit{et al.} studied a non-Markovian parallel dephasing model and demonstrated that, by optimizing the interrogation time, the QFI exhibits a  $N^{3/2}$ scaling, where $N$ is the number of probes. This non-standard scaling was attributed to the short time Zeno regime, which was then generalized for \textit{phase-covariant} noise by Smirne \textit{et al.}~\cite{smirne_ultimate_2016}. Altherr and Yang formulated the problem in the framework of quantum combs~\cite{CombsAltherrYoung}. This framework, in principle, provides a fully general description of the non-Markovian setting, however, the computation of such combs becomes intractable for large numbers of controls. Subsequent works~\cite{Efficient_combs_1,Efficient_comb_2} partially address this limitation by showing that, for certain scenarios, the combs can be efficiently simulated using tensor network based methods. More recently, in Ref.~\cite{mann_zhou_laflamme}, Mann \textit{et al.} study the problem using a hidden Markov model, in which a finite-dimensional system and a finite-dimensional environment undergo a joint Lindbladian evolution. In addition to the QEC-based approaches, dynamical decoupling techniques and ideas related to the Zeno effect have also been investigated as a means of mitigating non-Markovian noise in metrology~\cite{sekatski_dynamical_2016,tan_enhancement_2013,lahcen_restoring_2025,zeng2026}. The use of unitary controls to mitigate the effect of specific types of non-Markovian noise in quantum sensing has been studied more recently by Riberi \textit{et al.}~\cite{Riberi_2022,riberi2023,riberi2024,riberi2026}.

In this work, we investigate precision limits for non-Markovian noise models described by a Stinespring dilation. We consider a finite-dimensional system coupled to a generic environment and assume access to the Zeno regime~\cite{misra_sudarshan}. In Theorem \ref{Thm:AQED_zeno}, we first derive sufficient conditions to retrieve HL scaling, in the limit of infinitely fast Zeno controls. For finite-dimensional environments, this reduces to a similar result derived in Ref.~\cite{mann_zhou_laflamme}. We then clarify the role of the recovery operator in our scheme and differentiate between three classes of protocols: (a) ones employing an active recovery, which we refer to as approximate quantum error correction (AQEC), (b) ones without an active recovery, which we refer to as approximate quantum error detection (AQED), and (c) measurement-free dynamical decoupling (DD). These protocols are summarized in Fig~\ref{fig:placeholder}. Pursuant to this, in Theorem \ref{Thm:Recovery_cond} we derive the conditions necessary for constructing an active recovery operator and show that the conditions required in Theorem \ref{Thm:AQED_zeno} are strictly weaker. In Sec.~\ref{sec:finite_meas_freq}, we then rigorously compare the performance of these schemes at finite control frequencies and derive convergence rates for the QFI. In particular, we show that for a time interval of $\delta t \to 0$ between control pulses, the QFI converges as $O(\delta t^2)$ for AQEC and DD protocols, while  only  $O(\delta t)$ convergence is achieved for AQED. Finally, in Sec.~\ref{sec:Num_Sim}, we present numerical simulations that illustrate and support our analytical results.
\begin{figure*}[ht!]
    \centering
    \includegraphics[width=0.48\linewidth]{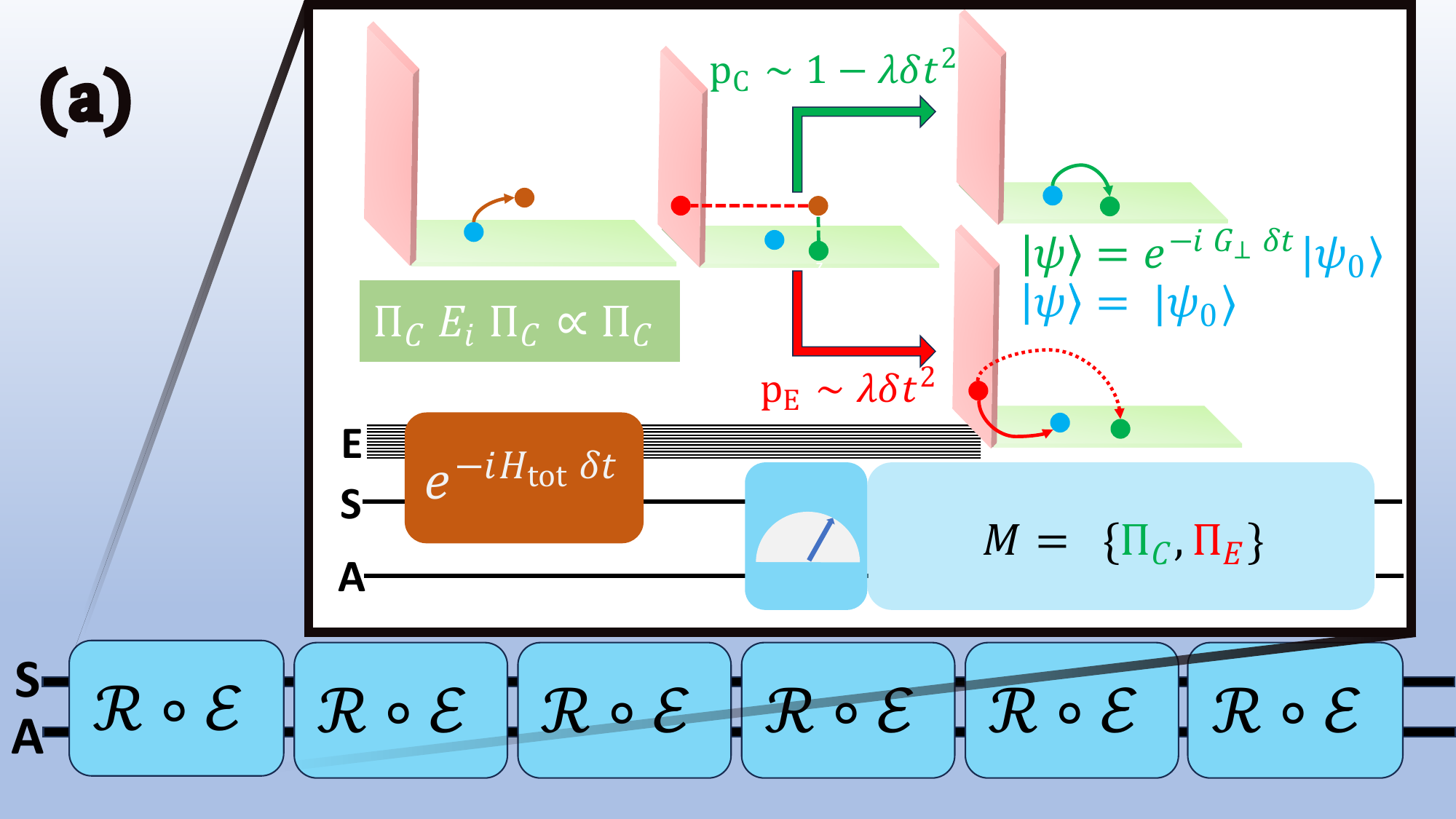}
    \includegraphics[width=0.48\linewidth]{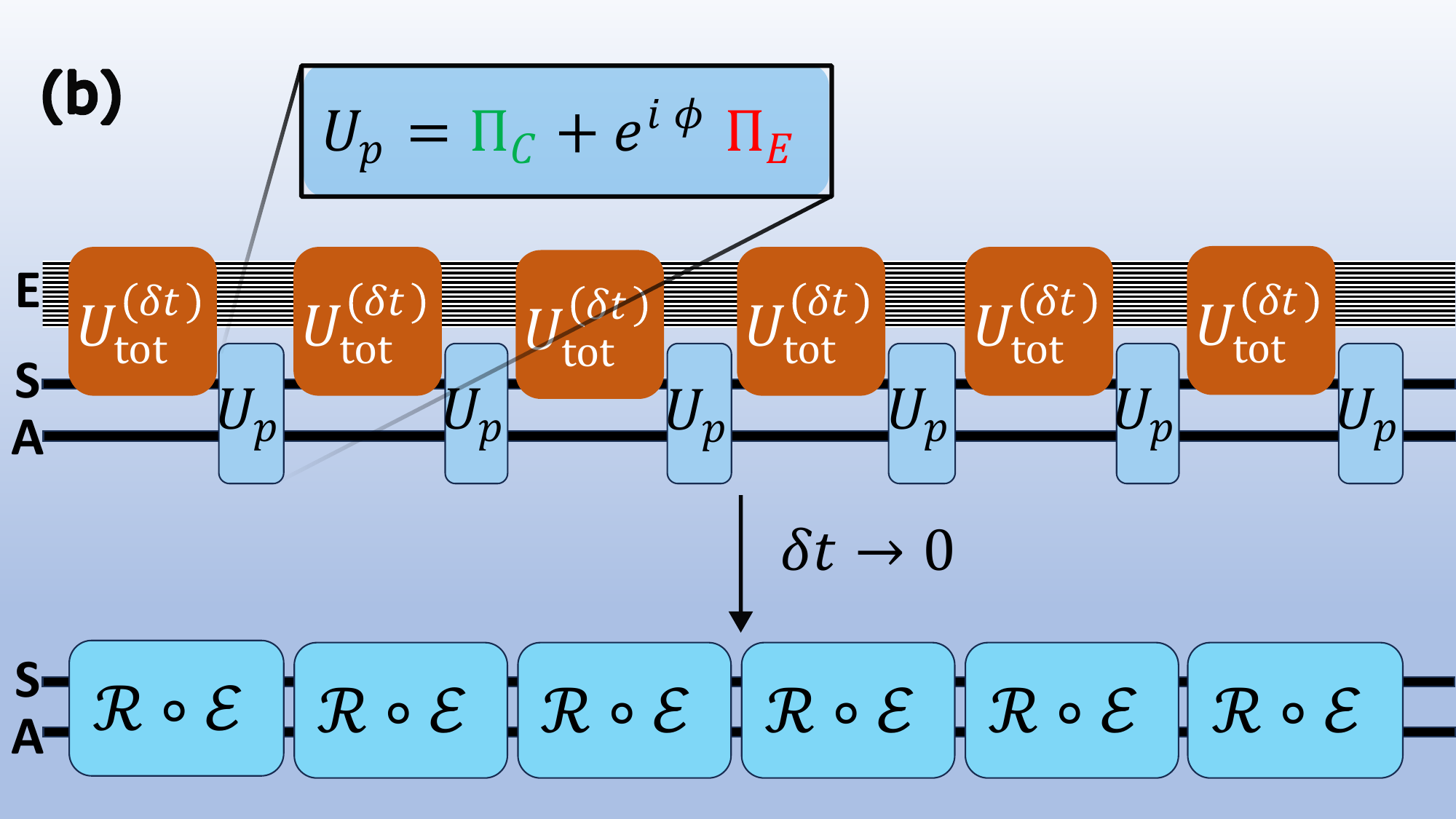}
    \caption{Illustration of the control schemes considered in this work. All lead to HL scaling when the signal generator $G$ is not in the linear span of the error generators $E_{i}$, in the limit of infintely fast pulses. The dynamics of the system and its environment are associated with the Hamiltonian $H_{\textrm{tot}}$ which generates the unitary $U_{\textrm{tot}}(\delta t)$ and leads to the channel $\mathcal{E}$. The second type of dynamics is associated with the controls and leads to the channel $ \mathcal{R}$. (a) Approximate quantum error detection (AQED) and Approximate quantum error correction (AQEC) schemes: After an infinitesimal evolution, a two-outcome projective measurement $M=\{\Pi_{C},\Pi_{E}\}$ is performed. With probability $p_C \sim 1 - \lambda \delta t^2$, the state is projected back into the codespace and undergoes an effective unitary generated by $G_{\perp}$. Otherwise, with probability $p_{E}\sim\lambda\delta t^{2}$, it is projected out of the codespace. In the AQED scheme, we then reset to the initial state $|\psi_{0}\rangle$ (denoted by the solid red arrow). In contrast to the reset based recovery used in the AQED setting, approximate quantum error correction (AQEC) schemes admit a recovery operation which restores the state in the codespace (denoted by dashed red arrow). (b) Dynamical decoupling (DD) scheme: suitably designed control unitaries $U_{P}$ are applied.}
    \label{fig:placeholder}
\end{figure*}

\section{Setup}
\label{sec:setup}
We consider a finite-dimensional quantum system described by the Hilbert space $\mathcal{H}_S$ of dimension $d_S < \infty$, coupled to an environment of arbitrary (possibly infinite) dimension defined on a separable Hilbert space $\mathcal{H}_E$. The joint dynamics are governed by the total Hamiltonian,   
\begin{equation}
     \label{Eq:Setup_Hamiltonian}
     H_{\text{tot}} = \omega_0 G \otimes \mathds{1} + \sum_{k = 1}^{K} g_{k}E_k \otimes B_k + g_{0}\mathds{1} \otimes B_0 + \alpha \mathds{1} \otimes \mathds{1},
\end{equation}
where $G,\{E_k\} \in \mathcal{B}(\mathcal{H}_S)$ are Hermitian operators acting on $\mathcal{H}_S$, and $B_k, B_0$ are (possibly unbounded) self-adjoint  operators on $\mathcal{H}_E$. In particular, the operators $E_k$ are assumed to have norm $\|E_k\| = 1$ and $g_k \in \mathbb{R} $.  The Hamiltonian $H_{\text{tot}}$ is assumed to be self-adjoint in the composite Hilbert space. In the following, we also allow access to an auxiliary $d_S$-dimensional system, henceforth referred to as the noiseless ancilla, which remains decoupled from the environment. Note that the Hamiltonian structure in Eq.~\eqref{Eq:Setup_Hamiltonian} is generic and serves as the starting point for many open quantum system models. Typically, further approximations, based on assumptions regarding timescales of relaxation of the environment and system evolution, lead to effective descriptions such as Lindbladian dynamics in the Markovian regime~\cite{breuer_book}.

Our goal is to estimate $\omega_0$ with a Heisenberg-limited scaling of the QFI. Since such a  scaling can be achieved for a noiseless unitary evolution, our strategy is to suppress the  terms that couple the system to the environment using the quantum Zeno effect. Mathematically, the Zeno effect can be defined as the limiting case of infinitely fast projective measurements. Let $P$ be a single projector corresponding to some measurement $M$, and $U(t) = \exp(-itH)$ denote the unitary evolution generated by a bounded Hamiltonian $H$ on a separable Hilbert space. Dynamics in the Zeno limit are described by
\begin{equation}
   \label{Eq:Zeno_defn}
    \lim_{\delta t\to 0} \left[ P U\lf \delta t\ri P\right]^{\frac{t}{\delta t}} = P\exp(-i t H_Z) P, 
\end{equation}
where $H_z \equiv P H P$ is called the Zeno Hamiltonian~\cite{misra_sudarshan,facchi_pascazio}. For unbounded $H$, the Zeno Hamiltonian must be defined more carefully to ensure that it is self-adjoint~\cite{exner_ichios}. If the initial state is in the subspace defined by the projector $P$, the probability of projecting out of the subspace after $t/\delta t$ measurements of $M$ vanishes as $\delta t \to 0$. To get a more intuitive understanding of the Zeno effect, consider the Taylor expansion of the survival probability of a state $\ket{\psi_0}$ under an infinitesimal evolution $\delta t$,
\begin{equation}
    \left| \braket{\psi_0| U(\delta t) | \psi_0} \right|^2 = 1 - {\text{Var}\lf H \ri } \delta t^2 + O(\delta t^3),
    \label{eq:Practical_Zeno_limit}
\end{equation}
where  $\text{Var}\lf H \ri = \braket{\psi_0| H^2 | \psi_0} - \braket{\psi_0| H | \psi_0}$ and $U$ is the unitary generated by $H$. Thus, in the Zeno limit, the state evolution is of order $\delta t^2$. Hence, if we perform a projective measurement given by projectors $\{\ketbra{\psi_0}{\psi_0}, \mathds{1} - \ketbra{\psi_0}{\psi_0}\}$, at a time of order $\delta t$ we freeze the evolution of the system~\cite{misra_sudarshan,exner_ichios}. The use of Taylor approximation here is justified by the assumption that $\Omega\delta t$ is small where $\Omega$ is a stand in for any timescale associated with the Hamiltonian $H$. In other words, this limit strictly exists only when the Zeno measurement timescale is the fastest. In particular, we are not in a regime where Markovian approximations are valid. We remark that, though this seems to be a very stringent condition, it is regularly assumed in the literature of dynamical decoupling pulses and related control protocols. 

\section{Zeno limit scheme for retrieving HL scaling}
\label{sec:limits}
In this section, we state our first main result regarding achieving a Heisenberg-limited scaling in the estimation of the parameter $\omega_0$ using controls and measurements only on the system and ancilla. Before stating the theorem, we will first state a Lemma, using which we will construct error correcting codes for metrology. This construction is given in Refs.~\cite{zhou_zhang_etal,mann_zhou_laflamme}. However, since this construction is essential in the proof, for the sake of completeness, we describe it in Lemma~\ref{lem:Error_detection}. Let $G$ be a Hermitian operator, and $\{E_k| k = 1,\dots, K\}$ be an orthonormal set of traceless Hermitian operators such that 
\begin{equation}
\label{eq:hnls}
    G \notin \text{span}_{\mathbb{H}}\{E_k, \mathds{1}\}, 
\end{equation}
where $\text{span}_{\mathbb{H}}(S)$ refers to the set of all Hermitian operators that are spanned by elements of the set $S$. Define the operator $G_\perp$ as 
    \begin{equation}
    \label{Eq:G_perp}
        G_\perp = G - \sum_k \operatorname{Tr} \lf G E_k\ri E_k - \operatorname{Tr}(G) \frac{\mathds{1}}{d_S}.
    \end{equation}
 Since $G \notin \text{span}_{\mathbb{H}} \{E_k, \mathds{1}\}$, we see that $G_\perp$ is a traceless nontrivial Hermitian operator. Hence, using the spectral theorem, we can write 
\begin{equation}
    \label{Eq:G_perp_spec_decomp}
    G_\perp = \rho_+ - \rho_-, 
\end{equation}
for positive operators $\rho_+, \rho_-$ with equal trace.
\begin{lemma}[\cite{zhou_zhang_etal,mann_zhou_laflamme}]
\label{lem:Error_detection}
  Let $G$ satisfy Eq.~\eqref{eq:hnls}, $G_\perp$ be defined as in Eq.~\eqref{Eq:G_perp}, and $\rho_{+}$ and $\rho_{-}$ be positive operators that satisfy Eq.~\eqref{Eq:G_perp_spec_decomp}. Define the states $\ket{\psi_+}, \ket{\psi_-}  \in \mathcal{H}_S \otimes \mathcal{H}_S$ to be purifications of $\rho_+$ and $\rho_-$. Now, define the projection operator 
    \begin{equation}
        \Pi_C = \ketbra{\psi_+}{\psi_+} + \ketbra{\psi_-}{\psi_-}.
    \end{equation}
    The projection operator $\Pi_C$  satisfies 
    \begin{equation}
        \Pi_C G \Pi_C \not\propto \Pi_C, \quad \Pi_C E_k \Pi_C = \lambda_k \Pi_C ~ \forall k, \lambda_k \in \mathds{R}. 
    \end{equation}
\end{lemma}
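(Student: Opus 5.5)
The plan is to reduce every codespace compression to a partial-trace computation. The key facts are that $\ket{\psi_\pm}$ have orthogonal supports on the system factor, and that for any operator $A \in \mathcal{B}(\mathcal{H}_S)$ acting on the probe (first) factor, $\braket{\psi_a| A\otimes \mathds{1} |\psi_b}$ is computable from the reduced operators. First I will check that $\Pi_C$ is a genuine rank-two projector. Since $\rho_+$ and $\rho_-$ come from the spectral decomposition of $G_\perp$, they have orthogonal supports: $\rho_+\rho_- = 0$. I will normalize them so that $\operatorname{Tr}\rho_\pm = 1$; their traces are equal because $G_\perp$ is traceless, and the statement only cares about proportionality. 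Then the purifications $\ket{\psi_\pm} = (\sqrt{\rho_\pm}\otimes \mathds{1})\ket{\Phi}$, with $\ket{\Phi}$ the unnormalized maximally entangled vector, are unit vectors. They are orthogonal because $\braket{\psi_+|\psi_-} = \operatorname{Tr}(\sqrt{\rho_+}\sqrt{\rho_-}) = 0$.

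Next I will compute the $2\times 2$ matrix of $A\otimes\mathds{1}$ compressed to the codespace. The diagonal entries are $\braket{\psi_\pm|A\otimes \mathds{1}|\psi_\pm} = \operatorname{Tr}(A\rho_\pm)$. The off-diagonal entries are
\begin{equation}
\braket{\psi_+|A\otimes\mathds{1}|\psi_-} = \operatorname{Tr}(\sqrt{\rho_+}\,A\,\sqrt{\rho_-}).
\end{equation}
These vanish only under extra assumptions. This is the step I expect to be the main obstacle: for general $A$ the off-diagonal terms need not be zero. I would handle it by purifying $\rho_+$ and $\rho_-$ onto orthogonal subspaces of the ancilla factor as well. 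Concretely, I would choose the ancilla parts of $\ket{\psi_+}$ and $\ket{\psi_-}$ to be supported on $\operatorname{supp}\rho_+$ and $\operatorname{supp}\rho_-$ respectively, for instance using $\ket{\Phi}$ restricted appropriately, or an ancilla isometry that separates the two supports. Then $A\otimes \mathds{1}$ cannot connect $\ket{\psi_+}$ to $\ket{\psi_-}$, because the ancilla overlap vanishes. So the off-diagonal terms are zero for every $A$, and $\Pi_C (A\otimes\mathds{1})\Pi_C = \operatorname{Tr}(A\rho_+)\ketbra{\psi_+}{\psi_+} + \operatorname{Tr}(A\rho_-)\ketbra{\psi_-}{\psi_-}$. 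Since the lemma says only that $\ket{\psi_\pm}$ are "purifications", I will make this specific choice explicit, which is legitimate because we are free to choose them.

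With this formula, the two claims reduce to comparing $\operatorname{Tr}(A\rho_+)$ with $\operatorname{Tr}(A\rho_-)$, i.e.\ to the sign of $\operatorname{Tr}(A G_\perp)$, which equals their difference up to the normalization. For $A = E_k$, orthonormality of the $E_k$ and tracelessness give $\operatorname{Tr}(E_k G_\perp) = \operatorname{Tr}(E_k G) - \sum_j \operatorname{Tr}(GE_j)\operatorname{Tr}(E_jE_k) - 0 = 0$. Hence $\operatorname{Tr}(E_k\rho_+) = \operatorname{Tr}(E_k\rho_-) =: \lambda_k$, which is real since $E_k$ is Hermitian, and $\Pi_C E_k\Pi_C = \lambda_k \Pi_C$. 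For $A = G$, the same orthogonality gives $\operatorname{Tr}(G G_\perp) = \operatorname{Tr}(G_\perp^2) > 0$, because $G_\perp \neq 0$ by Eq.~\eqref{eq:hnls}. So the two diagonal entries differ and $\Pi_C G\Pi_C \not\propto \Pi_C$.
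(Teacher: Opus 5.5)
Your proof is correct and follows the paper's route. The paper also takes purifications $\sum_j\sqrt{\lambda_j}\ket{\phi_j}\ket{\phi_j}$ whose ancilla parts lie on the orthogonal supports of $\rho_\pm$, kills the cross terms via $\operatorname{Tr}_A(\ketbra{\psi_+}{\psi_-})=0$, and uses $\operatorname{Tr}(E_kG_\perp)=0$ to equalize the diagonal entries; your explicit normalization of $\rho_\pm$ and the identity $\operatorname{Tr}(GG_\perp)=\operatorname{Tr}(G_\perp^2)>0$ make the projector property and the non-proportionality more explicit than the paper does.

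One correction: the obstacle you flag does not arise for the canonical purification $(\sqrt{\rho_\pm}\otimes\mathds{1})\ket{\Phi}$. By cyclicity, $\operatorname{Tr}(\sqrt{\rho_+}A\sqrt{\rho_-})=\operatorname{Tr}(A\sqrt{\rho_-}\sqrt{\rho_+})=0$; equivalently, its ancilla marginals $\rho_\pm^T$ already have orthogonal supports. A choice of this kind is genuinely needed only for arbitrary purifications, for instance ones that reuse the same ancilla basis for both states.
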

We provide the proof of the above lemma in App.~\ref{ap:Proof_theorem_1}. With the lemma stated, we state our first main result. 
\begin{thm}
\label{Thm:AQED_zeno}
Let $\mathcal{{H}}_S$ be a finite-dimensional Hilbert space for the system and  $\mathcal{H}_E$ a separable Hilbert space for the environment. Consider the joint evolution on $\mathcal{{H}}_S \otimes \mathcal{{H}}_E$ generated by the Hamiltonian in Eq.~\eqref{Eq:Setup_Hamiltonian}.  
 Assume:
 \begin{enumerate}
     \item  The Hamiltonian $H_{\text{tot}}$ is non-negative,
     \item  One has access to noiseless ancillas and the ability to perform arbitrary unitary controls and measurements on the system and ancilla.
 \end{enumerate}
Under these assumptions, we can achieve a Heisenberg-limited scaling of the quantum Fisher information of the parameter $\omega_0$, i.e.,
 \begin{equation}
    \lim_{\delta t \to 0} \mathcal{F}(t) = \Theta(t^2) ,
 \end{equation}
provided that the signal generator $G$ satisfies the not-in-span condition 
   \begin{equation}
       G \notin \text{span}_{\mathbb{H}}\lf E_k , \mathds{1}\ri ~ \mathrm{for~all} ~~ k \in \{1, \dots, K\},
       \label{eq:span_cond}
   \end{equation}
   and that the symmetric operator $P_0 H_{tot} P_0$ is essentially self adjoint. Here $P_0 = \Pi_C \otimes 1_E$ is a projector, where $\Pi_C$ is constructed as per Lemma \ref{lem:Error_detection}.
\end{thm}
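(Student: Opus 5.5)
The plan is to run the protocol of Fig.~\ref{fig:placeholder}(a): repeatedly measure $\{\Pi_C,\mathds{1}-\Pi_C\}$ (tensored with the identity on $\mathcal{H}_E$) at intervals $\delta t$, and invoke a Zeno limit theorem for unbounded Hamiltonians to show that the reduced system--ancilla dynamics converges to a unitary generated, within the two-dimensional codespace, by an operator proportional to $\omega_0 \Pi_C G_\perp \Pi_C$.

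First I would apply Lemma~\ref{lem:Error_detection} to the orthonormalised $E_k$ to obtain $\Pi_C$. The $E_k$ in the Hamiltonian can be replaced by an orthonormal traceless basis of their Hermitian span without changing \eqref{eq:span_cond}: any identity components are absorbed into $B_0$, and the $B_k$ are recombined linearly. Acting on the system half of the purifications, this gives $\Pi_C E_k \Pi_C=\lambda_k\Pi_C$ and $\Pi_C G\Pi_C\not\propto\Pi_C$. The second fact follows because $\langle\psi_\pm|G|\psi_\pm\rangle$ differ by $\operatorname{Tr}(G\rho_+)-\operatorname{Tr}(G\rho_-)=\operatorname{Tr}(G G_\perp)=\operatorname{Tr}(G_\perp^2)>0$; I would also check that the off-diagonal elements vanish, e.g.\ by choosing the purifications with orthogonal ancilla supports. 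Setting $P_0=\Pi_C\otimes\mathds{1}_E$, the Zeno Hamiltonian becomes
\begin{equation}
  H_Z=\overline{P_0H_{\rm tot}P_0}=\omega_0\,\Pi_C G\Pi_C\otimes\mathds{1}+\Pi_C\otimes\Big(\sum_k g_k\lambda_k B_k+g_0B_0+\alpha\Big).
\end{equation}
The closure is self-adjoint by the essential self-adjointness hypothesis. Crucially, $H_Z$ is a sum of two commuting pieces: a system part and a part $\Pi_C\otimes B_{\rm eff}$ that acts trivially on the code.

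Second, I would invoke the Zeno theorem for semibounded Hamiltonians (Exner--Ichinose type, as cited in the paper). Non-negativity of $H_{\rm tot}$ is exactly the hypothesis under which
\begin{equation}
  \big[P_0e^{-i\delta tH_{\rm tot}}P_0\big]^{t/\delta t}\to P_0e^{-itH_Z}P_0
\end{equation}
strongly, provided $P_0H_{\rm tot}P_0$ is essentially self-adjoint. Because $H_Z$ factorises, for the input $\ket{\psi}=(\ket{\psi_+}+\ket{\psi_-})/\sqrt2\otimes\ket{\phi_E}$ with $\ket{\phi_E}$ arbitrary, the output is $e^{-it\omega_0 \Pi_C G\Pi_C}\ket{\psi}_{SA}\otimes e^{-itB_{\rm eff}}\ket{\phi_E}$. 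This is a product state, and the environment factor is independent of $\omega_0$. The reduced state is therefore pure with QFI $4t^2\mathrm{Var}(\Pi_CG\Pi_C)=t^2\big(\operatorname{Tr}G_\perp^2\big)^2\big/\big(\operatorname{Tr}\rho_+\big)^2=\Theta(t^2)$.

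Third, I need to pass the limit through the QFI, and this is the step I expect to be the main obstacle. Strong convergence of the projected evolution gives trace-norm convergence of the conditional reduced state, and the total failure probability $1-\|[P_0U P_0]^{n}\psi\|^2\to0$. I would account for failed branches by the reset recovery, or simply discard them; these branches have vanishing weight. The difficulty is that the QFI is only lower semicontinuous, and the conditional QFI involves derivatives in $\omega_0$. The easiest route is to avoid QFI continuity altogether: I would exhibit a fixed measurement, namely the one in the basis $(\ket{\psi_+}\pm e^{i\varphi}\ket{\psi_-})$, whose classical Fisher information converges. This requires convergence of $\partial_{\omega_0}$ of the outcome probabilities. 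I would obtain it from the Duhamel formula $\partial_{\omega_0}U=-i\int_0^{\delta t}U(s)(G\otimes\mathds{1})U(\delta t-s)\,ds$ with $G$ bounded, together with dominated convergence. Alternatively, lower semicontinuity suffices, since it yields $\liminf_{\delta t\to0}\mathcal{F}\ge$ the limiting QFI. Combined with the generic Heisenberg upper bound $\mathcal{F}\le 4t^2\|G\|^2$ for bounded $G$, this gives $\Theta(t^2)$.
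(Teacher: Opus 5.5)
Your proposal is essentially the paper's proof: the Lemma~\ref{lem:Error_detection} codespace, the Exner--Ichinose Zeno limit for non-negative $H_{\text{tot}}$ with $H_Z=\overline{P_0 H_{\text{tot}} P_0}$ singled out by essential self-adjointness, the splitting of $H_Z$ into a bounded code part plus $\Pi_C\otimes\overline{B_{\mathrm{eff}}}$, and Stone's theorem giving a product state whose system--ancilla factor evolves unitarily with QFI $\Theta(t^2)$; your third step goes beyond the paper, which simply asserts that $\lim_{\delta t\to 0}$ can be passed through the Fisher information. Two cautions on that step: the lower-semicontinuity shortcut is not really an independent alternative, since it too needs $\partial_{\omega_0}\rho$ to converge (this follows most cleanly from convergence at the final time for all $\omega_0$ near the true value together with the $n$-independent bound $\|\partial_{\omega_0}^2 (P_0 U(t/n) P_0)^n\| \le (t\|G\|)^2$, which avoids needing convergence at intermediate times), and, a point the paper glosses over in the same way, the cited Exner--Ichinose product formula is established in $L^2_{\mathrm{loc}}$ with respect to $t$, so for unbounded $H_{\text{tot}}$ strong convergence at a fixed $t$ is an additional input rather than a consequence of that theorem.
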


The not-in-span condition in Eq.~\eqref{eq:span_cond} generally holds for any operator decomposition of the coupling Hamiltonian, but it is strongest when $B_k$ are linearly independent. The non-negativity condition on $H_{tot}$ can be weakened: it suffices that its eigenvalues are bounded from below. Mathematically, this is necessary since the Zeno limit is ill-defined for Hamiltonians whose spectrum is unbounded below. Physically, we expect all Hamiltonians to satisfy this condition, as otherwise various unphysical phenomena can occur. For example, the standard partition function would diverge. 

The proof of this theorem is quite involved and is presented in the App.~\ref{ap:Proof_theorem_1}. However, we give a high-level summary here. First, we construct a codespace for which the error generators $E_k$ act trivially on the logical degrees of freedom while the signal $G$ acts non-trivially on these degrees of freedom, using Lemma \ref{lem:Error_detection}. Note that in this codespace, the effective evolution is generated by a decoupled Hamiltonian of the form
    \begin{equation}
            H_{\text{eff}} = \omega G_{\textrm{eff}} \otimes \mathds{1} + \mathds{1} \otimes \tilde{B},
    \end{equation}   
where $G_{\textrm{eff}} = \Pi_C \mathds{1}_A \otimes  G_{\perp} \Pi_C$. Thus, as long as we remain in this codespace, the evolution is effectively decoupled. We achieve this by constantly measuring and projecting the state into the codespace, utilizing the arguments of the Zeno effect \cite{misra_sudarshan,facchi_pascazio}. This ensures the effective evolution remains unitary.  

The above theorem reduces to the special case of unitary evolution in Ref.~\cite{mann_zhou_laflamme} if the environment is finite-dimensional. In Theorem 4 of that reference, the authors show that for such systems the QFI scales as $t^2$ times a periodic envelope, even if the span conditions are not met. This happens because finite-dimensional environments exhibit Poincar\'e recurrences. There is a sequence of times $\{t_n\}$ where the evolution appears purely unitary. This occurs for any Hamiltonian with only a discrete point spectrum \cite{Rec_Thm_1,Rec_Thm_2}. By contrast, infinite-dimensional environments with a continuous spectrum lack recurrences, so the span conditions discussed above may be necessary, in addition to sufficient, to achieve $t^2$ scaling.  We conjecture that the span conditions are also necessary for such cases, but leave this as an open problem.

Finally, we highlight a technical subtlety. Theorem \ref{Thm:AQED_zeno} is formally applicable only when the joint Hilbert space is separable, i.e, it admits a countable basis. In the standard open quantum formalism, however, a continuous mode limit is adopted, wherein the system is coupled to a continuum of bosonic fields \cite{breuer_book}. The Hilbert space for such a space need not be separable. Nevertheless, the above theorem is still suggestive of the physics in continuous mode settings if we interpret the continuous mode limit as the thermodynamic limit (or limit of infinite volume) of a finite number of bosonic fields, for which the Hilbert space is separable.

\subsection{Error detection vs. correction}

At this juncture, we wish to emphasize the difference between the conventional error correction schemes for sensing (such as those employed to mitigate Markovian noise in Ref.~\cite{zhou_zhang_etal}) and the scheme proposed in Theorem~\ref{Thm:AQED_zeno}. In settings where conventional error correction schemes are applicable, the infinitesimal noise model is typically taken to be of the form 
\begin{equation}
    \mathcal{E}(\rho) = (1 - p) \rho + p\tilde{\mathcal{E}}(\rho), 
\end{equation}
where the channel $\tilde{\mathcal{E}}$ has Kraus operators $\tilde{K}_i$ for $i \in \{1, \dots, n\}$. The channel $\mathcal{E}$ can be interpreted as applying the noisy channel $\tilde{\mathcal{E}}$ with a probability $p$ with the Kraus operators $K_0 = \sqrt{1 - p}\mathds{1}$ and  $K_i = \sqrt{p}\tilde{K}_i$ for $i \in \{1,\dots,n\}$. For example, in the case of unital noise on single qubit described by a Lindbladian master equation, one has $p = \gamma \delta t$, where, as before, $\delta t$ is the time interval between two successive control operations and $\gamma$ is the effective decay rate and the Kraus operators $\tilde{K}_i = L_i + O(\delta t)$ are the Lindblad operators. In this framework, one constructs a codespace that satisfies the Knill-Laflamme conditions for the channel $\mathcal{E}$, resulting in the conditions 
\begin{align}
\label{Eq:KL_cond_gen_channel_a}
  \Pi_C \tilde{K}_i\Pi_C & = \lambda_i \Pi_C, \\
  \label{Eq:KL_cond_gen_channel_b}
  \Pi_C \tilde{K}_i^\dagger \tilde{K}_j \Pi_C &= \alpha_{ij} \Pi_C.
\end{align}
For any state $\rho$ in the codespace, $\Pi_C \rho \Pi_C = \rho$ and the conditions in Eqs.~\eqref{Eq:KL_cond_gen_channel_a} and \eqref{Eq:KL_cond_gen_channel_b} imply
\begin{equation}
    \label{Eq:Channel_in_codespace}
    \Pi_C \mathcal{E}(\rho) \Pi_C = (1 - p \Lambda) \rho,
\end{equation}
where $\Lambda = 1 - \sum_i |\lambda_i|^2$ and $(1 - p\Lambda) \leq 1$. Moreover, the effective channel after projection outside the codespace, 
\begin{equation}
    \tilde{\mathcal{C}}(\sigma) := \Pi_{\tilde{C}} \mathcal{E}(\Pi_C \sigma \Pi_C)\Pi_{\tilde{C}},
\end{equation}
where $\Pi_{{\tilde{C}}}=\mathds{1}-\Pi_{C}$, also satisfies the Knill-Laflamme conditions, which implies the existence of a recovery operator $\mathcal{R}_{\tilde{\mathcal{C}}}$ such that $\mathcal{R}_{\tilde{\mathcal{C}}} \circ \tilde{\mathcal{C}} (\rho) \propto \rho$. Thus, the recovery operator for the channel $\mathcal{E}$ is given as, 
\begin{equation}
   \label{Eq:Recovery_form}
    \mathcal{R}(\sigma) = \Pi_C \sigma \Pi_C + \mathcal{R}_{\tilde{\mathcal{C}}} (\Pi_{\tilde{C}} \sigma \Pi_{\tilde{C}}).
\end{equation}
The structure of Eq.~\eqref{Eq:Recovery_form} highlights two features: (a) when the measurement projects the state to the codespace the pre-error state $\sigma$ is obtained and (b) when the state is projected outside the codespace, there exists a recovery operation that restores it. Thus, the conventional error correction scheme proceeds in two steps: (a) detect whether the state has left the codespace, and (b) apply the correction if necessary. Heisenberg-limited scaling is possible in this setting, under the assumption of full and fast controls, if the signal generator acts non-trivially in the codespace \cite{Zhang_Liang_HNKS}. 

By contrast, Theorem~\ref{Thm:AQED_zeno} utilizes the Zeno effect and in the limiting case of $\delta t \to 0$ continuously projects the system back into the codespace. This ensures that the evolution remains noiseless at all times. Thus, in this case, the recovery operator $\mathcal{R}_{\tilde{\mathcal{C}}}$ is unnecessary: frequent error detection alone suffices, provided it is performed rapidly enough. Formally, the Zeno scheme implements an error detection scheme as opposed to an error correction scheme. This can be seen by the fact that the codespace defined in Lemma \ref{lem:Error_detection} is guaranteed to satisfy Eq.~\eqref{Eq:KL_cond_gen_channel_a} but may or may not satisfy Eq.~\eqref{Eq:KL_cond_gen_channel_b}.   This distinction has two notable consequences. First, the conditions required for achieving Heisenberg-limited scaling under the Zeno scheme are strictly weaker than those of the standard error correction framework. Consequently, the Zeno scheme achieves Heisenberg-limited scaling for a strictly larger set of error channels. Second, as we will show, in realistic scenarios where $\delta t$ is finite, the ability to correct errors becomes critical: the Zeno-based error detection scheme is less robust to finite detection intervals compared to its error correction counterpart. 

For a finite $\delta t$, there are two sources of errors in the Zeno scheme that we have to contend with. First, there is an effective parallel dephasing on the codespace which can occur for both AQEC and AQED schemes. This is inherently uncorrectable because the error generator is parallel to the signal generator. The second source of error is due to the state being projected out of the codespace upon performing the projective measurement $\{\Pi_C, \Pi_{\tilde{C}}\}$. Unlike the parallel dephasing, this type of error can be corrected if there exists a recovery operator $\mathcal{R}_{\tilde{\mathcal{C}}}$ of the type that appears in Eq.~\eqref{Eq:Recovery_form}. In Theorem \ref{Thm:Recovery_cond} we show that the not-in-span condition in Theorem \ref{Thm:AQED_zeno} is not by itself sufficient to ensure that such a recovery exists. We further identify the additional constraints that the error generators must satisfy to ensure recoverability.

\begin{thm}
\label{Thm:Recovery_cond}
    Consider the setup and joint Hamiltonian as in Eq.~\eqref{Eq:Setup_Hamiltonian}. Define the channel $\mathcal{E}$ as
    \begin{equation}
        \mathcal{E}_{\delta t}(\rho) := \operatorname{Tr}_E \lf \mathds{1}_A \otimes U_{\text{tot}}(\delta t) \rho \otimes \rho_B \mathds{1}_A \otimes U^\dagger_{\text{tot}}(\delta t)\ri,
    \end{equation}
    where $U_{\text{tot}}(\delta t) = \exp(-i \delta t H_{\text{tot}})$ is the unitary generated by $H_{\text{tot}}$. Suppose there exists a positive constant $C >0$, such that the $n$-point bath correlation functions satisfy 
    \begin{equation}
        \left|\operatorname{Tr}_E\lf B_{i_1} \dots B_{i_n} \rho_B \ri \right| \leq C^n \quad \forall n \in \mathbb{N}.
    \end{equation}
    Then the code constructed as per Lemma \ref{lem:Error_detection} for recovering Heisenberg-limited scaling admits a recovery operation $\mathcal{R}_{\tilde{\mathcal{C}}}$ such that
    \begin{equation}
        \mathcal{R}_\mathcal{\tilde{C}} \lf \Pi_{\tilde{C}} \mathcal{E}_{\delta t}(\rho) \Pi_{\tilde{C}} \ri \propto \rho + O(\delta t^3), \qquad \text{for} ~ \rho = \Pi_C \rho \Pi_C, 
    \end{equation}
    only if 
    \begin{align}
         \label{eq:Addl_cond_1}
         \Pi_C &\mathds{1}_A \otimes E_j E_k \Pi_C \propto \Pi_C, \\
         \label{eq:Addl_cond_2}
         \Pi_C &\mathds{1}_A \otimes E_{j}  \Pi_{\tilde{C}}  \mathds{1}_A \otimes G_\perp \Pi_C \propto \Pi_C, \\
         \label{eq:Addl_cond_3}
         \Pi_C &\mathds{1}_A \otimes G_\perp  \Pi_{\tilde{C}}  \mathds{1}_A \otimes G_\perp \Pi_C \propto \Pi_C,
    \end{align}
    with the first two proportionalities holding for all $j$ and $k$.
\end{thm}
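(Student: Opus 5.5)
Our plan is to expand the channel $\mathcal{E}_{\delta t}$ to second order in $\delta t$, identify the effective Kraus operators that act on codespace states and carry them out of the codespace, and then apply the Knill--Laflamme conditions, Eqs.~\eqref{Eq:KL_cond_gen_channel_a}--\eqref{Eq:KL_cond_gen_channel_b}, to the projected channel $\tilde{\mathcal{C}}$. The correlation bound $|\operatorname{Tr}_E(B_{i_1}\cdots B_{i_n}\rho_B)|\le C^n$ is what makes the Dyson/Taylor series of $\operatorname{Tr}_E$ of the evolved state converge. It also guarantees that the remainder is $O(\delta t^3)$ in trace norm, uniformly in the (possibly unbounded) $B_k$. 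So the first step is to write
\begin{equation}
U_{\text{tot}}(\delta t)=\mathds{1}-i\delta t H_{\text{tot}}-\tfrac{\delta t^2}{2}H_{\text{tot}}^2+\dots,
\end{equation}
take the partial trace term by term, and bound the tail using $C^n\delta t^n/n!$ together with $\|E_k\|=1$ and $\|G\|<\infty$.

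Next we would decompose the system operators relative to the code. Write $G=G_\perp+\sum_k \operatorname{Tr}(GE_k)E_k+\operatorname{Tr}(G)\mathds{1}/d_S$ and use Lemma~\ref{lem:Error_detection} ($\Pi_C E_k\Pi_C=\lambda_k\Pi_C$). The only first-order terms that leave the code are then $\Pi_{\tilde C}(\mathds{1}_A\otimes E_k)\Pi_C$, weighted by $g_k B_k$, and $\Pi_{\tilde C}(\mathds{1}_A\otimes G_\perp)\Pi_C$, weighted by $\omega_0$. Together these define first-order error operators $F_0:=\Pi_{\tilde C}(\mathds{1}_A\otimes G_\perp)\Pi_C$ and $F_k:=\Pi_{\tilde C}(\mathds{1}_A\otimes E_k)\Pi_C$. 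For $\rho=\Pi_C\rho\Pi_C$, the leading contribution to $\Pi_{\tilde C}\mathcal{E}_{\delta t}(\rho)\Pi_{\tilde C}$ is $\delta t^2\sum_{a,b}c_{ab}F_a\rho F_b^\dagger$. Here the coefficients $c_{ab}$ are built from $\omega_0$, the $g_k$, and the two-point functions $\operatorname{Tr}_E(B_aB_b\rho_B)$ (with $B_0\to\mathds{1}$ for the signal term). All zeroth-order contributions are killed by $\Pi_{\tilde C}\Pi_C=0$. We would also check that second-order terms of the form $\Pi_{\tilde C}H^2\Pi_C\rho\,\Pi_C\Pi_{\tilde C}$ vanish, again because $\Pi_{\tilde C}\Pi_C=0$. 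Up to $O(\delta t^3)$, $\tilde{\mathcal{C}}$ is therefore a CP map with Kraus operators $\delta t\sum_a \mu_{ia}F_a$, obtained by diagonalizing the positive matrix $c$.

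For the necessity direction, suppose $\mathcal{R}_{\tilde{\mathcal C}}$ recovers $\rho$ up to proportionality and $O(\delta t^3)$. Divide by $\delta t^2$ and apply the standard Knill--Laflamme necessity argument to the $\delta t$-independent leading map. This gives $\Pi_C A_i^\dagger A_j\Pi_C\propto\Pi_C$ for its Kraus operators $A_i$, and hence $\Pi_C F_a^\dagger F_b\Pi_C\propto\Pi_C$ for all $a,b$ in the support of $c$. Since $F_a^\dagger F_b=\Pi_C X_a\Pi_{\tilde C}X_b\Pi_C$ with $X_0=G_\perp$ and $X_k=E_k$, these three families are exactly Eqs.~\eqref{eq:Addl_cond_2} and \eqref{eq:Addl_cond_3}, plus $\Pi_C E_j\Pi_{\tilde C}E_k\Pi_C\propto\Pi_C$. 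The last one converts to Eq.~\eqref{eq:Addl_cond_1} by inserting $\Pi_{\tilde C}=\mathds{1}-\Pi_C$ and using $\Pi_C E_j\Pi_C E_k\Pi_C=\lambda_j\lambda_k\Pi_C$.

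The main obstacle is the step from Knill--Laflamme for the Kraus operators $A_i$ to the condition for each individual pair $F_a,F_b$. This requires the correlation matrix $c_{ab}$ to be nondegenerate, i.e.\ $B_k$ linearly independent in the sense that $\operatorname{Tr}_E(B_aB_b\rho_B)$ is full rank, and $\omega_0$ and the $g_k$ nonzero. It also requires us to track carefully the cross terms between the signal and the noise, since $G_\perp$ enters with a trivial environment factor. We would first try to argue generically: unitary mixing of Kraus operators does not change the span condition, so the Knill--Laflamme condition for $\{A_i\}$ is equivalent to $\Pi_C F_a^\dagger F_b\Pi_C\propto\Pi_C$ on the span of the $F_a$ that actually appear. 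If $c$ is degenerate, we would state the conclusion for the nondegenerate combinations.
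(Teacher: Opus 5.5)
Your proposal follows the paper's proof essentially step for step: the bounded-correlation bound justifies the second-order Dyson expansion, only the $H\rho H$ term survives in $\Pi_{\tilde C}\mathcal{E}_{\delta t}(\rho)\Pi_{\tilde C}$, its Kraus operators are built from $\Pi_{\tilde C}(\mathds{1}_A\otimes E_k)\Pi_C$ and $\Pi_{\tilde C}(\mathds{1}_A\otimes G_\perp)\Pi_C$, and Knill--Laflamme together with $\Pi_{\tilde C}=\mathds{1}-\Pi_C$ gives the three conditions. The differences are bookkeeping: the paper centers $B_k\to B_k-\langle B_k\rangle$, diagonalizes only the covariance matrix rather than your uncentered Gram matrix, and collects the means and $\omega_0G_\parallel$ into a single mean-field Kraus operator $\Pi_{\tilde C}\mathds{1}_A\otimes(Z_{\text{eff}}+\omega_0G_\perp)\Pi_C$ (you dropped the contribution $\omega_0\operatorname{Tr}(GE_k)\mathds{1}$ of $\omega_0G_\parallel$ to the bath factor of $F_k$, which is harmless). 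The nondegeneracy you flag is genuinely needed for the ``for all $j,k$'' conclusion and is used tacitly by the paper too, which treats $N_0,\dots,N_K$ as unitary recombinations of all the $\Pi_{\tilde C}\mathds{1}_A\otimes(E_k-\xi_k)\Pi_C$ and so needs every $\langle D_i^\dagger D_i\rangle\neq0$, the centered counterpart of your full-rank Gram condition.
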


The proof of this theorem is given in  App.~\ref{ap:Proof_theorem_2}. Theorem~\ref{Thm:Recovery_cond} tells us that an active recovery operator $\mathcal{R}_{\tilde{\mathcal{C}}}$ codespace can be constructed only if $\Pi_{C}E_{j}E_{k}\Pi_{C}\propto\Pi_{C}$ while we get non-trivial dynamics only if $\Pi_{C}G\Pi_{C}\not\propto\Pi_{c}$. Therefore, we require
\begin{equation}
    \label{Eq:quadratic_span}
    G \notin \text{span} \lf \mathds{1}, E_i, E_j^\dagger, E_i^\dagger E_j \ri.
\end{equation}
The condition in Eq.~\eqref{Eq:quadratic_span} is similar in essence to the HNLS condition derived in Ref.~\cite{zhou_zhang_etal} for Markovian noise. Crucially, this is a stronger condition than the one demanded in Eq.~\eqref{eq:span_cond} and hence there are noise models wherein a Heisenberg-limited scaling can be achieved given access to Zeno-limited controls but for which it is impossible to construct an active recovery operator. A paradigmatic example is sensing a rotation along $\sigma_z$ under amplitude damping noise. The noise is generated by operators that are coupled to the system through $\sigma_x$ and $\sigma_y$, thus satisfying the condition in  Eq.~\eqref{eq:span_cond} but not the condition in Eq.~\eqref{Eq:quadratic_span}. We stress the codes for models that only satisfy Eq.~\eqref{eq:span_cond} and the ones that meet the additional condition Eq.~\eqref{Eq:quadratic_span} are constructed in the same manner, as in Lemma~\ref{lem:Error_detection}. The difference is that while for the former case the code is only error detecting, for the latter it becomes error correcting. We distinguish the two by calling the codes approximate quantum error detecting (AQED) if they do not satisfy Eq.~\eqref{Eq:quadratic_span}, and approximate quantum error correcting (AQEC) if they do satisfy Eq.~\eqref{Eq:quadratic_span}. We note that it is sometimes possible to construct codes of interest for error corrected metrology in a manner different from that described in Lemma~\ref{lem:Error_detection}. For example, stabilizer codes~\cite{Antu_2025}, permutation-invariant codes~\cite{ouyang2026}, and subsystem codes~\cite{liu2026} have been proposed for use in quantum metrology. However, for the sake of concreteness we focus on this family of codes which always suffice to recover Heisenberg scaling when the needed span conditions are satisfied and the control timescales are sufficiently faster than all other timescales in the problem.

\subsection{Unitary pulse scheme}

As shown in Theorem \ref{Thm:AQED_zeno}, the AQED scheme achieves the Zeno limit under infinitely frequent projective measurements. The use of projective measurements, however, is not essential and the same dynamics can instead be engineered using unitary pulse sequences~\cite{Unification_of_DD_and_QZE}. To see this, consider a codespace constructed according to Lemma \ref{lem:Error_detection}. Define the unitary operator 
\begin{equation}
    U_p =  \Pi_C + e^{i \phi} \Pi_{\tilde{C}},
\end{equation}
where, $\phi \neq 0\pmod{2\pi}$. Consider a fast pulsing scheme where the unitary $U_p$ is applied after each interval of free evolution under $H_{\text{tot}}$  for a duration $t/n$. For a fixed $n$, the corresponding total time evolution operator is given by
\begin{equation}
    T_n(t) = \lf U_p U_{\textrm{tot}}\lf\frac{t}{n} \ri \ri^n.
\end{equation}
Then, following the prescription of Ref.~\cite{Unification_of_DD_and_QZE}, one can show that for any bounded $H \in \mathcal{B}{(\mathcal{H}_S \otimes \mathcal{H}_B)}$   
\begin{align}
   T(t) &:=  \lim_{n \to \infty} T_n(t) \nonumber \\ 
   &= \exp \lf -i t \lf  \Pi_C H_{\textrm{tot}} \Pi_C + \Pi_{\tilde{C}} H_{\textrm{tot}} \Pi_{\tilde{C}} \ri \ri.
\end{align}
 This result was later extended to apply to a more general class of Hamiltonians on infinite-dimensional Hilbert spaces by Bernád in Ref.~\cite{Bernad_unitary_to_zeno_1} (Theorem 3). To distinguish this from the measurement based AQED scheme, we call the unitary pulse implementation the dynamical decoupling (DD) scheme.   

The DD scheme, while formally equivalent to the measurement-based AQED scheme in the limit $\delta t \to 0$, offers several advantages over the latter. Since the DD pulse sequence is constructed from the AQED projection operators, the condition in Eq.~\eqref{eq:span_cond} is sufficient to guarantee its existence.  However, as we show in the following section, the DD scheme is, remarkably, more robust to finite pulse frequencies, achieving a scaling equivalent to that of the AQEC scheme. From a practical perspective, the implementation of both AQED and AQEC is challenging, particularly on noisy intermediate-scale quantum (NISQ) devices, owing to the difficulty of performing high-fidelity mid-circuit measurements~\cite{Negnevitsky_2018,Ryan_Anderson2021,Egan2021,Deist2022,Singh_2023,Graham2023,Lis2023,Norcia2023,Huie2023,Corcoles_2021,Koh_2023,Acharya2024}.  By contrast, current quantum platforms typically implement unitary operations with significantly higher fidelities and on much shorter timescales than mid-circuit measurements, making the DD scheme particularly well suited to the NISQ era.

\section{Finite measurement frequency}
\label{sec:finite_meas_freq}
In the preceding section, we were concerned only with whether the necessary controls existed in the limit where control operations could be performed in arbitrarily rapid succession. Mathematically, this means that we studied the convergence of a sequence of time evolution operators in the limit $\delta t \to0$, where $\delta t$ is defined as the time duration between successive rounds of measurement. However, from a practical standpoint, it is desirable to understand how finite measurement frequency affects our results due to the finite timescales of realistic measurement. Thus, in practical sense, we are not only interested in convergence, but also the rate of convergence of these sequences. Since analysis of rate of convergence becomes complicated for unbounded Hamiltonians, in this section, we will only focus our attention on bounded Hamiltonians.  Note that the introduction of a finite timescale $\delta t$ between control operations is distinct from allowing those control operations to take a finite amount of time. New and distinct challenges can arise in both settings. In this work we restrict our focus to the first challenge.

\subsection{AQEC codes}

Protocols that utilize an AQEC code are more robust to finite measurement frequency than their AQED counterparts due to the existence of a recovery operator $\mathcal{R}_\mathcal{\tilde{C}}$. To gain an intuition, consider the evolution of a state in the codespace such that at time $t + \delta t$, the state collapses into the error space. Since a recovery operation exists, it restores the state to its form at time $t$. Thus, the system has neither gained nor lost any information between $t$ and $t+\delta t$. Contrast this with the AQED, where it is often the case that the only reasonable ``recovery" is to  reset the state back to its initial state (see App.~\ref{Ap:Reset_conds}). Here, we have lost all information that was acquired from times $0$ to $t$. This intuition can be rigorously proven. Under the assumption that the $n$-point bath correlation function of the environment state is bounded by $C^n$ for some positive constant $C$, we rigorously calculate the error in the AQEC scheme in Theorem \ref{Thm:Error_Order_AQEC}.

\begin{thm}
     \label{Thm:Error_Order_AQEC}
   Consider the same setup and joint Hamiltonian as in Eq.~\eqref{Eq:Setup_Hamiltonian}. Let $U_{\text{tot}}(\delta t)$ be the unitary generated by the Hamiltonian, $U_{\textrm{tot}}(\delta t):= \exp(-i \delta t H_\text{tot}) $ and let $\mathcal{U}$ be the superoperator associated with this unitary. Suppose there exists a positive constant $C >0$, such that the $n$-point bath correlation functions satisfy 
    \begin{equation}
        \left|\operatorname{Tr}_E\lf B_{i_1} \dots B_{i_n} \rho_B \ri \right| \leq C^n.
    \end{equation}
    Define the channel $\mathcal{E}$ as
    \begin{equation}
        \mathcal{E}(\rho) := \operatorname{Tr}_E  \lf \mathds{1}_A \otimes U_{\text{tot}}(\delta t) \rho \mathds{1}_A \otimes U^\dagger_{\text{tot}}(\delta t)\ri.
    \end{equation}
  Suppose that the codespace constructed according to Lemma~\ref{lem:Error_detection} satisfies the conditions in Eqs. \eqref{eq:Addl_cond_1},\eqref{eq:Addl_cond_2} and \eqref{eq:Addl_cond_3}, which using Theorem~\ref{Thm:Recovery_cond} implies the existence of a channel $\mathcal{R}$,
    \begin{align}
        \mathcal{R} \circ \mathcal{E} (\rho_S \otimes \rho_B)  &= \Pi_C \mathcal{E} (\rho_S \otimes \rho_B) \Pi_C \\
        &\qquad + \mathcal{R}_{\tilde{\mathcal{C}}} \lf\Pi_{{\tilde{C}}} \mathcal{U} (\rho_S \otimes \rho_B) \Pi_{{\tilde{C}}} \ri,
    \end{align}
   where $\mathcal{R}_{\tilde{\mathcal{C}}}$ satisfies 
   \begin{equation}
         \label{EQ:Errorspace_recovery}
        \mathcal{R}_{\tilde{\mathcal{C}}} \lf \Pi_{{\tilde{C}}} \mathcal{E}(\rho) \Pi_{{\tilde{C}}}\ri \propto \rho + O(\delta t^3), \quad \text{for} ~ \rho = \Pi_C \rho \Pi_C.
    \end{equation}
   Then,  we have 
   \begin{align}
       \mathcal{R} \circ \mathcal{E} (\rho_S \otimes \rho_B)  &=  V_{\text{exact}}(\delta t) \rho_s  V^\dagger_{\text{exact}}(\delta t) +  O(\delta t^3),
   \end{align}
   where the unitary $V_{\text{exact}}(\delta t)$ is given as 
   \begin{equation}
   \label{eq:exact_evo}
       V_{\text{exact}}(\delta t) := \exp\left[ -i \delta t \lf \mathds{1}_A \otimes G_{\mathrm{eff}} \ri \right] \Pi_C.
   \end{equation}
\end{thm}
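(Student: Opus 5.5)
Throughout, write $V := \mathds{1}_A\otimes U_{\text{tot}}(\delta t)$. The plan is a second-order Dyson (Taylor) expansion of $V$, followed by a term-by-term comparison of the code-space and error-space blocks with the expansion of $V_{\text{exact}}(\delta t)$. Since the Hamiltonian is bounded here, I expand
\begin{equation}
V = \mathds{1} - i\delta t\, H_{\text{tot}} - \tfrac{\delta t^2}{2} H_{\text{tot}}^2 + O(\delta t^3),
\end{equation}
insert this into $\mathcal{E}(\rho_S\otimes\rho_B)$ with $\rho_S=\Pi_C\rho_S\Pi_C$, and take the partial trace over $E$. Every bath operator that appears gets traced against $\rho_B$, so only $n$-point correlations $\operatorname{Tr}(B_{i_1}\cdots B_{i_n}\rho_B)$ enter. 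The bound $C^n$ then lets me resum the whole series and control the remainder uniformly: the $n$-th order term is bounded by $(\delta t\,C')^n/n!$ with $C' \sim |\omega_0|\|G\| + \sum_k|g_k| C + |g_0|C + |\alpha|$. Hence all terms of order $\delta t^3$ and higher sum to an $O(\delta t^3)$ remainder in trace norm. This is the same bookkeeping as in Theorem~\ref{Thm:Recovery_cond}, and I would reuse it.

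\emph{Code-space block.} Next I compute $\Pi_C\mathcal{E}(\rho)\Pi_C$ up to second order. At first order, Lemma~\ref{lem:Error_detection} gives $\Pi_C E_k\Pi_C=\lambda_k\Pi_C$. Together with $\Pi_C(\mathds{1}_A\otimes G)\Pi_C = \Pi_C(\mathds{1}_A\otimes G_\perp)\Pi_C + (\text{const})\Pi_C$, the first-order term becomes $-i\delta t[\mathds{1}_A\otimes G_{\mathrm{eff}},\rho]$; all terms proportional to the identity, including those coming from $\langle B_k\rangle$, $B_0$ and $\alpha$, cancel in the commutator. At second order I insert $\mathds{1}=\Pi_C+\Pi_{\tilde C}$ between the two Hamiltonian factors. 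The $\Pi_C\cdot\Pi_C$ pieces reproduce exactly the second-order term of $e^{-i\delta t G_{\mathrm{eff}}}\rho e^{i\delta t G_{\mathrm{eff}}}$. What remains is a set of pieces routed through $\Pi_{\tilde C}$, of the form $\Pi_C X\Pi_{\tilde C}Y\Pi_C$ with $X,Y\in\{E_j,G_\perp\}$, weighted by two-point bath correlations. This includes the sandwich term $\Pi_C H\rho H\Pi_C$, which vanishes at this order because $\Pi_C H \Pi_C \rho$ already accounts for it, and the $\Pi_{\tilde C}$-routed parts are handled in the same way. The conditions \eqref{eq:Addl_cond_1}--\eqref{eq:Addl_cond_3} (for $E_jE_k$, use $\Pi_C E_j\Pi_{\tilde C}E_k\Pi_C=\Pi_CE_jE_k\Pi_C-\lambda_j\lambda_k\Pi_C$) turn each of these into a scalar multiple of $\Pi_C$. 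So the block takes the form $\Pi_C\mathcal{E}(\rho)\Pi_C=(1-p)\,V_{\text{exact}}\rho V_{\text{exact}}^\dagger+O(\delta t^3)$, where $p=O(\delta t^2)$ is a scalar equal to the leakage probability. Here I will also need to check that the Hermitian and anti-Hermitian parts of these scalar terms produce only the trace loss $p$, and no residual commutator. The anti-Hermitian part is real-valued and gives the loss; any imaginary part yields a commutator with a scalar, which is zero.

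\emph{Error-space block.} Since $\Pi_{\tilde C}\rho=0$, the error-space block $\Pi_{\tilde C}\mathcal{E}(\rho)\Pi_{\tilde C}$ begins at order $\delta t^2$, namely $\delta t^2\,\Pi_{\tilde C}H\rho H\Pi_{\tilde C}$ (traced over the bath) plus $O(\delta t^3)$. The hypothesis \eqref{EQ:Errorspace_recovery} gives $\mathcal{R}_{\tilde{\mathcal C}}(\cdot)=c\,\rho+O(\delta t^3)$. Trace preservation of $\mathcal{R}$ and of $\mathcal{E}$ forces $c=p+O(\delta t^3)$. Since $V_{\text{exact}}\rho V_{\text{exact}}^\dagger=\rho+O(\delta t)$ and $c=O(\delta t^2)$, I can replace $c\rho$ by $c\,V_{\text{exact}}\rho V^\dagger_{\text{exact}}$ at a cost of only $O(\delta t^3)$. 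Adding the two blocks gives the claimed result.

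The main obstacle is the second-order code-space computation. There I must verify that every term not of the form $G_{\mathrm{eff}}$-conjugation is either proportional to $\Pi_C$ (by the three conditions) or cancels. This includes the cross terms between $G_\perp$ and $E_j$ weighted by $\langle B_j\rangle$, and the terms involving the component of $G$ inside $\operatorname{span}\{E_k\}$. I will handle the latter by rewriting $G=G_\perp+\sum_k c_kE_k+c_0\mathds{1}$, so that they reduce to the $E_j$–$G_\perp$ and $E_jE_k$ conditions.
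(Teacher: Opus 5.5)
Your plan is essentially the paper's proof: a second-order expansion of the reduced dynamics with the remainder controlled by the $C^n$ correlation bound; projection onto the codespace, where the three conditions collapse the $\Pi_{\tilde C}$-routed second-order piece to a scalar loss $\mu\,\delta t^2\rho_S$; and compensation of that loss by the recovered error-space weight (you fix this weight via trace preservation, the paper via the normalization $\sum_k N_k^\dagger N_k \propto \delta t^2\mathds{1}$ of the error-space Kraus operators). One correction to your bookkeeping: the sandwich term $\Pi_C\operatorname{Tr}_B[H(\rho_S\otimes\rho_B)H]\Pi_C$ neither vanishes nor passes through $\Pi_{\tilde C}$---since $\rho_S=\Pi_C\rho_S\Pi_C$ it is a pure $\Pi_C$--$\Pi_C$ term that supplies the $G_{\mathrm{eff}}\rho_S G_{\mathrm{eff}}$ part of the unitary conjugation, while its bath-dependent pieces cancel those of the $\Pi_C$-routed anticommutator, so the only $\Pi_{\tilde C}$-routed contribution is $\{\operatorname{Tr}_B[\Pi_C H\Pi_{\tilde C}H\Pi_C(\mathds{1}\otimes\rho_B)],\rho_S\}$, whose scalar coefficient is automatically real and nonnegative.
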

The proof of the above theorem is given in App.~\ref{Ap:Proof_Theorem_3}. First, note that the theorem assumes that the $n$-point correlation functions are bounded, which is a much weaker assumption than demanding the Hamiltonian be bounded. Hence,  this theorem works in more general settings. Second, this says that the evolution operator is equivalent to a unitary evolution up to the $O(\delta t^3)$. Under the mild assumption that the effective bath state at time $t$, $\rho_B(t)$ has the bounded correlation property, Theorem~\ref{Thm:Error_Order_AQEC} can be repeatedly applied to show that at time $t$, the evolution is unitary to order $O(t \delta t^2)$. This immediately implies that the QFI scales as $\mathcal{F}(t) \approx 4 t^2 \operatorname{Var} \lf\mathds{1}_A \otimes G_{\textrm{eff}} \ri |_{\ket{\psi_0}}$, for at least a time
\begin{equation}
t=\Omega\left(\frac{1}{\nu^{3}\delta t^{2}}\right)
\end{equation}
where $\nu$ is a frequency that depends on the details of the Hamiltonian. Through our numerical simulations, we show that this scaling is indeed tight.

\subsection{AQED codes and DD schemes}
\label{subsec:AQED}
Unlike the noise models that admit an AQEC code, for noise models that admit only an AQED code, a recovery operator $\mathcal{R}_{\tilde{\mathcal{C}}}$ {with the properties outlined above} does not exist. In certain cases, spontaneous decay, for example (see App.~\ref{Ap:Reset_conds}), the state after projection onto the error space has no information about the parameter $\omega_0$. Thus, in the worst-case scenario, {a reasonable} option is to discard the state if it is projected onto the error subspace and reset to the initial state at $t = 0$. In App.~\ref{Ap:Reset_conds} we derive the generic conditions under which the state in the error space has no information about the parameter. Consequently, a projection into the error space in this setting entails the loss of all information accumulated up to that time. Intuitively, this means that the action of projecting onto the error space is worse for models that admit only an AQED code when compared with those admitting an AQEC code. To put this intuition on a more mathematically rigorous ground, we perform the leading order error analysis for noise models that admit an AQED code in the App.~\ref{Ap:Error_AQED}. We show that for bounded Hamiltonian $H_{\text{tot}}$, we have 
\begin{align}
\Pi_C \big( \mathds{1}_A \!\otimes\! U(\delta t) \big) \Pi_C
= &\exp\Big[
   - i\,\delta t\,
     \Pi_C \big( \mathds{1}_A \!\otimes\! H_{\mathrm{tot}} \big) \Pi_C
\notag \\[4pt]
&
   - \frac{\delta t^2}{2}\,
     \Pi_C \big( \mathds{1}_A \!\otimes\! H_{\mathrm{tot}} \big) \Pi_{\tilde{C}}
      \big( \mathds{1}_A \!\otimes\! H_{\mathrm{tot}} \big) \Pi_C
\Big]
\notag \\[4pt]
& +\, O\!\left(\|H_{\mathrm{tot}}\|^3 \delta t^3\right).
\label{Eq:V_AQED_perturb}
\end{align}
Note that in Eq.~\eqref{Eq:V_AQED_perturb} the leading order of error is of the order $O(\delta t^2)$ as opposed to $O(\delta t^3)$ for AQEC codes. Hence, the QFI scaling as $\mathcal{F}_{\text{AQED}}(t) \approx 4 t^2 \operatorname{Var} \lf\mathds{1}_A \otimes G_{\textrm{eff}} \ri |_{\ket{\psi_0}}$ is only guaranteed until a time
\begin{equation}
t=\Omega\left(\frac{1}{\nu^{2}\delta t}\right).
\end{equation}
Furthermore, this scaling is tight as illustrated by the example of perpendicular dephasing, where the infidelity and therefore the leading order error in Fisher information scales as $O(t \delta t)$. In App.~\ref{ap:Leading_order_DD}, we perform a similar analysis for 
the unitary pulse scheme and find that, for $\phi = \pi$,
\begin{equation}
   \label{eq:DD_error}
   \begin{split}
    T_2(t) = \exp&\lf-i \delta t \lf 2 H_0 \ri - \frac{\delta t^2}{2} [H_0, A_0] \ri \\
    &+ O\lf\|H_{\text{tot}}\|^3\delta t^3\ri,
    \end{split}
\end{equation}
where
\begin{align}
H_{0}&=\Pi_{C}H_{\textrm{tot}}\Pi_{C}+\Pi_{\tilde{C}}H_{\textrm{tot}}\Pi_{\tilde{C}}, \\
A_{0}&=\Pi_{\tilde{C}}H_{\textrm{tot}}\Pi_{C}+\Pi_{C}H_{\textrm{tot}}\Pi_{\tilde{C}}.
\end{align}
From the above equation, note that the while the leading order error is of the order $O(\delta t^2)$, it has a starkly different form. While in the AQED case, the leading order error term couples the codespace to itself, in the unitary pulse case it couples the codespace to the error space. Consider a measurement with POVM's having the form $\{\Pi_C^{(i)}, \Pi_{\tilde{C}}\}$, where the $\Pi_{C}^{i}$ are supported only on the codespace and $\Pi_{\tilde{C}}$ is not supported on the codespace. For a noiseless evolution, there must exist a POVM of this form that maximizes the classical Fisher information. In the density matrix description, the lowest order error for evolution under Eq.\eqref{eq:DD_error} is $O(t \delta t)$, but appears only in the coherences between the codespace and the error space, while the error within the codespace is $O(t \delta t^2)$. Hence, under the POVM $\{\Pi_C^{(i)}, \Pi_{\tilde{C}}\}$, the leading order error in the associated probability distribution is $O(t \delta t^2)$. This manifests itself at the level of the QFI and we can show (see App.~\ref{ap:Leading_order_DD}) that for the unitary pulse scheme, the QFI scales as $\mathcal{F}_{\text{DD}}(t) = 4 t^2 \operatorname{Var} \lf\mathds{1}_A \otimes G_{\perp} \ri|_{\ket{\psi_0}}$ until at least times
\begin{equation}
t=\Omega\left(\frac{1}{ \nu^{3}\delta t^2}\right).
\end{equation}

The leading order errors are accurate when $\delta t$ is the fastest timescale in the system-environment model. The performance of DD, AQEC, and AQED codes might worsen if this is not the case.  For example, if $\delta t > \tau_B$, the reservoir scrambling time, then the noise model is effectively Markovian and hence Zeno controls would fail.  Similarly, for $\delta t$ large enough, we might be in the so called anti-Zeno effect, which decreases the probability of projecting onto the codespace instead of increasing them \cite{FacchiTasaki,KofmannKuriziki,QingYongZheng}. Since the effect of projecting out of the codespace is worse for noise models admitting only an AQED code vs those admitting an AQEC code, AQED codes should perform markedly worse in such regimes.

\section{Numerical simulations}
\label{sec:Num_Sim}

\begin{figure}[]
\includegraphics[width=\linewidth]{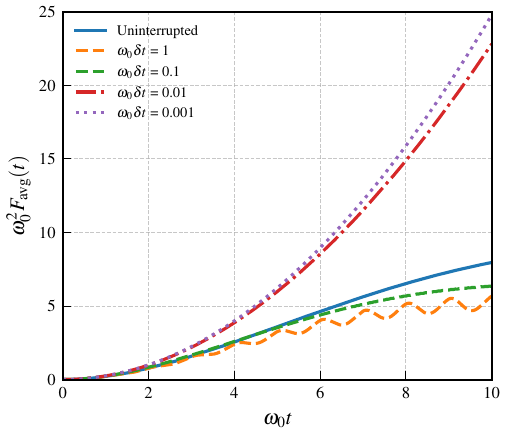}
    \caption{QFI as a function of time for Jaynes-Cummings model with detuning. We have set the parameters $\lambda = \omega_0$, $\gamma_0 = \Delta = 5 \omega_0$. $\delta t$ is the time duration between successive measurements. We see a decreasing value of the QFI with increasing $\delta t$. Note that the QFI for $\omega_0 \delta t = 0.1$ is smaller than for the uninterrupted evolution.}
\label{fig:combined_image}
\end{figure}

To highlight our assertions in this article we perform numerical simulations of two different noise models to reinforce our results. We first simulate the model of non-Markovian spontaneous emission, which admits only an AQED code. Consider the Hamiltonian 
\begin{equation}
\label{eq:jaynes_cummings}
    H = \omega_0 \sigma_+\sigma_- + g_{0}\mathds{1}\otimes B_{0} + (g_{1}\sigma_+ \otimes B_{1}  + \text{h.c.}), 
\end{equation}
where $\sigma_{\pm}$ are the qubit raising and lower operators. In particular the standard Jaynes-Cummings model in the rotating wave approximation, is of this form. 
This model can be understood as describing the evolution of a qubit under $\sigma_z$ and amplitude damping noise, generated by $\sigma_+$ and $\sigma_-$~\cite{breuer_book}. From Eqs.~\eqref{eq:span_cond} and \eqref{Eq:quadratic_span}, it is easy to see that this model admits an AQED code, but not an AQEC code. Following the analysis in  Ref.~\cite{breuer_book} (section 10.1), for an initial state in the codespace constructed by Lemma~\ref{lem:Error_detection} and evolving under the Jaynes-Cummings model, the amplitude $\beta(t)$ for the qubit to be in the state $|1\rangle$ is the solution to the integro-differential equation 
\begin{equation*}
\label{eq:integro_diff}
    \dot{\beta}(t) + i  \omega_0 \beta(t) + \int_{0}^t~ d\tau  ~ \int_0^\infty d\omega   J(\omega) e^{-i \omega (t - \tau)}  \beta(\tau)=0
    .
\end{equation*}
 In the above equation, $J(\omega) = \sum_k |g_k|^2 \delta(\omega - \omega_k)$ is the spectral density associated with the environment. 
\begin{figure*}[htbp!]
\includegraphics[width=\linewidth]{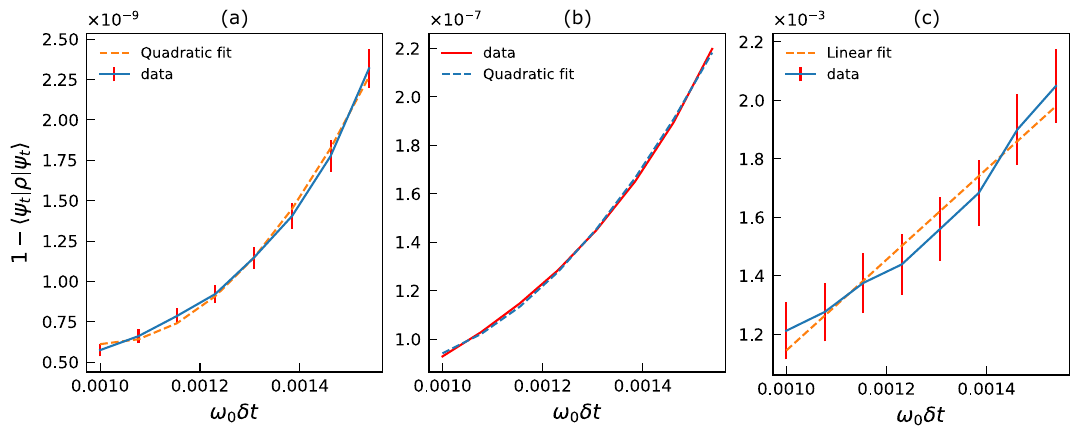}
\caption{Infidelity of state at time $\omega_0 t = 10$ ($\rho(t)$) with the exact evolution ($\ket{\psi_t}$) vs $\delta t$. (a) plots the AQEC evolution, (b) the DD evolution and (c) plots the AQED evolution. The red bars on (a) and (c) are standard error $\sigma/\sqrt{N}$, where $\sigma$ is the standard deviation for $N = 10000$ trajectories.  We plot linear and quadratic fits generated using the \textit{polyfit} function in \textit{numpy} library of python.}
\label{fig:scaling_plot}
\end{figure*}
For an appropriate continuum limit of the bath with a polynomial cutoff, the Jaynes-Cummings spectral density becomes~\cite{breuer_book}
\begin{equation}
    J(\omega) = \frac{1}{2 \pi} \frac{\gamma_0 \lambda^2}{(\omega_0 - \Delta - \omega)^2 + \lambda^2},
\end{equation}
where $\Delta$ is the detuning of the cavity frequency from atomic resonance, the parameter $\lambda$ controls the spectral width of the coupling, and the parameter $\gamma_{0}$ controls the rate at which the atom evolves due to interaction with the bosonic modes. In App.~\ref{ap:spon_decay}, we  discuss the behavior of models of this type in more detail.

The application of the AQED protocol to this model results in one of a set of pure states $\{|\psi_{j}\rangle\}$ being produced. In particular, the state $|\psi_{j}\rangle$ produced depends on the measurement record. Each of these states is produced with an associated probability $p_{j}$. We define the average QFI for this ensemble to be
\begin{equation}
   F_{\mathrm{avg}}(t) = \sum_{i = 1}^m p_i \mathcal{F} \lf \ketbra{\psi_i}{\psi_i}\ri.  
\end{equation}
In Fig.~\ref{fig:combined_image}, we plot the average QFI  as a function of time for various measurement timescales $\delta t$. From the figure, we can immediately draw the following conclusions. First, as $\omega_0\delta t$ approaches $0$, we see that the QFI scales quadratically with time $t$, signifying the recovery of Heisenberg limit thereby reinforcing Theorem \ref{Thm:AQED_zeno}. Second, the QFI for $\omega_0\delta t  > 0.1$ performs worse than the uninterrupted evolution. This shows us that the timescale of $\delta t$ is an important factor in the performance of AQED codes. The AQED faithfully gives a QFI that grows like $t^2$ if $\delta t$ is the fastest timescale, i.e, $\delta t \ll \tau_B, \tau_R$, where $\tau_B=\lambda^{-1}$ is the reservoir correlation time and $\tau_R=\gamma_{0}^{-1}$ is the timescale on which the system changes.

Second, we consider a system that admits an AQEC code and run it using both AQEC and AQED procedures to compare their performance. In the former, we construct the recovery operator $\mathcal{R}_{\tilde{\mathcal{C}}}$ as in Eq.~\eqref{EQ:Errorspace_recovery} and for the latter we reset it to the initial state whenever an error is detected. For this simulation, we consider the case of perpendicular dephasing under the Hamiltonian 
\begin{equation}
    H = \omega_0\sigma_{z,S} + \sum_j g_{env} (j) \sigma_{x,S} \otimes \sigma_{x,E}^{(j)} +   \Omega_{env}(j) \sigma_{z,E}^{(j)} .
\end{equation}
The probe, ancilla and environmental modes are all modeled as qubits. In the results presented, we consider $5$ environmental modes with the environment frequencies chosen randomly (see App.~\ref{ap:Perp_dephasing_sim} for details). In Fig.~\ref{fig:scaling_plot}, we plot the infidelity of state undergoing noisy evolution $\rho(t)$ at time $t$ with the state undergoing noiseless evolution $\ket{\psi_t}$ at time $t$, for evolutions with different measurement timescales $\delta t$. 

We choose the infidelity between the two states, as it is an easier to calculate surrogate for the absolute value of the difference between the QFI of the two states. In particular, if frequency encoding occurs via a purely unitary operation, then, as shown in Ref.~\cite{Augusiak_2016} Theorem 1, the absolute value of difference between the two QFIs can be bounded by the infidelity as  
\begin{equation}
    \left| \mathcal{F}(|\psi\rangle\langle\psi|) - \mathcal{F}(|\phi\rangle\langle\phi|) \right| \leq 32t^{2} \sqrt{1 - |\!\braket{\phi|\psi}\!|^2}. 
\end{equation}
While the situation we consider here does not exactly satisfy this condition, this suggests that if the fidelity is close to one then the states will yield almost the same QFI. From the figure, it is clear that the infidelity scales as $O(\delta t^2)$ for AQEC and DD schemes while it scales as $O(\delta t)$ for the AQED scheme, as the fits match the numerical curve, up to the standard error. \par 
In Fig.~\ref{fig:DvC} we plot the fidelity with the initial state vs time for the three different evolutions. The plot in the left is for a measurement rate of $\delta t = 0.1$ and the one on the right for $\delta t = 0.001$  for a parameter regime that is not captured by our leading order analysis. From the figures it is clear that the AQEC and DD perform much better than AQED, even under this regime.  

\section{Conclusions}  

We have analyzed the performance of three types of schemes intended to protect frequency estimation protocols from the harmful effects of noise. In particular, we have studied their performance as a function of the control timescale. All of the considered protocols are known to yield Heisenberg-limited scaling in the limit that the controls can be applied arbitrarily fast. Despite this fact, major differences in performance emerge when the time between control operations, which we denote by $\delta t$, is made finite. In particular,  we analytically show that approximate quantum error detection strategies are guaranteed to yield Heisenberg scaling only until a time that grows as $\delta t^{-1}$ while both approximate quantum error correction based strategies and dynamical decoupling based strategies are guaranteed to yield Heisenberg scaling till a time that grows as $\delta t^{-2}$. This amounts to a major difference when $\delta t$ is small. 

\begin{figure}
\centering
\includegraphics[width=\linewidth]{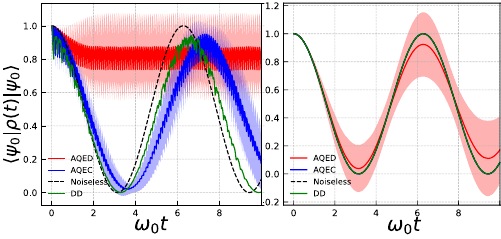}
\caption{The mean Rabi oscillation over different trajectories is plotted along with the variance for two different $\delta t$ for AQEC and AQED code. The left is for $\omega_0\delta t = 0.1$ and the right is for $\omega_0 \delta t = 0.001$. The shaded regions represent the standard deviations of the respective quantities.}
\label{fig:DvC}
\end{figure}

The physical origin of the weaker noise protection provided by the quantum error detection based strategies is a reliance on the quantum Zeno effect. In particular, these strategies rely on the fact that in the limit that $\delta t$ tends to zero all syndrome measurements yield the no error outcome. This is a generic feature of unitary evolution where after a short evolution for time $\delta t$, the probability of projecting via measurement onto a subspace orthogonal to the initial state is at most $\delta t^{2}$. However, if many measurements are performed after many short time evolutions with finite $\delta t$, the probability of at least one error being detected becomes substantial. It is often the case for this type of protocol, that an error entails the complete erasure of all information about the frequency currently encoded in the parameter. This type of catastrophic failure generally leaves no recourse but to start the sensing protocol over again from the beginning. 

Many of our results apply rigorously to the case of environments associated with infinite-dimensional Hilbert spaces. While this substantially increases the technical difficulty of the results, this effort is worthwhile because it creates a bridge between the non-Markovian setting considered here and Markovian setting that dominates the literature. In particular, truly Markovian quantum dynamics are possible only for a system coupled to an environment with an infinite-dimensional Hilbert space. In the fully Markovian setting, the conditions under which Heisenberg-limited scaling can be recovered, even with fast controls, are more stringent than in the non-Markovian setting. Physically, this is because the ``fast controls" in the Markovian setting  still operate more slowly than the fast timescales associated with the environment while ``fast controls" in the non-Markovian setting are faster than all other timescales in the problem. The promotion of $\delta t$ to a tunable parameter, as done in this work, allows for a certain type of interpolation from these two regimes. We expect that in the future studying noisy metrology in the presence of this type of tunable interpolation will continue to yield further insights. 

It is worth dwelling on a slight asymmetry between the conditions for quantum error correction based strategies to obtain Heisenberg like scaling up till time $\delta t^{-2}$. The usual span type condition appears but new conditions also appear. These new conditions are related to the parts of $G_{\perp}$ that link codespace and the error space. These conditions are automatically satisfied if $G_{\perp}$ is block diagonal between these two subspaces, i.e., if $G_{\perp}$ is itself a logical operator. However, this is not the only way to satisfy these conditions. It would be interesting to develop a more in depth theory of which types of signals and errors satisfy these conditions. It is also possible that the use of an alternative family of codes allows this condition to be relaxed. Finally, a theory of which conditions emerge if Heisenberg scaling until time $\delta t^{m}$ is required may be useful for use with ``high distance" quantum error correcting codes.

As $\delta t$ is allowed to grow so that we approach the Markovian regime, the quantum error correction protocols we have discussed here should, at least within some regime, tend towards their Markovian counterparts. In this limit, they yield Heisenberg scaling, at least up till some time, in a similar manner to that discussed here. On the other hand, dynamical decoupling protocols typically do not find use in the fully Markovian regime. Interestingly, error detection protocols occupy somewhat of a middle ground between these two extremes. They do not yield Heisenberg-limited scaling but they do find metrological use. For example, in the extreme limit where only a single round of error detection can be performed prior to measurement, data processing inequalities rule out the resulting protocol yielding Heisenberg-limited scaling but the error detection can still be used as a heuristic method to obtain an optimal measurement by flagging samples that contain no information about the frequency being measured. In this sense, the error detection protocols we discuss here are related to error mitigation and post-selection based strategies in the Markovian limit. 

In the past, most studies of quantum error corrected metrology have focused on the $\delta t$ goes to zero limit. This paradigm is extremely important as it often serves to establish the fundamental bounds on the estimation error that it is in principle possible to achieve. However, as we enter the age of early fault tolerant quantum devices it is worth examining the operation of quantum error correction protocols under less ideal circumstances. For example, we should consider, among other effects, finite error correction cycle time, as we have done here, the occurrence of errors during the syndrome extraction and correction procedure, and the effect of signal accumulation during the error correction procedure. All of these are certain to have an impact on the performance of protocols in any use case. 

\section*{Data availability}

Data generated and analyzed during the current study are available from the corresponding author upon reasonable request

\section*{Acknowledgment}
We thank Manuel Mu{\~n}oz-Arias, Francisco Riberi, Ivan Deutsch, Marco Rodriguez Garcia, and Sisi Zhou for fruitful discussions. We thank Akimasa Miyake for his support and participation during the early stages of this project. This work is supported by the National Science Foundation QLCI Q-SEnSE (Grant No. OMA-2016244).
\bibliography{ref}
\appendix
\onecolumngrid

\section{Proof of Theorem 1}
\label{ap:Proof_theorem_1}

\subsection{Definitions and notations}
Since we use notions of an infinite-dimensional Hilbert space, we encounter operators that might not be \textit{bounded}. For the sake of completeness and for the convenience of readers, we will review the basic definitions and results in this subsection. Interested readers who wish to learn more are directed to the books written by Reed and Simon \cite{reed_simon}, Weidmann \cite{weidmann_linear_1980}, and Schmudgen \cite{schmudgen_unbounded_2012}. Let $\mathcal{H}$ be a separable Hilbert space, i.e, it has a countable basis. Additionally, we denote the domain of an operator $T$ by $D(T)$.
\begin{define}[Linear operator]
    An operator $T: D(T) \to \mathcal{H}$ is called a linear operator, if for every $x,y \in D(T)$ and $\alpha, \beta \in \mathds{C}$, we have
    \begin{equation}
        T(\alpha x + \beta y) = \alpha T(x) + \beta T(y).
    \end{equation}
    Furthermore, if $D(T)$ is dense in $\mathcal{H}$, then we call $T$ densely defined.
\end{define}
Any linear operator is defined along with its domain $D(T)$. The domain of an unbounded operator is very crucial. Operators with the same form but different domains behave very differently and might have completely different properties. To illustrate this point, consider the differential operator $\mathcal{T} = -\frac{ d^2}{dx^2}$ on two different domains, 
\begin{equation}
    D(\mathcal{T}_D) = \left\{ f \in H^2([0,1]) | f(0) = f(1) = 0 \right\}, \quad D(\mathcal{T}_P) = \left\{ f \in H^2([0,1]) | f(0) = f(1) \text{ and } f'(0) = f'(1) \right\},
\end{equation}
where $H^2{([0,1])} \subset L^2([0,1])$ is the set of twice (weakly) differentiable functions. $L^2([0,1])$ is the set of square integrable functions on $[0,1]$. It can be shown that both these domains are dense in $L^2([0,1])$. Solving the eigenvalue equation $f'' = -\lambda f$, we see that the domains $D(\mathcal{T}_D)$ and $D(\mathcal{T}_P)$ have different spectra. In fact, $D(\mathcal{T}_D)$ does not have a zero eigenvalue, while $D(\mathcal{T}_P)$ has one (corresponding to a non-zero constant function). Different domains having different solutions should not be surprising, as they correspond to different physical settings. For instance, the above example can be considered a free particle in an infinite square well (for $D(\mathcal{T}_D)$) and a ring (for $D(\mathcal{T}_P)$). We will distinguish the operator $\mathcal{T}$ defined on the two different domains by representing it as $\mathcal{T}_D$ and $\mathcal{T}_P$.

Throughout this article, we will assume that all operators are densely defined, that is to say that its domain $D(T)$ is a dense subset of the Hilbert space $\mathcal{H}$. Because of the domain, the notions of equality of operators and restrictions become a little bit more involved. 
\begin{define}[Equality and restrictions]
    Two linear operators $T,S$ are said to be equal iff $D(S) = D(T)$ and $Sx = Tx$ for all $x \in D(T)$. Furthermore we will say that $T$ is an extension of $S$ (or $S$ is a restriction of $T$) if $D(S) \subset D(T)$ and 
    \begin{equation}
    Sx = Tx ~~ \forall x \in D(S). 
    \end{equation}
    This is denoted by $S = T|_{D(S)}.$
\end{define}
In particular, note that two operators $T$ and $S$ are considered equal only if their domains also coincide. Another important property is the closedness of linear operators, which are defined using the notion of a graph. 
\begin{define}[Graph]
    The \textit{graph} of a linear operator $T$ is given by, 
    \begin{equation}
        \mathcal{G}(T) = \big\{ (x, Tx) \in \mathcal{H} \times \mathcal{H}  | x \in D(T)\big\}
    \end{equation}
\end{define}
It is easy see that any operator is uniquely identified by its graph $\mathcal{G}$ as it lists the entire domain and the mapping. This is used to define norms on the space of operators and the notion of closedness and closability. 
\begin{define}[Closed and Closable]
    An operator $T$ is called \textit{closed} if its graph $\mathcal{G}(T)$ is a closed subset of $H \times H$. An operator $T$ is called \textit{closable} if there exists a closed linear operator $S$ such that, 
    \begin{equation}
        T = S|_{D(T)}.
    \end{equation}
\end{define}
The notion of closed operators can be thought of as a generalization of bounded operators. In fact, the notion of a closed operator is necessary to generalize the concept of a spectrum. We now state a proposition from Ref.~\cite{schmudgen_unbounded_2012}.
\begin{prop}
\label{prop:Closure_property}
    The following statements are equivalent.
    \begin{enumerate}
        \item $T$ is closed
        \item If $(x_n)_{n \in \mathds{N}}$ is a sequence of vectors $x_n \in D(T)$ such that $\lim_{n \to \infty} x_n = x$ and $\lim_{n \to \infty} Tx_n = y$, then $x \in D(T)$ and $Tx = y$.
    \end{enumerate}
\end{prop}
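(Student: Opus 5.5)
The equivalence will be shown directly from the definition of a closed graph in $\mathcal{H}\times\mathcal{H}$, equipped with the product norm $\|(x,y)\| = (\|x\|^2+\|y\|^2)^{1/2}$. Convergence in this norm is exactly componentwise convergence: $(x_n,y_n)\to(x,y)$ if and only if $x_n\to x$ and $y_n\to y$. Since $\mathcal{H}\times\mathcal{H}$ is a metric space, a subset is closed if and only if it is sequentially closed, that is, it contains the limits of all its convergent sequences. Both implications follow by translating between these two descriptions.

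For $1\Rightarrow 2$, suppose $T$ is closed and take $x_n\in D(T)$ with $x_n\to x$ and $Tx_n\to y$. The points $(x_n,Tx_n)$ lie in $\mathcal{G}(T)$ and, by componentwise convergence, converge to $(x,y)$ in $\mathcal{H}\times\mathcal{H}$. Because $\mathcal{G}(T)$ is closed, $(x,y)\in\mathcal{G}(T)$. By the definition of the graph, this means $x\in D(T)$ and $y=Tx$.

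For $2\Rightarrow 1$, let $(x_n,Tx_n)$ be any sequence in $\mathcal{G}(T)$ converging to some $(x,y)$. Every element of the graph has this form. Componentwise convergence gives $x_n\to x$ and $Tx_n\to y$, so statement 2 yields $x\in D(T)$ and $Tx=y$, hence $(x,y)\in\mathcal{G}(T)$. The graph therefore contains all its limit points and is closed.

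There is no real obstacle here. The only point to state explicitly is that closedness in the metric space $\mathcal{H}\times\mathcal{H}$ coincides with sequential closedness, and that the product topology is the norm topology described above.
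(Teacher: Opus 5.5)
Your proof is correct: translating closedness of $\mathcal{G}(T)$ in the metric space $\mathcal{H}\times\mathcal{H}$ into sequential closedness, with product-norm convergence equal to componentwise convergence, gives both implications. The paper gives no proof of its own here; it quotes the proposition from Schm\"udgen, and your argument is the standard one behind that equivalence.
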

We denote the closure of a set $\mathcal{S}$ by $\bar{\mathcal{S}}$. We also define the closure $\overline{T}$ of a closable operator $T$. 
\begin{define}[Closure]
    For a closable operator $T$ with graph $\mathcal{G}(T)$, we define the closure $\overline{T}$ as the unique linear operator with graph $\overline{\mathcal{G}(T)}$.
\end{define}
From the above definition, $D(\overline{T})$ is the set of vectors $x \in \mathcal{H}$, for which there exists a sequence $(x_n)_{n \in \mathds{N}} \in D(T)$ which converges to $x$ such that $(T(x_n))_{n \in \mathds{N}}$ converges in $\mathcal{H}$. We will now define the notions of a symmetric operator, adjoint and self-adjoint operator. 
\begin{define}[Symmetric operator]
    A densely defined operator $T$ is called symmetric if for all $x,y \in D(T)$, we have 
    \begin{equation}
        \braket{x,Ty} = \braket{Tx,y},
    \end{equation}
    where $\braket{.,.}$ is the inner product of the Hilbert space. 
\end{define}
\begin{define}[Adjoint operator]
    Define the domain of the adjoint operator $D(T^{\dagger})$ to be
    \begin{equation}
        D(T^{\dagger}) = \big \{ y \in \mathcal{H} \big|  \exists u \in \mathcal{H} \text{ such that }\forall x \in D(T),  \braket{Tx,y} = \braket{x,u} \big\},
    \end{equation}
    where $\braket{.,.}$ is the inner product of the Hilbert space. Then the adjoint of any densely defined operator $T$ is given as the linear map, 
    \begin{equation}
        T^{\dagger} y = u.
    \end{equation}
\end{define}
 The denseness of $D(T)$ ensures that $u$ is unique. An operator $T$ is called self-adjoint if $T = T^{\dagger}$. Note that $T$ being self-adjoint is a more stringent condition than it being symmetric. For any  symmetric $T$, we can immediately see that $D(T) \subset D(T^{\dagger})$. However, for $T$ to be self-adjoint, we require $D(T) = D(T^\dagger)$. Let us illustrate this point more carefully using the example $\mathcal{T}$ defined on the domain, 
\begin{equation}
    D(\mathcal{T}_{\text{sym}}) = \{f \in H^2([0,1]) | f(0) = f(1) = 0, f'(0) = f'(1) = 0\}
\end{equation}
For this domain, it is easy to show that for any $f \in H^2([0,1])$, 
\begin{equation}
    \braket{\mathcal{T}_{\text{sym}} g,f} =  - \int_0^1 dx \frac{d^2g^*}{d x^2} f =  - \int_0^1 dx g^*\frac{d^2f}{d x^2} = \braket{g,\mathcal{T} f}, \quad \forall g \in   D(\mathcal{T}_{\text{sym}}).
\end{equation}
This can be seen by doing integration by parts, which will leave a boundary term $[f(x) g^{*'}(x) - f^{'}(x) g^{*}(x)]_0^1$, which vanishes as $g(x), g'(x)$ vanish at the boundary. This immediately implies that $\mathcal{T}_{\text{sym}}$ is symmetric and  $D(\mathcal{T}^\dagger_{\text{sym}}) = H^2([0,1]) \supset D(\mathcal{T}_{\text{sym}})$. i.e, it is not self-adjoint. 

If $T$ is bounded, then the two definitions of symmetric and self-adjoint coincide, as operators are everywhere defined, which is why we do not care for the difference between them in any finite-dimensional case. 
In general a densely defined symmetric operator $T$ we can have either a unique, or many or no self-adjoint extensions. For example $\mathcal{T}_D$ and $\mathcal{T}_P$ are both valid self-adjoint extensions of $\mathcal{T}_{\text{sym}}$. To this end, we define the notion of an essentially self-adjoint operator. 

\begin{define}[Essentially self-adjoint]
    A dense symmetric linear operator $T$ is called essentially self-adjoint if it's closure $\overline{T}$ is self-adjoint. 
\end{define}
The being essentially self-adjoint effectively ``restricts" the possible self-adjoint extensions of $T$, as essentially self-adjoint operators can be shown to have a unique self-adjoint extension. To see this note that, since the closure is the closed extension of $T$ with the smallest graph, any self-adjoint extension $S$ of $T$,  $T = S|_{D(T)}$,  is also an extension of the closure. Thus (see Proposition 1.6 of Ref.~\cite{schmudgen_unbounded_2012}), 
\begin{equation}
    \bar{T} = S|_{D(\bar{T})} \implies S^{\dagger} = \bar{T}^\dagger|_{D(S^\dagger)}.
\end{equation}
Self-adjointness of $S$ and $\bar{T}$ implies $S = \bar{T}$.

\subsection{Proof of the Theorem \ref{Thm:AQED_zeno}}
In this section, we rigorously prove Thm.~\ref{Thm:AQED_zeno}. For the sake of completeness, we restate our setup. We consider a finite-dimensional quantum system described by the Hilbert space $\mathcal{H}_S$ of dimension $d < \infty$, coupled to an environment of arbitrary (possibly infinite) dimension defined on a separable Hilbert space $\mathcal{H}_E$. The joint dynamics are governed by the total Hamiltonian,   
\begin{equation}
     \label{ApEq:Setup_Hamiltonian}
     H_{\text{tot}} = \omega_0 G \otimes \mathds{1} + \sum_{k = 1}^{K} g_{k}E_k \otimes B_k + g_{0}\mathds{1} \otimes B_0 + \alpha \mathds{1} \otimes \mathds{1},
\end{equation}
where $G,\{E_k\} \in \mathcal{B}(\mathcal{H}_S)$ are Hermitian operators acting on $\mathcal{H}_S$, and $B_k, B_0$ are (possibly unbounded)  operators on $\mathcal{H}_E$. In particular, the operators $E_k$ are assumed to have a norm $\|E_k\| = 1$ and $g_{k}$ is a real valued quantity with units of frequency.  The Hamiltonian $H_{\text{tot}}$ is assumed to be self-adjoint in the composite Hilbert space. Furthermore, we assume without loss of generality that $g_0, g_k$ are non-zero.  We refer to $G$ as the signal generator and to the $\{E_{k}\}$ as the error generators. In order to prove Theorem.~\ref{Thm:AQED_zeno}, we first need Lemma.~\ref{lem:Error_detection}. As stated in the main text, this lemma was first proved in Ref.~\cite{zhou_zhang_etal, mann_zhou_laflamme} but we provide the proof here for the sake of completeness.  Let $G$ be a Hermitian operator and $\{E_k| k = 1,\dots, K\}$ be an orthonormal set of traceless operators such that 
\begin{equation}
\label{Apeq:hnls}
    G \notin \text{span} \{E_k, \mathds{1}\}.
\end{equation}
Define the operator $G_\perp$ as, 
    \begin{equation}
    \label{ApEq:G_perp}
        G_\perp = G - \sum_{k = 1}^K \operatorname{Tr} \lf G E_k\ri E_k - \operatorname{Tr}(G) \mathds{1}.
    \end{equation}
The operator $\sum_{k = 1}^K g_k E_k \otimes B_k$ being Hermitian implies that for every $k$, there exists a $k'$ such that $E_{k'} = E^\dagger_k$.  Thus, $G_{\perp}$ is implicitly Hermitian. Furthermore, since $G \notin \text{span} \{E_k, \mathds{1}\}$, we see that $G_\perp$ is a traceless and non-trivial. Hence, using the spectral theorem, we can write, 
\begin{equation}
    \label{ApEq:G_perp_spec_decomp}
    G_\perp = \rho_+ - \rho_-, 
\end{equation}
for positive operators $\rho_A, \rho_B$ with equal trace.
\begin{lemma*}[\textbf{\ref{lem:Error_detection}}]
\label{Aplem:Error_detection}
    Let $G$ satisfy Eq.~\eqref{Apeq:hnls}, $G_\perp$ be defined as in Eq.~\eqref{ApEq:G_perp}, and $\rho_{+}$ and $\rho_{-}$ be positive operators that satisfy Eq.~\eqref{ApEq:G_perp_spec_decomp}. Define the states $\ket{\psi_+}, \ket{\psi_-}  \in \mathcal{H}_S \otimes \mathcal{H}_S$ to be purifications of $\rho_+$ and $\rho_-$. Now, define the projection operator 
    \begin{equation}
        \Pi_C = \ketbra{\psi_+}{\psi_+} + \ketbra{\psi_-}{\psi_-}.
    \end{equation}
    The projection operator $\Pi_C$  satisfies 
    \begin{equation}
        \label{Apeq:Lemma_constants}
        \Pi_C G \Pi_C \not\propto \Pi_C, \quad \Pi_C E_k \Pi_C = \lambda_k \Pi_C ~ \forall k, \lambda_k \in \mathds{C}. 
    \end{equation}
\end{lemma*}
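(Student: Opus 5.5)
The plan is to compute everything through partial traces, using the standard identity for a purification. Write $\ket{\psi_\pm} = (\mathds{1}_A \otimes \sqrt{\rho_\pm})\ket{\Gamma}$, where $\ket{\Gamma}=\sum_i \ket{i}\ket{i}$ is the unnormalized maximally entangled vector. The operators in the lemma are understood to act as $\mathds{1}_A\otimes X$. For any operator $X$ on $\mathcal{H}_S$ we then have
\begin{equation*}
\braket{\psi_a|\mathds{1}_A\otimes X|\psi_b} = \operatorname{Tr}\lf \sqrt{\rho_a}\, X \sqrt{\rho_b}\ri, \qquad a,b\in\{+,-\}.
\end{equation*}
Since $\Pi_C$ is a rank-two projector, each claim in Eq.~\eqref{Apeq:Lemma_constants} reduces to a statement about a $2\times 2$ matrix with entries of this form.

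\textbf{Orthogonality and normalization.} First normalize so that $\operatorname{Tr}\rho_+=\operatorname{Tr}\rho_-=1$; this is a harmless rescaling of $G_\perp$ and does not affect any of the proportionality statements. Take $\rho_\pm$ to be the positive and negative parts of $G_\perp$ from the spectral theorem. Then $\rho_\pm$ have orthogonal supports, and the off-diagonal entries vanish for every $X$: $\operatorname{Tr}(\sqrt{\rho_+}X\sqrt{\rho_-}) = 0$. In particular $\braket{\psi_+|\psi_-}=0$, so $\Pi_C$ is indeed a projector. The off-diagonal parts of $\Pi_C X \Pi_C$ vanish, leaving
\begin{equation*}
\Pi_C X \Pi_C = \operatorname{Tr}(\rho_+ X)\,\ketbra{\psi_+}{\psi_+} + \operatorname{Tr}(\rho_- X)\,\ketbra{\psi_-}{\psi_-}.
\end{equation*}

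\textbf{Error generators.} For $X=E_k$, the difference of the two diagonal entries is
\begin{equation*}
\operatorname{Tr}(\rho_+E_k)-\operatorname{Tr}(\rho_-E_k) = \operatorname{Tr}(G_\perp E_k).
\end{equation*}
By the definition of $G_\perp$ in Eq.~\eqref{ApEq:G_perp}, together with orthonormality and tracelessness of the $E_k$, this equals
\begin{equation*}
\operatorname{Tr}(GE_k) - \operatorname{Tr}(GE_k) - c\operatorname{Tr}(E_k) = 0.
\end{equation*}
Hence $\Pi_C E_k\Pi_C = \lambda_k\Pi_C$ with $\lambda_k = \operatorname{Tr}(\rho_+E_k)$.

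\textbf{Signal generator.} For $X=G$, write $G = G_\perp + \sum_k \operatorname{Tr}(GE_k)E_k + c\mathds{1}$. The second and third terms have equal diagonal entries by the previous step and by the equal traces of $\rho_\pm$. The $G_\perp$ term gives the entry difference
\begin{equation*}
\operatorname{Tr}(\rho_+G_\perp)-\operatorname{Tr}(\rho_-G_\perp) = \operatorname{Tr}(\rho_+^2)+\operatorname{Tr}(\rho_-^2) > 0,
\end{equation*}
which is strictly positive because $G_\perp\neq 0$ by Eq.~\eqref{Apeq:hnls}. So $\Pi_C G\Pi_C\not\propto\Pi_C$.

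\textbf{Main obstacle.} The only delicate point is the normalization and sign convention. The definition uses $\operatorname{Tr}(G)\mathds{1}$ here versus $\operatorname{Tr}(G)\mathds{1}/d_S$ in the main text, and with the appendix version $G_\perp$ would not be traceless. I will adopt the main-text normalization $\mathds{1}/d_S$, so that $G_\perp$ is traceless; this is what guarantees $\operatorname{Tr}\rho_+=\operatorname{Tr}\rho_-$. I will also use Hermiticity of $G_\perp$, as argued in the text, so that the spectral decomposition into orthogonal positive parts applies. If the $E_k$ are only assumed to be closed under adjoint rather than individually Hermitian, orthonormality must be taken in the Hilbert–Schmidt sense $\operatorname{Tr}(E_j^\dagger E_k)=\delta_{jk}$. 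In that case the projection coefficients become $\operatorname{Tr}(E_k^\dagger G)$, which is why the lemma allows $\lambda_k\in\mathds{C}$; the same computation goes through verbatim.
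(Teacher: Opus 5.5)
Your proof is correct and follows the paper's own argument: your $(\mathds{1}_A\otimes\sqrt{\rho_\pm})\ket{\Gamma}$ is the paper's purification $\sum_j\sqrt{\lambda_j}\ket{\phi_j}\ket{\phi_j}$ up to complex conjugation of the ancilla vectors. Like the paper, you use the orthogonal supports of the positive and negative parts of $G_\perp$ to kill the off-diagonal entries of $\Pi_C(\mathds{1}_A\otimes X)\Pi_C$, and $\operatorname{Tr}(G_\perp E_k)=0$ to equalize the diagonal ones, while your explicit computation that the diagonal entries for $G$ differ by $\operatorname{Tr}(\rho_+^2)+\operatorname{Tr}(\rho_-^2)>0$ and your normalization fixes ($\operatorname{Tr}\rho_\pm=1$ so that $\Pi_C$ is a projector, $\mathds{1}/d_S$ so that $G_\perp$ is traceless) fill in steps the paper leaves implicit.
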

 \begin{proof}
     We begin noting that $\rho_A$ and $\rho_B$ can be decomposed as   
 \begin{equation}
 \label{eq:spec_decomp_eval}
     \rho_+ = \sum_j \lambda_j \ketbra{\phi_j}{\phi_j}, \quad \rho_- = \sum_j \gamma_j \ketbra{\psi_j}{\psi_j},
 \end{equation}
 where $\ket{\psi_j}$ and $\ket{\phi_j}$ satisfy $\braket{\psi_j|\phi_k} = 0$ for all $j,k$. From this decomposition, we can write the states $\ket{\psi_{\pm}}$ as
 \begin{equation}
     \ket{\psi_+} = \sum_{j} \sqrt{\lambda_j}\ket{\phi_j}\ket{\phi_j},  \text{and } \ket{\psi_-} = \sum_{j} \sqrt{\gamma_j}\ket{\psi_j}\ket{\psi_j}.
 \end{equation}
 Using Eq. \eqref{eq:spec_decomp_eval}, we immediately conclude the following, 
\begin{equation}
    \label{eq:conseq_spec}
    \operatorname{Tr}_A \lf \ket{\psi_+} \bra{\psi_-} \ri = 0,
\end{equation}
where the trace is taken over the ancilla. Using $\operatorname{Tr}(G_\perp E_k) = 0$ and $\operatorname{Tr}_A\lf \ketbra{\psi_+}{\psi_+} - \ketbra{\psi_-}{\psi_-} \ri = G_\perp$, we see that 
\begin{equation}
    \label{eq:conseq_spec_1}
    \braket{\psi_+|\mathds{1}_A \otimes E_k |\psi_+} = \braket{\psi_-|\mathds{1}_A \otimes E_k |\psi_-}.
\end{equation}
Now consider the following chain of equations, 
\begin{align}
\Pi_C  \mathds{1}_A \otimes E_k  \,\Pi_C 
&= \ketbra{\psi_+}{\psi_+} 
    \braket{\psi_+| \mathds{1}_A \otimes E_k  |\psi_+}  + \ketbra{\psi_-}{\psi_-} 
    \braket{\psi_-| \mathds{1}_A \otimes E_k  |\psi_-} \nonumber\\
&\qquad  \qquad + \ketbra{\psi_+}{\psi_-} 
    \braket{\psi_+| \mathds{1}_A \otimes E_k  |\psi_-}  + \ketbra{\psi_-}{\psi_+} 
    \braket{\psi_-| \mathds{1}_A \otimes E_k  |\psi_+}, \nonumber\\
&= \operatorname{Tr}(\rho_+ E_k) 
    \left[ \ketbra{\psi_+}{\psi_+} 
         + \ketbra{\psi_-}{\psi_-} \right].
\end{align}
In the last line in the above chain of equalities, we have used Eqs.~\eqref{eq:conseq_spec} and~\eqref{eq:conseq_spec_1}. Similarly, we can expand $G$ into components that are perpendicular and parallel to $\text{span}\{E_k, \mathds{1}\}$. Since $G_\parallel$ is orthogonal to $G_\perp$, it will be projected trivially by the operator $\Pi_C$. Hence 
\begin{equation}
    \Pi_C  \mathds{1}_A \otimes G \Pi_C  = \Pi_C  \lf   \mathds{1}_A \otimes G_\perp +  \mathds{1}_A \otimes G_\parallel  \ri \Pi_C = \Pi_C \lf  \mathds{1}_A \otimes G_\perp  + \lambda \ri\Pi_C  \not\propto \Pi_C, \qquad \lambda \in \mathds{R }. 
\end{equation}
 \end{proof}

Using Lemma~\ref{lem:Error_detection}, we will now prove Theorem~\ref{Thm:AQED_zeno}. We restate the theorem for the sake of clarity. 
\begin{thm*}[\textbf{\ref{Thm:AQED_zeno}}]
  \label{ApThm:AQED_zeno}
Let $\mathcal{{H}}_S$ be a finite-dimensional Hilbert space for the system and  $\mathcal{H}_E$ a separable Hilbert space for the environment. Consider the joint evolution on $\mathcal{{H}}_S \otimes \mathcal{{H}}_E$ generated by the Hamiltonian in Eq.~\eqref{Eq:Setup_Hamiltonian}.  
 Assume:
 \begin{enumerate}
     \item  The Hamiltonian $H_{\text{tot}}$ is non-negative,
     \item  One has access to noiseless ancillas and the ability to perform arbitrary unitary controls and measurements on the system and ancilla.
 \end{enumerate}
Under these assumptions, we can achieve a Heisenberg-limited scaling of the quantum Fisher information of the parameter $\omega_0$, i.e  
 \begin{equation}
     \lim_{t \to \infty}\lim_{\delta t \to 0} \mathcal{F}(t) = \Theta(t^2) ,
 \end{equation}
provided that the signal generator $G$ satisfies the not-in-span condition 
   \begin{equation}
       G \notin \text{span}_{\mathbb{H}}\lf E_k , \mathds{1}\ri ~ \mathrm{for~all} ~~ k \in \{1, \dots, K\},
       \label{eq:span_cond_ap}
   \end{equation}
   and that the symmetric operator $P_0 H_{tot} P_0$ is essentially adjoint. Here $P_0 = \Pi_C \otimes 1_B$ is a projector, where $\Pi_C$ is constructed as per Lemma \ref{lem:Error_detection}.
\end{thm*}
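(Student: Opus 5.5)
The plan is to combine Lemma~\ref{lem:Error_detection} with a rigorous Zeno limit theorem for unbounded, semibounded Hamiltonians, and then to compute the QFI of the resulting effectively unitary evolution.

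\textbf{Step 1: the code.} Take $\Pi_C$ from Lemma~\ref{lem:Error_detection} on $\mathcal{H}_A\otimes\mathcal{H}_S$ and set $P_0=\Pi_C\otimes\mathds{1}_E$. The lemma, together with the projection identity $\Pi_C(\mathds{1}_A\otimes G)\Pi_C=\Pi_C(\mathds{1}_A\otimes G_\perp+\lambda)\Pi_C$ from its proof, gives on the dense domain $P_0 D(H_{\text{tot}})$
\begin{equation}
P_0 H_{\text{tot}} P_0 = \omega_0\,G_{\text{eff}}\otimes\mathds{1}_E + P_0\Big(\mathds{1}\otimes \big(\textstyle\sum_k g_k\lambda_k B_k + g_0 B_0 + \alpha' \big)\Big)P_0 ,
\end{equation}
where $\alpha'$ absorbs $\alpha$ and $\omega_0\lambda$. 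We write the second term as $\Pi_C\otimes\tilde B$. Since $\Pi_C$ has rank two, the restriction of $P_0$ to its range is $\mathbb{C}^2\otimes\mathcal{H}_E$. The symmetric operator is therefore $\omega_0 G_{\text{eff}}\otimes\mathds{1}+\mathds{1}\otimes\tilde B$ with $G_{\text{eff}}$ a $2\times2$ Hermitian matrix. By the essential self-adjointness hypothesis its closure $H_Z$ is self-adjoint, and $\tilde B$ (also essentially self-adjoint on the relevant core) commutes with the bounded first term.

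\textbf{Step 2: the Zeno limit.} Prepare $\rho_0=\ketbra{\psi_0}{\psi_0}\otimes\rho_B$ with $\ket{\psi_0}=(\ket{\psi_+}+\ket{\psi_-})/\sqrt2$ in the codespace. Then measure $\{\Pi_C,\mathds{1}-\Pi_C\}$ after every interval $\delta t$. The key step is to show
\begin{equation}
\text{s-}\lim_{\delta t\to0}\big[P_0 e^{-i\delta t H_{\text{tot}}}P_0\big]^{t/\delta t} = e^{-itH_Z}P_0 .
\end{equation}
For this I would invoke the Exner–Ichinose product formula~\cite{exner_ichios}. It requires $H_{\text{tot}}\ge0$ (assumption 1) and that the form $\|H_{\text{tot}}^{1/2}P_0\cdot\|^2$ defines the Zeno generator. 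Essential self-adjointness of $P_0H_{\text{tot}}P_0$ is what guarantees this form generator coincides with $H_Z$. Establishing this identification is the main obstacle: the Exner–Ichinose result is stated in $L^2$ or strong-resolvent convergence and in terms of forms. I will therefore check that the form closure of $P_0H_{\text{tot}}P_0$ equals its operator closure. This holds because a nonnegative essentially self-adjoint operator has a unique self-adjoint extension, and the Friedrichs extension is one such extension. Strong convergence of the unitary groups then gives the conclusion. A corollary is that the survival probability $\|[P_0U P_0]^{n}\psi\|^2\to1$, so error outcomes occur with vanishing probability.

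\textbf{Step 3: the QFI.} In the limit the joint state is $e^{-itH_Z}\rho_0e^{itH_Z}$. Because the two summands of $H_Z$ commute, the reduced state on system plus ancilla is $e^{-it\omega_0G_{\text{eff}}}\ketbra{\psi_0}{\psi_0}e^{it\omega_0G_{\text{eff}}}$ with no $\omega_0$-dependence from the bath. The QFI of this unitary family is $4t^2\operatorname{Var}_{\psi_0}(G_{\text{eff}})$. The ancilla-tracing argument of Lemma~\ref{lem:Error_detection} gives $\braket{\psi_\pm|\mathds{1}_A\otimes G_\perp|\psi_\pm}=\pm\operatorname{Tr}\rho_\pm$ and vanishing off-diagonal entries. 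Hence $G_{\text{eff}}$ has distinct eigenvalues $\pm\operatorname{Tr}\rho_+$ with eigenvectors $\ket{\psi_\pm}$, so the variance equals $(\operatorname{Tr}\rho_+)^2>0$. Continuity of the QFI under the strong convergence of Step 2 holds for pure states with the derivative handled via the $\omega_0$-analytic dependence of the bounded part. This gives $\lim_{\delta t\to0}\mathcal{F}(t)=4t^2(\operatorname{Tr}\rho_+)^2=\Theta(t^2)$.
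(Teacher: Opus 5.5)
Your proposal follows the paper's proof essentially step for step. The common route is:
the code of Lemma~\ref{lem:Error_detection};
the Exner--Ichinose product formula for the nonnegative operator $\tilde H=\mathds{1}_A\otimes H_{\text{tot}}$;
essential self-adjointness to identify the Zeno generator with $\overline{P_0\tilde HP_0}$;
the decoupled form $\omega_0G_{\mathrm{eff}}\otimes\mathds{1}+\Pi_C\otimes\tilde B$;
and unitary reduced dynamics on system plus ancilla.

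The one computation you add beyond the paper is wrong. Since $\operatorname{Tr}_A\ketbra{\psi_\pm}{\psi_\pm}=\rho_\pm$ and $\rho_+\rho_-=0$,
\begin{equation*}
\langle\psi_+|\mathds{1}_A\otimes G_\perp|\psi_+\rangle=\operatorname{Tr}(\rho_+G_\perp)=\operatorname{Tr}\rho_+^2,\qquad \langle\psi_-|\mathds{1}_A\otimes G_\perp|\psi_-\rangle=-\operatorname{Tr}\rho_-^2,
\end{equation*}
not $\pm\operatorname{Tr}\rho_\pm$. For $\Pi_C$ to be a projector and your $\ket{\psi_0}$ to be normalized you need $\operatorname{Tr}\rho_\pm=1$. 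Your formula would therefore always give eigenvalues $\pm1$ and QFI $4t^2$. Already $G_\perp=\mathrm{diag}(\tfrac12,\tfrac12,-1)$ gives eigenvalues $\tfrac12$ and $-1$ instead. The correct value is $\mathcal{F}=t^2\bigl(\operatorname{Tr}\rho_+^2+\operatorname{Tr}\rho_-^2\bigr)^2$. The two eigenvalues still have opposite signs, so $\Theta(t^2)$ is unaffected, and that is all the paper uses (through $\Pi_CG\Pi_C\not\propto\Pi_C$).

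There are three smaller points.

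\textbf{Identification of the generator.} In Step~2 your Friedrichs argument shows that $\overline{P_0\tilde HP_0}$ is the operator of the \emph{closure} of its quadratic form. The Exner--Ichinose generator $H_Z=(\tilde H^{1/2}P_0)^\dagger(\tilde H^{1/2}P_0)$, however, comes from the form on all of $D(\tilde H^{1/2}P_0)$, which is not a priori that closure. The paper's link is more direct. For $P_0x\in D(\tilde H)$ and $y\in D(\tilde H^{1/2}P_0)$ one has $\langle\tilde H^{1/2}P_0y,\tilde H^{1/2}P_0x\rangle=\langle y,P_0\tilde HP_0x\rangle$. Hence $H_Z$ is itself a self-adjoint extension of $P_0\tilde HP_0$, and uniqueness gives $H_Z=\overline{P_0\tilde HP_0}$.

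\textbf{Density and domain.} You omit the density hypothesis of Exner--Ichinose (the paper's Proposition~\ref{Apeq:Density_cond}). It follows from density of $D(\tilde H)\cap P_0\mathcal{H}$ in $P_0\mathcal{H}$. That set is also the correct domain to use in Step~1, rather than $P_0D(H_{\text{tot}})$, which need not lie in $D(\tilde H)$.

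\textbf{Mode of convergence and the QFI limit.} Exner--Ichinose give convergence in $L^2_{\mathrm{loc}}$ in $t$. Your appeal to strong convergence of unitary groups does not upgrade this, since the approximants $[P_0e^{-i\delta t\tilde H}P_0]^{t/\delta t}$ are not groups. Likewise, continuity of the QFI does not follow from convergence of states. The finite-$\delta t$ output is a mixture over measurement records, and its $\omega_0$-derivative is not controlled. The paper is equally informal here: it reads $\lim_{\delta t\to0}\mathcal F(t)$ as the QFI of the limiting dynamics. You should either do the same explicitly or supply a real argument.
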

 \begin{proof}
 We will define our codespace as the two-dimensional subspace of $\mathcal{H}_S \otimes \mathcal{H}_S$ spanned by $\{\ket{\psi_A}, \ket{\psi_B}\}$,defined in Lemma~\ref{lem:Error_detection}. This is because projecting into the codespace trivializes the errors without trivializing the signal. In the following, we will make repeated two-outcome measurements of $\cal{M}$  on the system and ancilla, where $\cal{M}$ is defined by the projection operators $\{\Pi_C, \Pi_{\tilde{C}}\}$,. The operators $\{\Pi_C, \Pi_{\tilde{C}}\}$ are defined as the projections into the codespace and its orthogonal subspace, respectively. Since we keep track of the environment degrees of freedom, the effective measurement on the total Hilbert space is given by the operators
\begin{equation}
    P_0 = \Pi_C \otimes \mathds{1}_B, \quad P_1 = \Pi_{\tilde{C}} \otimes \mathds{1}_B.
\end{equation}

Note that by construction, $[P_0, H_{tot}] \neq 0.$. Let $t$ be the total evolution time. We start with a state $\rho_S \otimes \ketbra{0}{0}$ and let it evolve under the total Hamiltonian. The initial system state is assumed to be in the codespace, $\Pi_C \rho_S \Pi_C = \rho_S$. We will make a series of instantaneous measurements and recovery operations at times $\left\{ t/n, 2t/n, \dots, (n-1)t/n, t\right\}$, where the recovery operation is given as 
\begin{equation}
    \label{Eq:recovery}
    R(\rho) = P_0 \rho P_0 + R_E({P_{1}} \rho {P_{1}}), 
\end{equation}
where $R_E(\rho) = \operatorname{Tr}(\rho) \rho_S \otimes \ketbra{0}{0}$ is the reset to the initial state. Note that this recovery operator assumes the ability to reset the entire Hilbert space. Even though we have a strict recovery operation, $R_E$, we will see that the exact structure of this recovery operation is not important for our purpose. Hence, this structure is only for mathematical convenience. However, from a physical point of view, this can be viewed as throwing away the experiment and starting again. Define the unitary $U$ on $\mathcal{H}_S \otimes \mathcal{H}_S \otimes \mathcal{H}_E$ as 
\begin{equation}
    U(t) = \exp \lf -i t \mathds{1}_A \otimes H_{\text{tot}} \ri.
\end{equation}
Denoting by $\rho(m,t)$ the state of the system after $m$ non-selective measurements and defining the superoperator $\mathcal{U}(\rho) = U(t/n) \rho U^\dagger(t/n)$, we get the expression 
\begin{equation}
    \rho(m,t) = (R\circ\mathcal{U})^m (\rho_s \otimes \ketbra{0}{0}).
 \end{equation}
Substituting in Eq.~\eqref{Eq:recovery}, we get the following expression for the final state after $n$ measurements, 
\begin{align}
    &\rho(n,t) = \hat{T}_n(t) \rho_s \otimes \ketbra{0}{0} \hat{T}^\dagger_n(t)  + \mathcal{D} (\rho_s \otimes \ketbra{0}0{}), 
\end{align}
where $\mathcal{D}(\odot)$ denotes the channel that keeps track of trajectories where the state was projected onto the error space at least at one time instance. The exact structure of $\mathcal{D}$ depends on the recovery $R_E$, but it is not too important, as mentioned before. The crucial part to note is that the term $\hat{T}_n(t) \rho_s \otimes \ketbra{0}{0} \hat{T}^\dagger_n(t)$ is the trajectory where the state was projected onto the codespace at all times. The operator $T_n(t)$ is thus,  
\begin{equation}
    T_n\lf t \ri := \lf P_0 U\lf \frac{t}{n} \ri P_0\ri^n.
\end{equation}
Let $p_n(t)$ denote the probability of the trajectory where the state was never projected out of the codespace 
\begin{equation}
    p_n(t) = \operatorname{Tr} \lf T^\dagger_n\lf n;t \ri T_n\lf n;t \ri \rho_s \otimes \ketbra{0}{0}\ri.
\end{equation}
We will now use the Zeno effect \cite{misra_sudarshan} to show that $\lim_{n \to \infty} p_{n(t)}  = 1$. This means that in the limit of infinite measurements, the system will always remain in the codespace. Furthermore, we will show that the resulting dynamics in the codespace is generated by a Hamiltonian that is separable between the system and the environment. This results in a pure unitary evolution that leads to the Heisenberg limit.

Since $H_{\text{tot}}$ is a non-negative self-adjoint operator, it is bounded from below. Assume that the following limits exist 
\begin{align}
    \lim_{n \to \infty} T_n(n,t) &= T(t), ~~ \forall t \in \mathds{R}, \label{Eq:Limits_a} \\
    \lim_{t \to 0} T_n(n,t) &= P_{0}, ~~ \forall n \in \mathds{N} \label{Eq:Limits_b}.
\end{align}

Then, we satisfy all the conditions to apply Theorem 1 in Ref.~\cite{misra_sudarshan} and hence $T(t)$ forms a strongly continuous semigroup with the functional equation $T(t)T(s) = T(t+s)$ for all $t,s \in \mathds{R}$ and $T^\dagger(t) = T(-t)$. Combining these with Eq.~\eqref{Eq:Limits_b}, we get 
\begin{equation}
    \lim_{n \to \infty} p_{n(t)}  = \operatorname{Tr} \lf  T(t)^\dagger  T(t) \rho_s \otimes \ketbra{0}{0} \ri  = \operatorname{Tr} \lf T(0) \rho_s \otimes \ketbra{0}{0} \ri  = 1. 
\end{equation}
We still need to prove when the limits in Eqs.~\eqref{Eq:Limits_a} and~\eqref{Eq:Limits_b} exist. If $H_{\text{tot}}$ is bounded, then we can prove that the limits always exist \cite{facchi_pascazio}. Define $\tilde{H} = \mathds{1}_A \otimes H_{\text{tot}}$ for notational convenience and note that it is bounded. This is proven easily as 
\begin{align}
    \lim_{n \to \infty} T_n(t) &=  \lim_{n \to \infty} \lf \sum_{m = 0}^{\infty}\frac{(-it)^m}{ n^m m!} P_0 \tilde{H}^m P_0  \ri^{n},  \label{eq:lim_1}\\
    &= \lim_{n \to \infty} \lf P_0  - \frac{it}{n} P_0\tilde{H} P_0 \ri^n\nonumber  \\
    &\quad \quad + O\lf \lf \frac{t^2}{n} \ri  \|\tilde{H}\|^{2} e^{\|H\| t}\ri  \label{eq:lim_2}\\
    &= \lim_{n \to \infty} \lf \mathds{1}  - \frac{it}{n} P_0\tilde{H} P_0  \ri^n P_0  \label{eq:lim_3}\\
    &= \exp{\lf -it P_0 \tilde{H} P_0 \ri} P_0 \label{eq:lim_4}.   
\end{align}
In line \eqref{eq:lim_3} we have used the property of projectors $P_0^n = P_0$ and in line \eqref{eq:lim_4} we have used the Euler limit for the exponential. Using the same lines of reasoning, we can show that Eq.~\eqref{Eq:Limits_b} is also satisfied for bounded Hamiltonians. 

If $H_{\text{tot}}$ is unbounded, the problem is not so simple. For unbounded Hamiltonians, the operator $P_0H_{\text{tot}}P_0$ acting on $\mathcal{H}_S \otimes \mathcal{H}_S \otimes \mathcal{H}_E$ need not be self-adjoint, and hence a more careful treatment is warranted. For convenience, we will use $\mathcal{H}$ to denote the combined Hilbert space $\mathcal{H}_S \otimes \mathcal{H}_S \otimes \mathcal{H}_E$ henceforth in this proof.  First, we prove the following proposition, 
\begin{prop}
\label{Apeq:Density_cond}
    The operator $\tilde{H}^{1/2} P_0$ is densely defined. 
\end{prop}
\begin{proof}
    To prove this proposition we just need to prove $D(\tilde{H}^{1/2}P_0)$ is dense in $\mathcal{H}$. We will show that this is a consequence of the essential self-adjoint-ness of $P_0 \tilde{H} P_0$. Since $P_0 \tilde{H} P_0$ is essentially self-adjoint, the domain $D(P_0 \tilde{H} P_0)$ is dense in the Hilbert space $\mathcal{H}$. Consider this domain, 
    \begin{align}
        D(P_0 \tilde{H} P_0) &= D( \tilde{H} P_0), \\
        &= D(\tilde{H}) \cap P_0 \mathcal{H} \oplus P_1 \mathcal{H}
    \end{align}   
    where $P_0 \mathcal{H} := \left\{ \ket{\psi} \in \mathcal{H}| P_0 \ket{\psi} = \ket{\psi}\right\} $ denotes the projected Hilbert space. The first line in the above equation is a consequence of the projector operators being bounded and hence defined everywhere.  Thus, $D(P_0 \tilde{H} P_0)$ being dense in $\mathcal{H}$ implies that $D(\tilde{H}) \cap P_0 \mathcal{H}$ is dense in $P_0 \mathcal{H}$. Since $\tilde{H}$ is self-adjoint, we have $D(\tilde{H}) \subset D(\tilde{H}^{1/2})$. Combining this with the above equation, we immediately see that $D(\tilde{H}^{1/2}) \cap P_0 \mathcal{H}$ is dense in $P_0 \mathcal{H}$ which implies that $D(\tilde{H}^{1/2}P_0)$ is dense in $\mathcal{H}$.
\end{proof}

Proposition~\ref{Apeq:Density_cond} implies that the operator $H_z$, defined as, 
    \begin{equation}
    H_z := \lf \tilde{H}^{\frac{1}{2}} P_0 \ri^\dagger \lf \tilde{H}^{\frac{1}{2}} P_0 \ri ,
\end{equation}
associated with the quadratic form $x \to \|\tilde{H}^{1/2} P_0 x\|^2$ with form domain $D(\tilde{H}^{1/2}P_0)$ is self-adjoint. This implies that  we satisfy the conditions in  Theorem 1.1 of Ref.~\cite{exner_ichios} (and its associated correction paper \cite{exner_correction_2021}), which is used here to show that, 
\begin{equation}
    \lim_{n \to \infty} T_n(t) \ket{\psi} = \exp \lf -i t H_z \ri P_0 \ket{\psi}
\end{equation}
We now analyze the action of $H_z$ on $P_0 \mathcal{H}$. As noted in Ref.\cite{exner_ichios}, the Hamiltonian $H_Z$ is the appropriate self-adjoint extension of the symmetric operator $P_0 \tilde{H} P_0$. However, since $P_0 \tilde{H} P_0$ is essentially self-adjoint, we can immediately show that $H_Z = \overline{P_0 \tilde{H} P_0}$. Writing this as an equality at the level of unitaries, 
\begin{equation}
    \exp \lf -i t H_z \ri P_0 \ket{\psi} = \exp \lf -i t\overline{P_0 \tilde{H} P_0} \ri P_0 \ket{\psi}.
\end{equation}
The final piece of the proof is to show that $\overline{P_0 \tilde{H} P_0}$ is decoupled. Note that using Lemma \ref{Aplem:Error_detection},  we can show that 
\begin{equation}
   P_0 \tilde{H} P_0 = \omega_0 \Pi_C G_\perp \Pi_C \otimes \mathds{1}_B + \Pi_C \otimes \sum_{k = 0}^{K} \lambda_k B_k. 
\end{equation}
where we have defined $B_0 \equiv \mathds{1}_B$ as containing terms proportional to identity.  From the properties of closure, we can immediately conclude that, $ \overline{P_0HP_0}$ is also decoupled, that is
\begin{equation}
    \overline{P_0 \tilde{H} P_0}  =\omega_0 \Pi_C G_\perp \Pi_C \otimes \mathds{1}_B + \Pi_C \otimes \overline{\sum_k \lambda_k B_k}.
\end{equation}
This can be shown in many ways, but the cleanest way to show this is to use the fact that for any densely defined closeable operator $T$, $\overline{T}^\dagger = T^\dagger$ (see Theorem 1.8 in Ref.~\cite{schmudgen_unbounded_2012}). Applying this to $\overline{P_0HP_0}$ and using the fact that it is self-adjoint, we get, 
\begin{equation}
    \overline{P_0 \tilde{H} P_0} = \lf P_0HP_0 \ri^\dagger = \lf \Pi_C G_\perp \Pi_C \otimes \mathds{1}_B + \Pi_C \otimes \sum_k \lambda_k B_k \ri^\dagger = \Pi_C G_\perp \Pi_C \otimes \mathds{1}_B + \Pi_C \otimes \overline{\sum_k \lambda_k B_k}.
\end{equation}
In deriving the last equality of the above equation, we have used the fact that $ \Pi_C G_\perp \Pi_C \otimes \mathds{1}_B$ is bounded along with  (Proposition 1.6 in Ref.~\cite{schmudgen_unbounded_2012}), 
\begin{prop}
    For a densely defined $T$ and bounded $S$ 
    \begin{equation}
        \lf S + T \ri^\dagger = S^\dagger + T^\dagger.
    \end{equation}
\end{prop}
 The final form is arrived at by noting that $\overline{P_0 \tilde{H} P_0}$ being self-adjoint imposes that $\overline{\sum_k \lambda_k B_k}$ must also be self-adjoint. From Stone's theorem, for any $\ket{\psi_0} = \ket{\psi}_{SA} \otimes \ket{\psi}_B \in D(\overline{P_0 \tilde{H} P_0}) \cap P_0 \mathcal{H}$, we have 
 \begin{equation}
     \ket{\psi(t)} := \exp \lf -i t \overline{P_0HP_0}\ri \ket{\psi_0} = \exp \lf -it\Pi_C G_\perp \Pi_C \ri  \ket{\psi}_{SA} \otimes \exp \lf -i t  \overline{\sum_k \lambda_k B_k}\ri \ket{\psi}_{B}. 
 \end{equation}
 Thus, in the limit of infinitely fast measurements the evolution of the system is decoupled from its environment.  Tracing out the environment degrees of freedom, we get a unitary evolution on the system-ancilla where the Fisher information is known to scale as $\Theta(t^2)$. Thus, 
\begin{equation}
     \lim_{\delta t \to 0} \mathcal{F}(t) = \Theta(t^2).
\end{equation}
completing the proof. 
\end{proof}

\section{Proof of Theorem 2}
\label{ap:Proof_theorem_2}
\label{ap:Recovery_operation}
Lemma \ref{lem:Error_detection} essentially gives us sufficient conditions for creating an error detecting code, which, coupled with the Zeno effect, can be used to recover Heisenberg-limited scaling. In this section, we prove that the condition is not sufficient for constructing a recovery channel, and hence standard error correction cannot be performed. This underscores the importance of the Zeno-limited measurement.  

\begin{thm*}[\textbf{\ref{Thm:Recovery_cond}}]
\label{ApThm:Recovery_cond}
    Consider the setup and joint Hamiltonian as in Eq.~\eqref{ApEq:Setup_Hamiltonian}. Define the channel $\mathcal{E}$ as
    \begin{equation}
        \mathcal{E}_{\delta t}(\rho) := \operatorname{Tr}_B  \lf \mathds{1}_A \otimes U_{\text{tot}}(\delta t) \rho \otimes \rho_B \mathds{1}_A \otimes U^\dagger_{\text{tot}}(\delta t)\ri,
    \end{equation}
    where $U_{\text{tot}}(\delta t) = \exp(-i \delta t H_{\text{tot}})$ is the unitary generated by $H_{\text{tot}}$. Suppose there exists a positive constant $C >0$, such that the $n$-point bath correlation functions satisfy 
    \begin{equation}
        \left|\operatorname{Tr}_B\lf B_{i_1} \dots B_{i_n} \rho_B \ri \right| \leq C^n.
    \end{equation}
    Then the code constructed as per Lemma \ref{lem:Error_detection} for recovering Heisenberg-limited scaling admits a recovery operation $\mathcal{R}_{\tilde{\mathcal{C}}}$ such that
    \begin{equation}
        \mathcal{R}_\mathcal{\tilde{C}} \lf \Pi_{\tilde{C}} \mathcal{E}(\rho) \Pi_{\tilde{C}} \ri \propto \rho + O(\delta t^3), \qquad \text{for} ~ \rho = \Pi_C \rho \Pi_C, 
    \end{equation}
     if 
    \begin{equation}
         \Pi_C E_i E_j \Pi_C \propto \Pi_C, \quad  \Pi_C \mathds{1}_A \otimes E_{j}  \Pi_{\tilde{C}}  \mathds{1}_A \otimes G_\perp \Pi_C \propto \Pi_C, \quad \text{and } \quad \Pi_C \mathds{1}_A \otimes G_\perp  \Pi_{\tilde{C}}  \mathds{1}_A \otimes G_\perp \Pi_C \propto \Pi_C,
    \end{equation}
    with the first two proportionalities holding for all $j$ and $k$.
\end{thm*}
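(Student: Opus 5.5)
The plan is to Taylor-expand the channel to second order in $\delta t$, using the bounded correlation assumption to control the remainder, and then read off the Knill--Laflamme conditions for the error-space channel $\tilde{\mathcal{C}}$. Write $\tilde H = \mathds{1}_A\otimes H_{\text{tot}}$ and expand
\[
U(\delta t) = \mathds{1} - i\delta t \tilde H - \tfrac{\delta t^2}{2}\tilde H^2 + O(\delta t^3).
\]
The bound $|\operatorname{Tr}(B_{i_1}\cdots B_{i_n}\rho_B)|\le C^n$ ensures that, after the partial trace over $E$, the $n$-th order term is bounded by $(c\,\delta t)^n/n!$ for some $c$ depending on $C$, $\omega_0$, the $g_k$ and $\alpha$. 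The series for $\mathcal{E}_{\delta t}$ therefore converges and the tail is a genuine $O(\delta t^3)$.

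For $\rho=\Pi_C\rho\Pi_C$, project with $\Pi_{\tilde C}$ on both sides. The zeroth-order term vanishes. The first-order terms $-i\delta t\,\Pi_{\tilde C}\tilde H\rho\Pi_{\tilde C}$ and its conjugate also vanish, since $\rho\,\Pi_{\tilde C}=0$. The surviving $O(\delta t^2)$ contribution is
\[
\delta t^2\,\operatorname{Tr}_E\!\left[\Pi_{\tilde C}\tilde H(\rho\otimes\rho_B)\tilde H\Pi_{\tilde C}\right].
\]
Expanding $\tilde H$ in the operator basis $\{G_\perp, E_k, \mathds{1}\}$ (the parallel part of $G$ is absorbed into the $E_k$ and the identity) writes this as $\delta t^2\sum_{a,b}c_{ab}\,\Pi_{\tilde C}X_a\rho X_b\Pi_{\tilde C}$. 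Here $X_a\in\{\omega_0 G_\perp, g_k E_k\}$ and $c_{ab}$ is the bath two-point function $\operatorname{Tr}(B_bB_a\rho_B)$, with $B_{G}\equiv\mathds{1}$. The identity and $B_0$ terms drop out because $\Pi_{\tilde C}\rho=0$.

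Diagonalizing the positive matrix $c_{ab}$ gives effective Kraus operators $\tilde K_\mu=\sum_a u_{\mu a}\Pi_{\tilde C}X_a\Pi_C$. A recovery $\mathcal{R}_{\tilde{\mathcal C}}$ with $\mathcal{R}_{\tilde{\mathcal C}}(\tilde{\mathcal C}(\rho))\propto\rho+O(\delta t^3)$ then exists if the Knill--Laflamme condition $\Pi_C\tilde K_\mu^\dagger\tilde K_\nu\Pi_C\propto\Pi_C$ holds. I will establish the sufficiency direction that the appendix restatement asserts, and remark that necessity holds for generic bath correlations, when $c_{ab}$ has full rank. Sufficiency follows by checking that each pairwise product $\Pi_C X_a\Pi_{\tilde C}X_b\Pi_C$ is proportional to $\Pi_C$. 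For $X_a=X_b=G_\perp$ this is exactly the third hypothesis, and for mixed $E_j,G_\perp$ pairs it is the second hypothesis together with its adjoint. For $E_j,E_k$ pairs, write
\[
\Pi_C E_j\Pi_{\tilde C}E_k\Pi_C = \Pi_C E_jE_k\Pi_C - \Pi_C E_j\Pi_C E_k\Pi_C = \Pi_C E_jE_k\Pi_C - \lambda_j\lambda_k\Pi_C,
\]
using Lemma~\ref{lem:Error_detection}, so the first hypothesis suffices. Every linear combination $\tilde K_\mu^\dagger\tilde K_\nu$ then inherits the proportionality, and the standard Knill--Laflamme construction (polar decomposition of $\tilde K_\mu\Pi_C$, followed by the isometry back to $\Pi_C$) yields $\mathcal{R}_{\tilde{\mathcal C}}$. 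The normalization is fixed by the trace, which is why the output is only $\propto\rho$.

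The main obstacle is the remainder control. The terms of order $\delta t^3$ and higher contain products of unbounded $B$'s sandwiched between system operators, and one must check that the resulting error operator on $\Pi_{\tilde C}$ stays $O(\delta t^3)$ even after applying the recovery map. Since $\mathcal{R}_{\tilde{\mathcal C}}$ is a CPTP map, hence trace-norm contractive, it suffices to bound the trace norm of the remainder. I would attempt this via the expansion $\operatorname{Tr}_E[\tilde H^m(\rho\otimes\rho_B)\tilde H^n]$, a finite sum of at most $(K+3)^{m+n}$ system operators of norm $\le\max(|\omega_0|\|G\|,|g_k|,|g_0|,|\alpha|)^{m+n}$, each times a correlation bounded by $C^{m+n}$. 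Summing over $m+n\ge3$ with the $1/m!n!$ weights gives a convergent geometric-exponential tail of order $\delta t^3$. A secondary subtlety is that the $O(\delta t^2)$ error-space weight is itself small, so $\propto\rho+O(\delta t^3)$ must be interpreted before normalization. I would state the result in that unnormalized form, consistent with how it is used in Theorem~\ref{Thm:Error_Order_AQEC}.
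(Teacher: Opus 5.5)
Your proposal is correct and follows essentially the paper's route: a second-order expansion of $\mathcal{E}_{\delta t}$ with the tail controlled by the bounded-correlation assumption, then projection onto the error space, then Kraus operators obtained by diagonalizing the bath two-point matrix, and finally reduction of the Knill--Laflamme conditions to the three hypotheses using $\Pi_C E_k \Pi_C = \lambda_k \Pi_C$. The differences are cosmetic: the paper subtracts the means $\langle B_k\rangle$ and treats the coherent part $\omega_0 G_\perp + Z_{\text{eff}}$ as a separate Kraus operator $N_{K+1}$, whereas you fold everything into one positive two-point matrix, and you rightly claim only sufficiency where the paper phrases each reduction as ``if and only if''.
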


\begin{proof} 
Suppose we start with a state of the form $\rho_S \otimes \rho_B$, where $\rho_S$ is in the codespace, i.e., $\Pi_C\rho_S \Pi_C = \rho_S$. Define the operators 
\begin{equation}
    \tilde{B_k} = B_k - \mean{B_k} \mathds{1}_B ,
\end{equation}
where $\mean{B_k} := \operatorname{Tr} \lf \rho_B B_k \ri$, which is finite due to the bounded correlation function property. We define these rescaled operators because we want the property $\operatorname{Tr}_B(\tilde{B}_k \rho_B) = 0$ in the proof below. Define $E_0 = \mathds{1}_S$. 
Then, we can write the total Hamiltonian as 
\begin{equation}
    H_{\text{tot}} = \omega_0 G_\perp  \otimes \mathds{1}_B + \lf \omega_0 G_\parallel + \sum_{k = 0}^K g_{k} \mean{B_k} E_k \ri \otimes \mathds{1}_B + \sum_{k = 0}^K g_{k} E_k \otimes \tilde{B_k},  
\end{equation}
where $G = G_\perp + G_\parallel$, where the perpendicular component is defined as 
\begin{equation}
  G_\perp = G - \sum_{m = 0}^{K} \operatorname{Tr}(G E_m ) E_m.  
\end{equation}
Note that the matrix $\lambda = [\lambda_{ij}]$, where $\lambda_{ij} = g_i g_j \operatorname{Tr}_B \lf  \tilde{B}^\dagger_i \tilde{B}_j \rho_B \ri$ is Hermitian (as stated before, the $g_{k}$ are real valued quantities with units of frequency) and hence diagonalizable by a unitary matrix $u = [u_{ij}]$ ($u^\dagger \lambda u$ is diagonal). Define the operators 
\begin{equation}
    F_i = \sum_{j = 0}^{K} u^*_{ij} E_j.
\end{equation}
Then, we can redefine the coupling part of the Hamiltonian as 
\begin{equation}
    H_{int} = \sum_{k = 0}^K g_k E_k \otimes \tilde{B}_k  = \sum_{j = 0}^K F_j \otimes \lf \sum_{k=0}^{K} u_{kj} g_{k}\tilde{B}_k \ri.
\end{equation}
 The operators $D_k=  \sum_j u^*_{jk} g_{j} \tilde{B}_j $ satisfy
 \begin{equation}
 \label{Eq:Condition_triv}
   \operatorname{Tr}_B \lf D^\dagger_i D_j \rho_B \ri = \operatorname{Tr}_B \lf D_i^\dagger D_i  \rho_B \ri\delta_{ij}.
 \end{equation}
Furthermore, note that 
 \begin{equation}
     \label{ApEq:Bounded_correlation_tilde}
     \left|\operatorname{Tr}_B\lf \tilde{B}_{i_1} \dots \tilde{B}_{i_n} \rho_B \ri \right| =   \left|\operatorname{Tr}_B \Bigg[ (B_{i_1} - \mean{B_{i_1}}) \dots (B_{i_n} - \mean{B_{i_n}}) \rho_B \Bigg] \right| \leq (2 C)^n.
 \end{equation}
 This can be easily seen, as the second equation above can be expanded as a sum of $2^n$ terms, each bounded by $C^n$, from the bounded-correlation property of the $B_i$. Eq. \eqref{ApEq:Bounded_correlation_tilde} can in turn be used to show that the operators $D_i$ also satisfy the bounded correlation property as
 \begin{equation}
    \label{ApEq:Bounded_correlation_D}
    \left|\operatorname{Tr}_B\lf D_{i_1} \dots D_{i_n} \rho_B \ri \right| \leq \sum_{i_1, \dots i_n  = 1}^K \left |g _{i_1} \right| \dots \left| g_{i_n} \right| \left|\operatorname{Tr}_B\lf \tilde{B}_{i_1} \dots \tilde{B}_{i_n} \rho_B \ri \right| \leq \lf 2C K\max_{i} g_i \ri^n, 
 \end{equation}
 where in the first inequality we have used the fact that $|u_{ij}| \leq 1$, as $u$ is a unitary matrix and the triangle inequality. Define the operator $Z_{\text{eff}}$ as,
\begin{equation}
     \label{ApEq:ZS_defn}
     Z_{\text{eff}} :=  \omega_0 G_\parallel + \sum_{k = 0}^K g_{k} \mean{B_k} E_k =  \sum_{k = 0}^K \lf g_{k}\mean{B_k} + \omega_0 \operatorname{Tr}(G E_k) \ri E_k
\end{equation}
In the way described by Lemma~\ref{lem:Error_detection}, we use an additional noiseless ancilla to construct a codespace based on $G_\perp$. Note that the code has the properties
 \begin{equation}
 \label{apEq:Props_CS}
     \Pi_C \mathds{1}_A \otimes G_\perp \Pi_C = G_{\text{eff}}, \quad \Pi_C \mathds{1}_A \otimes E_m \Pi_C = \xi_{m} \Pi_C , \quad \Pi_C \mathds{1}_A \otimes Z_{\text{eff}} \Pi_C = \lambda \Pi_C.
 \end{equation}
Further, let $0 < \Lambda < \infty $ be a positive number such that 
\begin{equation}
    |g_m| \|E_m\|, |\omega_0|\|G_\perp\|, \|Z_{\text{eff}}\| \leq \Lambda,
\end{equation}
which is guaranteed to exist as all the operators act on a finite-dimensional Hilbert space and as a consequence are bounded. Note that the constant $\Lambda$ has units of frequency. Under the assumption that there exists a positive number $C >0$ such that the bath correlation functions satisfy 
\begin{equation}
    \left| \operatorname{Tr} \lf B_{i_1} \dots B_{i_n} \rho_B\ri\right| \leq C^n,
\end{equation}
the Dyson series expansion for $\mathcal{E}_{t}$ converges absolutely for all $t$. This is well-known in the literature~\cite{breuer_book}, but for the sake of completeness, we sketch a proof for the above statement. Consider a Hamiltonian of the form $H = \sum_{r = 1}^{R} P_r \otimes Q_r$. Let $\mathcal{L}(\odot) = -i [H,\odot]$ be the associated Liouvillian. Consider the following chain of inequalities, 
\begin{align}
    \label{ApEq:BCFP_line_a}
    \left \| \operatorname{Tr}_{B} \lf \mathcal{L}^n \rho_S \otimes \rho_B  \ri \right\| & \leq \sum_{r_1 = 1}^{R} \dots \sum_{r_n = 1}^{R} \left\| \operatorname{Tr}_{B}[P_{r_n} \otimes Q_{r_n}, \dots,[P_{r_1}\otimes Q_{r_1}, \rho_S \otimes \rho_B ]\right\| \\
     \label{ApEq:BCFP_line_b}
    & \leq  \sum_{r_1 = 1}^{R} \dots \sum_{r_n = 1}^{R} \sum_{i = 1}^{2^n} \max_{k} \|P_k\|^n \tilde{C}^n, \\
     \label{ApEq:BCFP_line_c}
    & = \lf 2 R \tilde{C}  \max_{k} \|P_k\| \ri^n.
\end{align}
In line \eqref{ApEq:BCFP_line_a} we have used the triangle inequality, and in line \eqref{ApEq:BCFP_line_b} we have expanded the commutator and used the triangle inequality, the bounded correlation function property, and the sub-multiplicative property of operator norm. The constant $\tilde{C}$ is the constant associated with the bounded correlation of the operators $Q_i$ with respect to $\rho_B$. Note that the commutator has $2^n$ terms, with each term containing $n$ of the $P_k$ operators and an $n$-point correlation function that comes from tracing out the environment. Finally, we can use the above property to show that the Taylor expansion of the channel converges. For any $M > 0$ 
\begin{align}
    \left\|\operatorname{Tr}_B \lf \exp\lf t \mathcal{L}\ri \rho_S \otimes \rho_B \ri - \sum_{m = 0}^M \frac{\operatorname{t^{m}Tr_B(\mathcal{L}^m \rho_S \otimes \rho_B)}}{m!} \right\| &\leq  \sum_{m = M+1}^\infty \frac{t^m}{m!}\left\| \operatorname{Tr_B(\mathcal{L}^m \rho_S \otimes \rho_B}) \right\| \\
    &\leq \frac{\lf 2 R C  \max_{k} \|P_k\| t\ri^{M+1}}{(M+1)!
    } \exp\lf 2 R C  \max_{k} \|P_k\| t \ri, \\
    & \longrightarrow 0 \text{ for } M \longrightarrow \infty.
\end{align}
Since the expansion converges, we begin by perturbatively expanding the state $\rho(\delta t) = \mathcal{E}_{\delta t} \lf \rho_S \otimes \rho_B \ri $ to the second order in $\delta t$
 \begin{align}
 \label{Apeq:Infinitesimal_evol}
   \rho(\delta t) = \rho_S  &- (i \delta t) \Big( [\omega_0 \mathds{1}_A \otimes G_\perp, \rho_S] + [\mathds{1}_A\otimes Z_{\text{eff}}, \rho_S]\Big) \notag \\
   & -\frac{\delta t^2}{2} \Big( \big\{ \mathds{1}_A \otimes \lf \omega_0 G_\perp + Z_{\text{eff}} \ri^2, \rho_S \big\} - 2 \mathds{1}_A \otimes \lf \omega_0 G_\perp + Z_{\text{eff}} \ri \rho_S \mathds{1}_A \otimes \lf \omega_0 G_\perp + Z_{\text{eff}} \ri \notag \\
   & + \sum_{k = 0}^K \mean{D_k^\dagger D_k}\{\mathds{1}_A \otimes F_{k}^{\dag}F_{k} , \rho_S\} - 2\sum_{k=0}^{K} \mean{D_k^\dagger D_k} (\mathds{1}_A \otimes F_k^{\dag}) \rho_s (\mathds{1}_A \otimes F_k) \Big) \notag  \\
   & + O\lf (K+1)^{3} \Lambda^3\max\lf \left[ 2 C K \max_i g_i\right]^3,1 \ri  \ri.
 \end{align}
In deriving the above equation, we have used the property that $\operatorname{Tr(\rho_B) = 1}$, Eq. \eqref{ApEq:Bounded_correlation_D} and ignored the terms containing $\mean{D_k}$ as they are zero.  To evaluate the result of the two outcome POVM measurement $\{\Pi_C, \Pi_{\tilde{C}}\}$, first note that using $\rho_s = \Pi_C \rho_s \Pi_C$, we have 
\begin{align}
\label{apEq:Props1a}
    &\Pi_C [O,\rho_S] \Pi_C =  \Pi_C [O,\Pi_C \rho_S \Pi_C ] \Pi_C = [\Pi_C O \Pi_C, \rho_s], \\ 
    &\Pi_{\tilde{C}} [O,\rho_S] \Pi_{\tilde{C}} =  0, \\
    &\Pi_C \{O,\rho_S\} \Pi_C =  \Pi_C \{O,\Pi_C \rho_S \Pi_C \} \Pi_C = \{\Pi_C O \Pi_C, \rho_s\}, \\
    &\Pi_{\tilde{C}} \{O,\rho_S\} \Pi_{\tilde{C}} =  0,\\
     \label{apEq:Props1b}  
    &\Pi_C O^2 \Pi_C =  \Pi_C O \Pi_C O \Pi_C + \Pi_C O \Pi_{\tilde{C}} O \Pi_C,
\end{align}
 for any operator $O$ acting on the system-ancilla Hilbert space. Using Eq.~\eqref{apEq:Props_CS}, Eq.~\eqref{Apeq:Infinitesimal_evol} and Eqs.~\eqref{apEq:Props1a}-\eqref{apEq:Props1b}, we can show that
\begin{equation}
    \begin{aligned}
        \Pi_{\tilde{C}} \rho(\delta t) \Pi_{\tilde{C}} =   \delta t^2 \Big(  &\Pi_{\tilde{C}} \mathds{1}_A \otimes  (\omega_0 G_\perp + Z_{\text{eff}}) \Pi_C \rho_S \Pi_C \mathds{1}_A \otimes (\omega_0 G_\perp + Z_{\text{eff}}) \Pi_{\tilde{C}} \\&  + \sum_{k = 0}^K \mean{D_{k}^{\dag}D_{k}} \Pi_{\tilde{C}}(\mathds{1}_A \otimes F^{\dag}_k)\Pi_C \rho_S \Pi_C (\mathds{1}_A \otimes F_k )\Pi_{\tilde{C}}  \Big)  
       + O\lf (K+1)^{3} \Lambda^3\max\lf \left[ 2 C K \max_i g_i\right]^3,1 \ri  \ri. \\
    \end{aligned}
\end{equation}
Thus, the infinitesimal Kraus operators are given as 
\begin{equation}
   \label{Apeq:Infinitesimal_Krauss}
    \begin{aligned}
        N_i &= \Pi_{\tilde{C}} \lf \mathds{1}_A \otimes F_i   \ri  \Pi_C\sqrt{\langle D_i^\dagger D_i  \rangle} \delta t = \sum_{j}u^*_{ij}\lf \mathds{1}_A \otimes \lf E_j - \xi_j \ri \ri \Pi_C \sqrt{\langle D_i^\dagger D_i \rangle} \delta t, && \text{for } i \in \{0,\dots,K\}, \\
        N_i &= \Pi_{\tilde{C}}  \lf \mathds{1}_A \otimes (Z_{\text{eff}}  + \omega_0 G_\perp) \ri\Pi_C \delta t =\mathds{1}_{A}\otimes (Z_{\textrm{eff}}-\lambda)\Pi_{c}\delta t+\omega_{0}(\mathds{1}_{A}\otimes G_{\perp}-G_{\textrm{eff}})\Pi_{C}\delta t,  && \text{for } i = K+1.
    \end{aligned}
\end{equation}
For us to be able to construct a recovery operator $\mathcal{R}_{\tilde{\mathcal{C}}}$ such that $\mathcal{R}_{\tilde{\mathcal{C}}} \lf \sum_i N_i \Pi_C \rho \Pi_C N^\dagger_i  \ri \propto \Pi_C \rho \Pi_C$, we need to satisfy the Knill-Laflamme conditions~\cite{KnillLaflamme}, hence we need 
\begin{align}
        \label{eq:KL_EC_unsimp_1}
        &\Pi_C \mathds{1}_A \otimes \lf E_i  - \xi_i  \ri^\dagger \lf E_j  - \xi_j  \ri \Pi_C \propto \Pi_C, \\
        \label{eq:KL_EC_unsimp_3}
        &\Pi_C \mathds{1}_A \otimes \lf E_i  - \xi_i  \ri^\dagger \Pi_{\tilde{C}}  \mathds{1}_A \otimes \lf Z_{\text{eff}} + \omega_0 G_\perp \ri \Pi_C \propto \Pi_C \\
        \label{eq:KL_EC_unsimp_2}
        &\Pi_C \mathds{1}_A \otimes \lf Z_{\text{eff}} + \omega_0 G_\perp \ri^\dagger \Pi_{\tilde{C}}  \mathds{1}_A \otimes \lf Z_{\text{eff}} + \omega_0 G_\perp \ri \Pi_C \propto \Pi_C,
\end{align}
where we have used the fact that the $N_{j}$ for $j<K+1$ are proportional to unitary linear combinations of the $E_{k}-\xi_{k}$. Using Eq.~\eqref{apEq:Props_CS}, we see that Eq.~\eqref{eq:KL_EC_unsimp_1} can be satisfied if and only if 
\begin{equation}
    \label{Apeq:AQEC_final_cond_1}
    \Pi_C \mathds{1}_A \otimes E_iE_{j} \Pi_C \propto \Pi_C.
\end{equation}
Similarly, we can show that Eq.~\eqref{eq:KL_EC_unsimp_3} can be satisfied if and only if we additionally have
\begin{equation}
    \label{Apeq:AQEC_final_cond_2}
    \Pi_C \mathds{1}_A \otimes E_i  \Pi_{\tilde{C}}  \mathds{1}_A \otimes G_\perp \Pi_C \propto \Pi_C.
\end{equation}
Finally, Eq.~\eqref{eq:KL_EC_unsimp_2} can be satisfied if and only if the additional property
\begin{equation}
    \label{Apeq:AQEC_final_cond_3}
   \Pi_C \mathds{1}_A \otimes G_\perp \Pi_{\tilde{C}}  \mathds{1}_A \otimes G_\perp \Pi_C \propto \Pi_C,
\end{equation}
is satisfied.
\end{proof}
We conclude this section by remarking that the condition in Theorem \ref{Thm:Recovery_cond} is similar to the HNLS conditions for Markovian metrology. Thus, the noise models for which we can construct an active recovery operation is a smaller set than the ones for which we can recover Heisenberg-limited scaling using the Zeno limit.  

\section{Proof of Theorem 3}
\label{Ap:Proof_Theorem_3}
\begin{thm*}[\textbf{\ref{Thm:Error_Order_AQEC}}]
     \label{ApThm:Error_Order_AQEC_AQED}
   Consider the same setup and joint Hamiltonian as in Eq.~\eqref{Eq:Setup_Hamiltonian}. Let $U_{\text{tot}}(\delta t)$ be the unitary generated by the Hamiltonian, $U_{\textrm{tot}}(\delta t):= \exp(-i \delta t H_\text{tot}) $ and let $\mathcal{U}$ be the superoperator associated with this unitary. Suppose there exists a positive constant $C >0$, such that the $n$-point bath correlation functions satisfy 
    \begin{equation}
        \left|\operatorname{Tr}_B\lf B_{i_1} \dots B_{i_n} \rho_B \ri \right| \leq C^n.
    \end{equation}
    Define the channel $\mathcal{E}$ as
    \begin{equation}
        \mathcal{E}(\rho) := \operatorname{Tr}_B  \lf \mathds{1}_A \otimes U_{\text{tot}}(\delta t) \rho \mathds{1}_A \otimes U^\dagger_{\text{tot}}(\delta t)\ri.
    \end{equation}
  Suppose that the codespace constructed according to Lemma~\ref{lem:Error_detection} satisfies the conditions
    \begin{align}
         \label{Apeq:Rec_cond_a}
         \Pi_C &E_i E_j \Pi_C \propto \Pi_C, \\
          \label{Apeq:Rec_cond_b}
         \Pi_C &\mathds{1}_A \otimes E_{j}  \Pi_{\tilde{C}}  \mathds{1}_A \otimes G_\perp \Pi_C \propto \Pi_C, \\
          \label{Apeq:Rec_cond_c}
         \Pi_C &\mathds{1}_A \otimes G_\perp  \Pi_{\tilde{C}}  \mathds{1}_A \otimes G_\perp \Pi_C \propto \Pi_C,
    \end{align}
    with the first two proportionalities holding for all $i$ and $j$. Theorem~\ref{Thm:Recovery_cond} then implies the existence of a channel 
    \begin{align}
        \mathcal{R} \circ \mathcal{E} (\rho_S \otimes \rho_B)  &= \Pi_C \mathcal{E} (\rho_S \otimes \rho_B) \Pi_C   + \mathcal{R}_{\tilde{\mathcal{C}}} \lf\Pi_{\tilde{C}} \mathcal{U} (\rho_S \otimes \rho_B) \Pi_{\tilde{C}} \ri,
    \end{align}
   where $\mathcal{R}_{\tilde{\mathcal{C}}}$ satisfies 
   \begin{equation}
         \label{ApEQ:Errorspace_recovery}
        \mathcal{R}_{\tilde{\mathcal{C}}} \lf \Pi_{\tilde{C}} \mathcal{E}(\rho) \Pi_{\tilde{C}}\ri \propto \rho + O(\delta t^3), \quad \text{for} ~ \rho = \Pi_C \rho \Pi_C.
    \end{equation}
   Then we have 
   \begin{align}
       \mathcal{R} \circ \mathcal{E} (\rho_S \otimes \rho_B)  &=  V_{\text{exact}}(\delta t) \rho_s  V^\dagger_{\text{exact}}(\delta t) +  O(\delta t^3),
   \end{align}
   where the unitary $V_{\text{exact}}(\delta t)$ is given as 
   \begin{equation}
   \label{Apeq:exact_evo}
       V_{\text{exact}}(\delta t) := \exp\left[ -i \delta t \lf \mathds{1}_A \otimes G_{\textrm{eff}} \ri \right] \Pi_C
   \end{equation}
\end{thm*}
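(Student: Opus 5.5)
Our approach is to reuse the second-order Dyson expansion from the proof of Theorem~\ref{Thm:Recovery_cond}. By Eq.~\eqref{Apeq:Infinitesimal_evol}, under the bounded-correlation assumption, $\rho(\delta t)=\mathcal{E}(\rho_S\otimes\rho_B)$ is given to $O(\delta t^3)$ by explicit first- and second-order terms. These involve $\omega_0 G_\perp+Z_{\text{eff}}$ and the diagonalized couplings $F_k$ with weights $\langle D_k^\dagger D_k\rangle$, the linear terms in $D_k$ having vanished because $\langle D_k\rangle=0$. The remainder is controlled uniformly by the constant built from $\Lambda$, $K$ and $C$. We then split $\mathcal{R}\circ\mathcal{E}$ into its codespace part $\Pi_C\rho(\delta t)\Pi_C$ and its error-space part $\mathcal{R}_{\tilde{\mathcal{C}}}(\Pi_{\tilde C}\rho(\delta t)\Pi_{\tilde C})$, and compute each to second order.

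\textbf{Codespace block.} We use Eqs.~\eqref{apEq:Props1a}--\eqref{apEq:Props1b} together with the code properties \eqref{apEq:Props_CS}. At first order this gives $-i\delta t[\omega_0 G_{\text{eff}}+\lambda\Pi_C,\rho_S]$; the scalar part $\lambda$ drops out of the commutator. At second order, each $\Pi_C O^2\Pi_C$ splits into $\Pi_C O\Pi_C O\Pi_C+\Pi_C O\Pi_{\tilde C}O\Pi_C$. The first pieces, combined with the sandwich term $-2\,\Pi_C O\rho_S O\Pi_C$, reconstruct exactly the second-order term of $e^{-i\delta t\,\omega_0 G_{\text{eff}}}\rho_S e^{i\delta t\,\omega_0 G_{\text{eff}}}$. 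The $F_k$ terms reduce to $|\xi_k|^2\rho_S$-type pieces that cancel between anticommutator and sandwich. What remains is a ``leakage'' term of the form $-\tfrac{\delta t^2}{2}\{\Pi_C N^\dagger N\Pi_C,\rho_S\}$ with $N$ ranging over the Kraus operators $N_i$ of Eq.~\eqref{Apeq:Infinitesimal_Krauss}. So the codespace block equals $V_{\text{exact}}\rho_S V_{\text{exact}}^\dagger-\tfrac12\sum_i\{N_i^\dagger N_i,\rho_S\}+O(\delta t^3)$.

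\textbf{Error-space block.} By the computation in the proof of Theorem~\ref{Thm:Recovery_cond}, $\Pi_{\tilde C}\rho(\delta t)\Pi_{\tilde C}=\sum_i N_i\rho_S N_i^\dagger+O(\delta t^3)$. The hypotheses \eqref{Apeq:Rec_cond_a}--\eqref{Apeq:Rec_cond_c} are exactly the Knill--Laflamme conditions $\Pi_C N_i^\dagger N_j\Pi_C=\alpha_{ij}\delta t^2\Pi_C$. Hence the standard KL recovery, written with the normalization of trace-preserving recovery, gives $\mathcal{R}_{\tilde{\mathcal{C}}}(\sum_i N_i\rho_S N_i^\dagger)=\operatorname{Tr}(\alpha)\delta t^2\rho_S$. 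We must fix the proportionality constant, and this is the step we expect to be the main obstacle. Relation \eqref{ApEQ:Errorspace_recovery} only states $\propto$, so we will argue that because $\mathcal{R}_{\tilde{\mathcal{C}}}$ is trace preserving, the constant equals $\operatorname{Tr}\Pi_{\tilde C}\rho(\delta t)=\operatorname{Tr}(\sum_iN_i^\dagger N_i\rho_S)=\operatorname{Tr}(\alpha)\delta t^2$. The $O(\delta t^3)$ error survives because the recovery is a bounded (norm-nonincreasing) channel.

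\textbf{Combining.} The KL conditions also make the leakage term scalar: $\Pi_C\sum_iN_i^\dagger N_i\Pi_C=\operatorname{Tr}(\alpha)\delta t^2\Pi_C$. The codespace leakage $-\operatorname{Tr}(\alpha)\delta t^2\rho_S$ therefore cancels exactly the recovered error-space contribution $+\operatorname{Tr}(\alpha)\delta t^2\rho_S$. This leaves $V_{\text{exact}}\rho_SV_{\text{exact}}^\dagger+O(\delta t^3)$. Finally, the second-order term $\delta t^2\,\omega_0^2 G_{\text{eff}}\rho_S G_{\text{eff}}$ and its partners agree with the expansion of $V_{\text{exact}}$; we check this so that no stray $\lambda$ or $\xi_k$ dependence survives. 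The one bookkeeping care needed is that $\lambda$ and $\xi_k$ enter only as multiples of $\Pi_C$, so they either cancel in commutators or are absorbed into the scalar leakage.
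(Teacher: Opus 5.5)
Your proposal is correct and follows essentially the same route as the paper: you project the second-order expansion from the proof of Theorem~\ref{Thm:Recovery_cond} onto the codespace and find that, beyond the unitary expansion, only a scalar leakage term $-c\,\delta t^2\rho_S$ remains. Its coefficient matches the error-space Kraus normalization $\sum_k N_k^\dagger N_k = c\,\delta t^2\,\mathds{1} + O(\delta t^3)$, which the trace-preserving recovery returns as $+c\,\delta t^2\rho_S$, exactly as in the paper. Your bookkeeping (the explicit $\lambda$ and $|\xi_k|^2$ cancellations, and keeping $\omega_0$ in the exponent $\omega_0 G_{\text{eff}}$) is somewhat more explicit than the paper's terse version.
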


\begin{proof}
Since the conditions in Eqs.~\eqref{Apeq:Rec_cond_a},~\eqref{Apeq:Rec_cond_b} and~\eqref{Apeq:Rec_cond_c} are satisfied, Theorem~\ref{Thm:Recovery_cond} implies that we can construct a recovery operator, that satisfies Eq.~\eqref{EQ:Errorspace_recovery}. In addition to the properties in Eq.~\eqref{apEq:Props_CS}, the code can be seen to satisfy 
\begin{equation}
    \label{apEq:Props_CS_extended}
     \Pi_C \mathds{1}_A \otimes Z^2_{\text{eff}} \Pi_C = \Lambda \Pi_C, \quad \Pi_C \mathds{1}_A \otimes Z_{\text{eff}}  \Pi_{\tilde{C}}  \mathds{1}_A \otimes G_\perp \Pi_C = \gamma_1 \Pi_C, \quad \Pi_C \mathds{1}_A \otimes G_\perp  \Pi_{\tilde{C}}  \mathds{1}_A \otimes G_\perp \Pi_C = \gamma_2 \Pi_C,  
\end{equation}
\begin{equation}
  \label{apEq:Props_CS_extended_0}
    \Pi_C \mathds{1}_A \otimes F_i^\dagger F_j \Pi_C  = \chi_{ij}, \quad \Pi_C \mathds{1}_A \otimes  F_j \Pi_C  = \tilde{\xi}_{ij}. 
\end{equation} Furthermore, we can also show that
\begin{equation}\label{Apeq:Props_CS_extended_1}
    \Pi_C (\mathds{1}_A \otimes \omega_0 G_\perp Z_{\text{eff}}) \Pi_C = \lambda \Pi_C (\mathds{1}_A \otimes \omega_0 G_\perp) \Pi_C + \gamma_1 \Pi_C
\end{equation}Define the operator $G_{\text{eff}}$ as, 
\begin{equation}
\label{ApEq:G_eff_defn}
    G_{\text{eff}} := \Pi_C \mathds{1}_A \otimes G_\perp \Pi_C 
\end{equation}
Projecting the state in Eq.~\eqref{Apeq:Infinitesimal_evol} to the codespace and using Eqs.~\eqref{apEq:Props_CS} and \eqref{apEq:Props_CS_extended}-\eqref{ApEq:G_eff_defn}, we have 
\begin{equation}
    \begin{aligned}
        \Pi_C \rho(\delta t) \Pi_C &=  \rho_S - (i \delta t)  [\omega_0 G_{\text{eff}}, \rho_S] -\frac{\delta t^2}{2} \Bigg[ \omega^2_0 \big\{ G_{\text{eff}}^2 , \rho_S \big\} - 2 \omega^2_0  G_{\text{eff}} \rho_S  G_{\text{eff}} \Bigg] \\
        & \qquad \qquad \qquad ~~~~~~~~~~~~~ - \delta t^2 \Bigg[   \Lambda - \lambda^2 + \omega_0 \gamma_1 + \omega_0 \gamma_1^{*} + \omega_0^2 \gamma_2 +  \sum_{k = 0}^K \mean{D_k^\dagger D_k} (\chi_{kk} - |\tilde{\xi}_{k}|^2)  \Bigg]\rho_S.
    \end{aligned}
\end{equation}
The Kraus operators defined in Eq.~\eqref{Apeq:Infinitesimal_Krauss} have the normalization
\begin{equation}
    \sum_{k = 0}^{K+1} N_k^\dagger N_k = (\delta t^2)\lf \Lambda - \lambda^2 + \omega_0 \gamma_1 + \omega_0 \gamma_1^{*} + \omega_0^2 \gamma_2 +  \sum_{k = 0}^K \mean{D_k^\dagger D_k} (\chi_{kk} - |\tilde{\xi}_{k}|^2) \ri \mathds{1} + O(\delta t^3).
\end{equation}
Thus 
\begin{equation}
    \Pi_C \rho(\delta t) \Pi_C + \mathcal{R}_{\tilde{\mathcal{C}}}(\Pi_{\tilde{C}} \rho(\delta t) \Pi_{\tilde{C}}) = \Pi_C \exp \lf -i \delta t G_{\text{eff}} \ri \rho_S \exp \lf i \delta t G_{\text{eff}} \ri \Pi_C + O(\delta t^3).
\end{equation}
Thus, the evolution matches a unitary evolution up to $O(\delta t^3)$. 
\end{proof}

\section{Conditions for reset recovery}
\label{Ap:Reset_conds}
In this section, we will derive the conditions for which the projection of the state onto the error subspace is independent of the parameter $\omega_0$ up to the leading order in $\delta t$. This would imply that for channels satisfying the said conditions, resetting the state to the initial state is the most reasonable recovery operator. 

Suppose that the initial state is evolved under an AQED code for a time $s$ before the state is projected onto the error subspace for the first time. The effective state in the error subspace would be 
\begin{equation}
    \mathcal{E}_{\tilde{\mathcal{C}}}(\rho) = \sum_{k = 0}^{K+1} K_k e^{-i \omega_0 G_{\text{eff}} s} \rho_0 e^{+i \omega_0 G_{\text{eff}} s}  K_k ^\dagger,
\end{equation}
 where, the Kraus operators are given in Eq.~\eqref{Apeq:Infinitesimal_Krauss}. If the state is independent of $\omega_0$, then we have
\begin{equation}
    \frac{\partial \mathcal{E}_{\tilde{\mathcal{C}}}(\rho)}{\partial \omega_0} = 0.
\end{equation}
 Performing the derivative, and noting that 
 \begin{equation}
      \frac{\partial K_k}{\partial \omega_0} = \delta_{k (K+1)} \Pi_{\tilde{C}} \lf \mathds{1}_A \otimes G \ri \Pi_{C}\delta t, 
 \end{equation}
 we have 
\begin{equation}
    \lf \mathds{1}_A \otimes G \ri e^{-i G_{\text{eff}} s} \rho_0 e^{i G_{\text{eff}} s} K^\dagger_{k+1} +  K_{k+1} e^{-i G_{\text{eff}} s} \rho_0 e^{i G_{\text{eff}} s} \lf \mathds{1}_A \otimes G \ri  + i s\sum_k K_k e^{-iG_{\text{eff}} s} [G_{\text{eff}}, \rho_0] e^{-i G_{\text{eff}} s} K^\dagger_k = 0.
\end{equation}
Spontaneous decay satisfies the above conditions for all times $s$. Consider the Hamiltonian,
\begin{equation}
    H = \omega_0 \sigma_+\sigma_- + \sum_k \omega_k b_k ^\dagger b_k + \sum_k \sigma_+ \otimes g_k b_k  + \sigma_- \otimes g^*_k b^\dagger_k. 
\end{equation}
Comparing with the general equation form in Eq. \eqref{ApEq:Setup_Hamiltonian}, we see that, 
\begin{equation}
    G = Z, E_1 = \sigma_+, E_2 = \sigma_-, B_1 = \sum_{k} g_k b_k \text{ and } B_2 = \sum_k g^*_k b_k^\dagger,  
\end{equation}
where $b_k(b_k^\dagger)$ are the lowering (raising) operator on environment mode $k$. The codespace and the error subspace are defined as, 
\begin{equation}
    \Pi_C = \ketbra{00}{00} + \ketbra{11}{11}, \quad \Pi_{\tilde{C}} = \ketbra{01}{01} + \ketbra{10}{10}.
\end{equation}
Furthermore, for an environment state that is initialized in the vacuum it is easy to see that $\mean{B_1} = \mean{B_2} = \mean{B_1^\dagger B_2} = \mean{B_1^\dagger B_1} = 0$ and $\mean{B_2^\dagger B_2} = \sum_k |g_k|^2$. 
Thus, we can apply the formalism adopted in the above section directly to show that the Kraus operators after the projection onto the error subspace are given by
\begin{equation}
    G_{\parallel} = 0, \quad  G_{\perp} = Z,  \quad G_{\text{eff}} = \mathds{1} _A \otimes Z, \quad Z_{\text{eff}} = 0,
\end{equation}
\begin{equation}
   \rho_0 = \frac{1}{2}\lf \ketbra{00}{00} + \ketbra{00}{11} + \ketbra{11}{00} +\ketbra{11}{11}\ri,
\end{equation}
\begin{equation}
   K_1 = 0, \quad  K_2 = \delta t \sigma_- \lf \sum_k |g_k|^2 \ri, \quad  K_{i \neq 1,2} = 0.
\end{equation}
 It is easy to see that $\sigma_-[Z,U(s) \rho_0 U(s)^\dagger]\sigma_+ = 0$. Thus, the error space in spontaneous decay has no information about $\omega_0$ and hence the reset is the only recovery we can apply.  

\section{Leading order error analysis for AQED}
\label{Ap:Error_AQED}
Let $H_{\text{tot}}$ in Eq.~\eqref{ApEq:Setup_Hamiltonian} denote the total Hamiltonian and $H_S,, H_B, H_{SB}$ denote the system, environment, and the interaction parts, respectively. For simplicity, we consider that $H_{\text{tot}}$ is a bounded linear operator.  Let $C$ and $\tilde{C}$ denote the codespace and the error space, respectively. The codespace satisfies 
\begin{equation}
    \Pi_C E_i \Pi_C  = \xi_i \mathds{1}.
\end{equation}
We start by evaluating
\begin{equation}
    V(N,\tau) := \lf \Pi_C U(\tau) \Pi_C\ri^N.
\end{equation}
Define 
\begin{equation}
    V(\tau) = \Pi_C U(\tau) \Pi_C, \qquad W(\tau) = \Pi_{\tc} U(\tau) \Pi_C.
\end{equation}
Taking derivative of $V(\tau)$ with $\tau$ we get 
\begin{equation}
\label{ApEq:Derivative_V}
    \frac{\partial V}{\partial \tau} = - i ~\Pi_C H_{\text{tot}} U(\tau) \Pi_C = - i ~ \Pi_C H_{\text{tot}} \lf \Pi_C + \Pi_{\tc} \ri U(\tau)  \Pi_C = -i H_{\text{tot}}^{CC} V + -i H_{\text{tot}}^{C\tc} W,
\end{equation} 
where $H_{\text{tot}}^{CC} = \Pi_C H_{\text{tot}} \Pi_C$ and $H_{\text{tot}}^{C\tc} = \Pi_C H_{\text{tot}} \Pi_C$. Similarly taking the derivative of $W$ yields
\begin{equation}
\label{ApEq:Derivative_W}
    \frac{\partial W}{\partial \tau}  = -i H_{\text{tot}}^{\tc C} V -i H_{\text{tot}}^{ \tc\tc} W.
\end{equation} 
Using the boundary condition $W(0) = 0$, formally integrating Eq.~\eqref{ApEq:Derivative_W} and substituting into Eq.~\eqref{ApEq:Derivative_V} we get 
\begin{equation}
   \label{ApEq:Integro_Diff}
   \frac{\partial V}{\partial \tau} =   -i H_{\text{tot}}^{CC} V - \int_0^\tau ds ~ H_{\text{tot}}^{C \tc} \exp\left[ -i (\tau - s) H_{\text{tot}}^{\tc \tc} \right] H_{\text{tot}}^{\tc C} V(s).
\end{equation}
Expanding the exponential operator, using $\|V(\tau)\| \leq \| U(\tau)\| = 1$, the $n^{th}$ order in $(\tau -s)$ term has norm 
\begin{equation}
     \label{ApEq:n_th_order_term_analysis}
    \left \| \int_0^\tau ds~ H_{\text{tot}}^{C \tc} \lf-i (\tau - s) H_{\text{tot}}^{\tc \tc} \ri^n H_{\text{tot}}^{\tc C} V(s) \right\| \leq \| H_{\text{tot}}\|^{n+2} \frac{\tau^{n+1}}{n+1}  = O\lf \|H_{\text{tot}}\| (\|H_{\text{tot}}\|\tau)^{n+1}\ri.
\end{equation}
Thus for $\tau \ll \|H_{\text{tot}}\|^{-1}$, perturbatively expanding Eq.~\eqref{ApEq:Integro_Diff} to the first order in $\tau$, we get 
\begin{equation}
    \label{ApEq:Perturbative_diffeq_error_AD}
     \frac{\partial V}{\partial \tau} =   -i H_{\text{tot}}^{CC} V - \int_0^\tau ds ~ H_{\text{tot}}^{C \tc}  H_{\text{tot}}^{\tc C} V(s) + O(\|H_{tot}\|^3\tau^2).
\end{equation}
Taking derivatives with respect to $\tau$ on both sides, we get 
\begin{equation}
   \label{ApEq:V_second_order_equation}
    \frac{\partial^2 V}{\partial \tau^2} =   -i H_{\text{tot}}^{CC} \frac{\partial V}{\partial \tau} -  H_{\text{tot}}^{C \tc}  H_{\text{tot}}^{\tc C} V(\tau) + O(\|H_{tot}\|^3\tau).
\end{equation}
Consider the operator $\tilde{V}$, given as
\begin{equation}
    \label{ApEq:V_tilde}
    \tilde{V}(\tau) = \exp\lf -i \tau H_{\text{tot}}^{CC}  - \frac{\tau^2}{2} H_{\text{tot}}^{C \tc}  H_{\text{tot}}^{\tc C}  \ri \Pi_C.
\end{equation}
We claim that the operator $\tilde{V}$ is the solution to the differential equation in Eq.~\eqref{ApEq:V_second_order_equation} to the order of $O(\|H_\text{tot}\|^3 \tau^3)$. To do this, we first need the following lemma. 

\begin{lemma}
\label{Aplem:Diff_approx}
    Let $X(t)$ be a matrix valued function that is at least twice differentiable. Suppose we have a second order matrix differential equation, 
       \begin{equation}
           \frac{\partial^2 X}{\partial t^2}  + A \frac{\partial X}{\partial t} + B X= E(t), 
       \end{equation}
       where $A$, $B$ , and $E$ are bounded linear operators and $\|E(t)\| = O(t^p)$ as $t$ goes to $0$, for some $p > 0$. If X also satisfies the boundary conditions
       \begin{equation}
           X(0) = \frac{\partial X}{\partial t} (0) = 0.
       \end{equation}
       Then 
       \begin{equation}
            \|X(t)\| = O(t^{p+2}),
       \end{equation}
       as $t$ goes to $0$.
\end{lemma}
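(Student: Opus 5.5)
The plan is to reduce the second-order equation to a first-order linear system and use variation of constants. The key point is that the zero initial data make the solution depend only on the forcing $E$.

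Define the block vector $Y(t) = \big(X(t), \partial_t X(t)\big)^{T}$, acting on the doubled space. The equation becomes
\begin{equation}
    \frac{\partial Y}{\partial t} = M Y + F(t), \qquad M = \begin{pmatrix} 0 & \mathds{1} \\ -B & -A \end{pmatrix}, \qquad F(t) = \begin{pmatrix} 0 \\ E(t) \end{pmatrix},
\end{equation}
with $Y(0) = 0$. Since $A$ and $B$ are bounded, $M$ is bounded, with $\|M\| \leq 1 + \|A\| + \|B\|$, so $e^{tM}$ is well defined. Because $X$ is twice differentiable, $E(t) = \partial_t^2 X + A\,\partial_t X + B X$ is well defined and can be integrated. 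I will take it to be continuous, or at least locally integrable, near $0$; this is all the Duhamel formula needs.

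The vanishing initial condition gives the Duhamel representation
\begin{equation}
    Y(t) = \int_0^t e^{(t-s)M} F(s)\, ds .
\end{equation}
I would verify it by differentiating $e^{-tM}Y(t)$, which equals $e^{-tM}F(t)$, and integrating from $0$.

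Next I fix the uniformity of the constant. By hypothesis there are $t_0 > 0$ and $c > 0$ such that $\|E(s)\| \leq c\, s^{p}$ for all $0 \leq s \leq t_0$. For $0 \leq t \leq t_0$ this gives
\begin{equation}
    \|\partial_t X(t)\| \leq \|Y(t)\| \leq e^{\|M\| t_0} \int_0^t c\, s^p\, ds = \frac{c\, e^{\|M\| t_0}}{p+1}\, t^{p+1}.
\end{equation}
Finally, $X(0)=0$ gives $X(t) = \int_0^t \partial_s X(s)\, ds$. Hence
\begin{equation}
    \|X(t)\| \leq \frac{c\, e^{\|M\| t_0}}{(p+1)(p+2)}\, t^{p+2} = O(t^{p+2}).
\end{equation}

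The step needing the most care is making the $O(t^p)$ hypothesis quantitative, which is done by fixing $t_0$ and $c$ once before integrating. A direct bound $\|Y(t)\| = O(t^{p+1})$ from the Duhamel formula alone would lose the extra power of $t$. That power comes only from integrating the derivative component once more using $X(0)=0$.

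As a fallback, if one prefers to avoid the matrix exponential, Grönwall's inequality applied to $\|Y(t)\| \leq \int_0^t \big(\|M\|\,\|Y(s)\| + c\, s^p\big)\, ds$ gives the same $t^{p+1}$ bound on $Y$. The final integration step is then unchanged.
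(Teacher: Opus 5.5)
Your proof is correct and follows the same route as the paper: you reduce to the companion first-order system for $Y=(X,\partial_t X)$ and apply the Duhamel formula with zero initial data. The only difference is the last step---the paper expands $e^{(t-s)M}$ to first order, reads off the first component $\int_0^t (t-s)E(s)\,ds$ (the zeroth-order term has vanishing first component), and bounds the series tail by $O(\|M\|^2 t^{p+3})$, whereas you bound all of $Y$ by $O(t^{p+1})$ and integrate $\partial_t X$ once more using $X(0)=0$, which is slightly cleaner and gives an explicit constant valid uniformly on $[0,t_0]$.
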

\begin{proof} The proof follows by bounding the solution derived using the Duhamel principle. Define the operators $Y(t)$, $M(t)$ and $D(t)$ as  
\begin{equation}
    Y(t) = \begin{bmatrix}
        X(t) \\ X'(t)
    \end{bmatrix},\quad M(t) = \begin{bmatrix}
        0 & \mathds{1} \\ -B & -A
    \end{bmatrix},\quad D(t) = \begin{bmatrix}
    0 \\ E(t)
\end{bmatrix}. 
\end{equation}
Then, the above differential equation can be written as, 
\begin{equation}
    Y'(t) = M Y(t) + D(t).
\end{equation}
Note that $M$ is bounded since $A$ and $B$ themselves are bounded. Using Duhamel's principle (see Section 2.7 of Ref.~\cite{hochstadt2014differential}), we can write the solution to $Y(t)$ as 
\begin{equation}
    Y(t) = e^{M t} Y(0) + \int_0^t ds ~ e^{M(t - s)} D(s) =  \int_0^t ds ~ e^{M(t - s)} D(s),
\end{equation}
where the second equality follows from the boundary conditions. Thus
\begin{align}
    Y(t) &= \int_0^t ds~ D(s) + \int_0^t ds~ M (t - s) D(s) + \sum_{n= 2}^\infty \int_0^t ds~\frac{\lf M(t  -s) \ri^n}{n!} D(s), \\
    &=\begin{bmatrix}
        X(t) \\ X'(t)
    \end{bmatrix} =  \int_0^t ds~ \begin{bmatrix}
        (t - s) E(s) \\ E(s) - (t - s) A E(s) 
    \end{bmatrix} + O \lf \frac{\|M\|^2 t^{p+3}}{2 (p+1)} e^{\|M\|t}\ri.
\end{align}
The second equality above is derived by upper-bounding the third term in the first equality as
\begin{align}
    \lim_{t \to 0}\left \| \sum_{n= 2}^\infty \int_0^t ds~\frac{\lf M(t-s) \ri^n}{n!} D(s) \right\| &\leq  \lim_{t \to 0} \sum_{n= 2}^\infty \int_0^t ds~\frac{\lf \|M\|t) \ri^n}{n!} \|D(s)\|,  \\ 
    &\leq \lim_{t \to 0} \sum_{n= 2}^\infty \int_0^t ds~\frac{\lf \|M\|t) \ri^n}{n!} s^p, \\
    &\leq \lim_{t \to 0} \frac{t^{p+3} \|M\|^2}{2(p+1)} \sum_{n = 0}^\infty \frac{\lf \|M\|t) \ri^n}{n!} \frac{n! {2!}}{(n+2)!}, \\
    &<  \lim_{t \to 0}\frac{t^{p+3} \|M\|^2}{2(p+1)} e^{\|M\|t}.
\end{align}
Thus 
\begin{equation}
    \| X(t) \| \leq \int_0^t ds~ (t - s)\|E(s)\| <  t \int_0^t ds ~ s^p =  O \lf t^{p+2}\ri
\end{equation}
as $t$ goes to $0$.
\end{proof}

To show that the operator in Eq.~\eqref{ApEq:V_tilde} is the solution to Eq.~\eqref{ApEq:V_second_order_equation}, we start by noting that, by Taylor expanding $\tilde{V}(\tau)$ on the order of $O(\tau^2)$ and substituting in Eq.~\eqref{ApEq:V_second_order_equation} we obtain
\begin{equation}
    \frac{\partial^2 \tilde{V}}{\partial \tau^2} =   -i H_{\text{tot}}^{CC} \frac{\partial \tilde{V}}{\partial \tau} -  H_{\text{tot}}^{C \tc}  H_{\text{tot}}^{\tc C} \tilde{V}(\tau) + O(\|H_{tot}\|^3\tau).
\end{equation}
Thus, we have 
\begin{equation}
    \lf \frac{\partial^2}{\partial \tau^2} + i H_{\text{tot}}^{C \tilde{C}} \frac{\partial}{\partial \tau} + H_{\text{tot}}^{C \tilde{C}}H_{\text{tot}}^{ \tilde{C}C}\ri \lf \tilde{V} - V \ri =  O\lf \|H_{tot}\|^3\tau\ri.
\end{equation}
Now, consider the boundary conditions on the operators $V$ and $\tilde{V}$. Clearly 
\begin{equation}
    V(0) = \tilde{V}(0) = \Pi_C. 
\end{equation}
Using Duhamel's formula for the derivative of a matrix exponential we can also show that 
\begin{equation}
    \tilde{V}'(0) = V(0) = -i H_{\text{tot}}^{CC} \Pi_C.
\end{equation}
Thus $V - \tilde{V}$ satisfies the boundary conditions for applying Lemma \ref{Aplem:Diff_approx}.  Applying the lemma, we have 
\begin{equation}
    V(\tau) = \tilde{V}(\tau) + O( \|H_{tot}\|^3\tau^3).
\end{equation}
Thus, we have 
\begin{equation}
    \label{ApEq:V_AQED_perturb}
    V(\tau) = \exp\lf -i \tau H_{\text{tot}}^{CC}  - \frac{\tau^2}{2} H_{\text{tot}}^{C \tc}  H_{\text{tot}}^{\tc C}  \ri \Pi_C + O(\|H_\text{tot}\|^3 \tau^3).
\end{equation}
Finally, note that $V(N,\tau) = V(\tau)^N$, thus 
\begin{equation}
    V(N,\tau) = \exp\lf -iN \tau H_{\text{tot}}^{CC}  - \frac{N \tau^2}{2} H_{\text{tot}}^{C \tc}  H_{\text{tot}}^{\tc C}  \ri \Pi_C + O(N\|H_\text{tot}\|^3 \tau^3).
\end{equation}
Substituting Eq.~\eqref{ApEq:V_AQED_perturb} in Eq.~\eqref{ApEq:Derivative_W} and expanding first order in $\tau$ using arguments similar to the ones used in Eq.~\eqref{ApEq:n_th_order_term_analysis} we obtain
\begin{equation}
    \label{ApEq:Perturb_W}
    W(\tau) = -i \int_0^\tau ds ~ \exp\left[-i(\tau - s) H_{\text{tot}}^{\tilde{C}\tilde{C}} \right] H^{\tilde{C}C}_{\text{tot}} V(s) = -i H^{\tilde{C}C}_{\text{tot}} \tau + O \lf \|H_{\text{tot}}\|^2 \tau^2\ri.
\end{equation}
Recalling the error detection scheme, if the state is in the codespace, we do nothing and if it is in the error space, we reset. Thus, after a single round of error correction, we get the channel $\mathcal{R}\circ \mathcal{E}$ acting on an initial state in the codespace $\rho_0$ 
\begin{equation}
   \mathcal{R}\circ \mathcal{E} (\rho_0) =   V(\tau) \rho_0 V^\dagger(\tau) + \operatorname{Tr} \lf  W^\dagger(\tau) W(\tau) \rho_0 \ri \rho_0.
\end{equation}
We will now prove using induction that, 
\begin{equation}
     \label{ApEq:Final_error_AQED}
     \lf \mathcal{R}\circ \mathcal{E} \ri^N \rho_0 =  \tilde{V}^{N}(\tau) \rho_0 \tilde{V}^{\dagger N}(\tau)  + \tau^2 \left[ \sum_{k = 0}^{N - 1} \operatorname{Tr} \lf \tilde{V}^{\dagger N - 1 - k}(\tau) H_{\text{tot}}^{C \tilde{C}} H_{\text{tot}}^{\tilde{C}C} \tilde{V}^{ N - 1 - k}(\tau) \rho_0\ri \tilde{V}^{k}(\tau) \rho_0 \tilde{V}^{\dagger k}(\tau) \right] + O(N \|H_{\text{tot}}\|^3 \tau^3).
\end{equation}
 We start by bounding the error on approximating $\mathcal{R}\circ\mathcal{E}(\rho_0)$. Define the operators $C(\tau)$ and $D(\tau)$ as , 
\begin{align}
    V(\tau) &= \tilde{V}(\tau) + C(\tau), \\
    W(\tau) &= -i H^{\tilde{C}C}_{\text{tot}} \tau + D(\tau). 
\end{align}
Then, 
\begin{align}
    \label{ApEq:Bounding_line_a}
    \left\|\mathcal{R}\circ \mathcal{E} (\rho_0) -  \tilde{V}(\tau) \rho_0 \tilde{V}^\dagger(\tau) - \tau^2 \operatorname{Tr} \lf  H^{C\tilde{C}}_{\text{tot}} H^{\tilde{C}C}_{\text{tot}} \rho_0 \ri \rho_0 \right\| & \leq \left\|\tilde{V}(\tau) \rho_0 C(\tau) + C(\tau)\rho_0\tilde{V}(\tau) \right\|+ \left\|C(\tau) \rho_0 C(\tau) \right\|  \notag \\
    &~~~~+  \left\|-i\tau \operatorname{Tr} \lf D^\dagger(\tau)H^{\tilde{C}C}_{\text{tot}} \rho_0 \ri \rho_0 + i \operatorname{Tr} \lf  H^{C\tilde{C}}_{\text{tot}} D(\tau) \rho_0 \ri \rho_0 \right\| \notag \\
    &~~~~ + \left\|\operatorname{Tr} \lf D^\dagger(\tau)D(\tau) \rho_0 \ri \rho_0  \right\|,   \\
    \label{ApEq:Bounding_line_b}
    & \leq 2\|\tilde{V}(\tau)\| \| C(\tau )\| \|\rho_0\| + 2 \tau \left| \operatorname{Tr} \lf  H^{C\tilde{C}}_{\text{tot}} D(\tau) \rho_0 \ri \right| \|\rho_0\| \notag \\
    &~~~~ + \|C(\tau)\|^2 \|\rho_0\|  + 2 \left| \operatorname{Tr} \lf D^\dagger(\tau) D(\tau) \rho_0 \ri \right| \|\rho_0\|, \\
    \label{ApEq:Bounding_line_c}
    &\leq \lf 2 \| C(\tau )\| + 2 \tau \| H^{C\tilde{C}}_{\text{tot}} \| \| D(\tau) \| + \|C(\tau)\|^2  + \|D(\tau)\|^2 \ri \operatorname{Tr}(\rho_0) , \\
    \label{ApEq:Bounding_line_d}
    &\leq  \tau^3 \| H_{\text{tot}} \|^3
      \lf 4 + \tau^3 \| H_{\text{tot}} \|^3  + \tau \| H_{\text{tot}} \|\ri \operatorname{Tr}(\rho_0), \\
      \label{ApEq:Bounding_line_e}
    &\leq  O \lf \tau^3 \| H_{\text{tot}} \|^3 \ri, 
 \end{align}
 In the above chain of equations, we have used triangle inequality in line \eqref{ApEq:Bounding_line_a} and submultiplicative property of operator norm in line \eqref{ApEq:Bounding_line_b}. In line \eqref{ApEq:Bounding_line_c} we have used Holder inequality, which states that $\left|\operatorname{Tr\lf A \rho \ri}\right| \leq \|A\|_{\infty} \|\rho\|_{1}$. We arrive at the equation in line \eqref{ApEq:Bounding_line_c} by further using the fact that $\|\rho_0\| \leq \operatorname{Tr}(\rho_0) =  1$ and $\rho \geq 0$.  The equation \eqref{ApEq:Final_error_AQED} is clearly valid for $N = 1$. Assume it is valid for some $k \in \mathds{N}$. Define $\rho_k$ as 
 \begin{equation}
     \rho_k  = \tilde{V}^{k}(\tau) \rho_0 \tilde{V}^{\dagger k}(\tau)  + \tau^2 \left[ \sum_{m = 0}^{k - 1} \operatorname{Tr} \lf \tilde{V}^{\dagger k - 1 - m}(\tau) H_{\text{tot}}^{C \tilde{C}} H_{\text{tot}}^{\tilde{C}C} \tilde{V}^{ k - 1 - m}(\tau) \rho_0\ri \tilde{V}^{m}(\tau) \rho_0 \tilde{V}^{\dagger m}(\tau) \right].
 \end{equation}
 Consider $\lf \mathcal{R}\circ \mathcal{E} \ri^{k+1} \rho_0$. Using 
\begin{equation}
    \left \|\tau^2 \operatorname{Tr} \lf \tau^2 \left[ \sum_{k = 0}^{N - 1} \operatorname{Tr} \lf \tilde{V}^{\dagger N - 1 - k}(\tau) H_{\text{tot}}^{C \tilde{C}} H_{\text{tot}}^{\tilde{C}C} \tilde{V}^{ N - 1 - k}(\tau) \rho_0\ri \tilde{V}^{k}(\tau) \rho_0 \tilde{V}^{\dagger k}(\tau) \right] H_{\text{tot}}^{C \tilde{C}} H_{\text{tot}}^{\tilde{C}C} \ri \right \| = O(N \|H_{\text{tot}}\|^4\tau^4). 
\end{equation}
We can show that 
\begin{align}
   \lf \mathcal{R}\circ \mathcal{E} \ri^{k+1} \rho_0  &= \lf \mathcal{R}\circ \mathcal{E} \ri\rho_k + O(k\tau^3 \| H_{\text{tot}}^3 \| ), \\
   &= \tilde{V}(\tau) \lf\rho_k \ri \tilde{V}^\dagger(\tau) + \tau^2 \operatorname{Tr} \lf H_{\text{tot}}^{C \tilde{C}} H_{\text{tot}}^{\tilde{C}C}  \tilde{V}^{k}(\tau) \rho_0 \tilde{V}^{\dagger k}(\tau) \ri \rho_0 \\
   &+ O\lf k \tau^4 \| H_{\text{tot}}\|^4\ri + O \lf \tau^3 \| H _{\text{tot}}\|^3  \ri + O\lf k\tau^3 \| H_{\text{tot}}\|^3 \ri,  \\
   &= \tilde{V}^{k+1}(\tau) \rho_0 \tilde{V}^{\dagger k+1}(\tau)  + \tau^2 \left[ \sum_{m = 0}^{k} \operatorname{Tr} \lf \tilde{V}^{\dagger k - m}(\tau) H_{\text{tot}}^{C \tilde{C}} H_{\text{tot}}^{\tilde{C}C} \tilde{V}^{k - m}(\tau) \rho_0\ri \tilde{V}^{m}(\tau) \rho_0 \tilde{V}^{\dagger m}(\tau) \right] + O((k+1)\|H_{\text{tot}}\|^3 \tau^3)
\end{align}

\section{Leading order error analysis for Dynamical Decoupling}
\label{ap:Leading_order_DD}
Let $H_{\text{tot}}$ in Eq.~\eqref{ApEq:Setup_Hamiltonian} denote the total Hamiltonian. As before, let us assume that $H_{\text{tot}}$ is bounded and  $\|H_{\text{tot}}\| < \infty$. Let $C$ and $\tilde{C}$ denote the codespace and the error space, respectively. The codespace satisfies 
\begin{equation}
    \Pi_C E_i \Pi_C  = \xi_i \mathds{1}.
\end{equation}
 The unitary pulse $U_p$ is given as 
\begin{equation}
    U_p = \Pi_C + e^{i \phi} \Pi_{\tc}.
\end{equation}
We want to evaluate 
\begin{equation}
    V(N,\tau) := \lf U_p U(\tau) \ri^N = U_p \left[   U(\tau)U_p \right]^{{N-1}} U(\tau). 
\end{equation}
Inserting $ U_p^{j} (U^\dagger_p)^j$ after the $j^{th}$ pulse, beginning with $m=0$, we can re-write the above expression as 
\begin{equation}
      V(N,\tau) = U_p^{N} \prod_{m = 0}^{N - 1} \lf U_p^{\dagger m} U(\tau) U_p^m\ri = U_p^N \prod_{m = 0}^{N - 1} \exp \lf-i \tau  U_p^{\dagger m} H_{\textrm{tot}} U_p^m\ri.
\end{equation}
Expanding the pulse unitary in terms of its eigenprojectors, we can write 
\begin{equation}
      U_p^{\dagger m} U(\tau) U_p^m = \exp{\lf-i \tau \lf H_0 + A_m \ri \ri},
\end{equation}
where 
\begin{align}
    H_0 &=\Pi_C H_{\text{tot}} \Pi_C + \Pi_{\tc} H_{\text{tot}} \Pi_{\tc} = H_{\text{tot}}^{C C} + H_{\text{tot}}^{\tc\tc }, \\ 
    A_m &=   e^{i \phi m} \Pi_C H_{\text{tot}}\Pi_{\tc} + e^{-i \phi m} \Pi_{\tc} H_{\text{tot}}\Pi_C=  e^{i \phi m} H_{\text{tot}}^{C\tc} +  e^{-i \phi m} H_{\text{tot}}^{\tc C} . 
\end{align}
Thus, we have, 
\begin{equation}
  V(N,\tau)  = U_p^N \prod_{m = 0}^{N} \exp \lf -i \tau \lf H_0 + A_m \ri \ri.    
\end{equation}

We will evaluate the above using a perturbative version of the Baker-Campbell-Hausdorff (BCH) expansion. We will keep the leading order term and the subleading term to quantify the error. To bound the error rigorously, we will only focus on the case $\phi = \pi$. For $\phi= \pi$, we see that the operators $A_m$ have an additional structure 
\begin{equation}
    A_m = (-1)^m A_0.
\end{equation}
Thus, the operator $V(N,\tau)$ for even $N$, is reduced to 
\begin{equation}
\label{apeq:DD_V_simplified}
    V(N,\tau) = \left[ \exp(-i \tau \lf H_0 - A_0\ri) \exp(-i \tau \lf H_0 + A_0\ri) \right]^{N/2}.
\end{equation}
A rigorous bound on the above expression can be obtained by expressing the BCH terms in terms of a Magnus expansion, as described in Ref.~\cite{blanes_magnus_2009} (Sec.~2.8). The associated truncation error is then bounded using the Magnus-expansion error bounds of Ref.~\cite{sharma_hamiltonian_2024}. For the sake of completeness, we prove the following lemma, 
\begin{lemma}
    Suppose $Y_1$ and $Y_2$ are bounded self-adjoint operators in the Hilbert space $\mathcal{H}$. Let $\epsilon > 0$ and $\Lambda > 1$ be such that 
    \begin{equation}
      \label{ApEq:Lemma_cond}
        \epsilon \lf \max_{j} \| Y_j \| \ri \leq 1 - \frac{1}{\Lambda}.
    \end{equation}
    Then     \begin{equation}
          \exp \lf -i \epsilon Y_2 \ri\exp \lf -i \epsilon Y_1 \ri = \exp\lf \sum_{p = 1}^{M}\Omega_p\ri + O\lf\epsilon^{M+1}\lf\max_{j}\|Y_j\|\ri^{M+1}\ri,
     \end{equation}
     where $\Omega_p$ are the BCH terms of order $\epsilon^p$ and the notations match Ref. \cite{blanes_magnus_2009}.
\end{lemma}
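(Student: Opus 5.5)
The plan is to recast the product of two exponentials as a single time-ordered exponential and invoke the convergence and truncation theory of the Magnus expansion, as in Ref.~\cite{blanes_magnus_2009} (Sec.~2.8).

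\textbf{Step 1 (time-dependent generator).} Define the piecewise-constant generator $A(s)=-iY_1$ for $s\in[0,1)$ and $A(s)=-iY_2$ for $s\in[1,2]$. Let $X(\sigma)$ solve $\partial_\sigma X = \epsilon A(\sigma) X$ with $X(0)=\mathds{1}$. Then $X(2)=\exp(-i\epsilon Y_2)\exp(-i\epsilon Y_1)$. The Magnus expansion of $X(2)$ is $\exp(\Omega)$ with $\Omega=\sum_{p\ge 1}\Omega_p$. Here $\Omega_p$ is the $p$-fold nested commutator integral of $\epsilon A$, which is homogeneous of degree $p$ in $\epsilon$. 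For a piecewise-constant generator these $\Omega_p$ coincide with the BCH terms of order $\epsilon^p$, which justifies the notation of the statement.

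\textbf{Step 2 (term-wise bounds and convergence).} Let $y=\max_j\|Y_j\|$. Then $\int_0^2\|\epsilon A(s)\|\,ds\le 2\epsilon y$. From the recursive generator of the Magnus terms in Ref.~\cite{blanes_magnus_2009}, I will extract an estimate of the form
\begin{equation*}
\|\Omega_p\|\le c\,(\epsilon y/r)^p,
\end{equation*}
with universal constants $c$ and $r$. This is exactly the type of bound quoted in Ref.~\cite{sharma_hamiltonian_2024}. The hypothesis \eqref{ApEq:Lemma_cond} should then keep the ratio $\epsilon y/r$ at most $1-1/\Lambda<1$, possibly after rescaling the constants. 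Summing the geometric tail gives
\begin{equation*}
\Big\|\sum_{p>M}\Omega_p\Big\|\le c\,\Lambda\,(\epsilon y/r)^{M+1}=O\big((\epsilon y)^{M+1}\big).
\end{equation*}
The same estimate shows that both $\|\Omega\|$ and $\|\sum_{p\le M}\Omega_p\|$ are bounded by a constant of order $c\Lambda$.

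\textbf{Step 3 (from exponents to exponentials).} I will use the standard Duhamel estimate
\begin{equation*}
\|e^{X}-e^{Y}\|\le\|X-Y\|\,e^{\max(\|X\|,\|Y\|)},
\end{equation*}
with $X=\Omega$ and $Y=\sum_{p\le M}\Omega_p$. Since both exponents are uniformly bounded by Step 2, the prefactor is an $\epsilon$-independent constant. The difference of the exponentials is therefore $O\big(\epsilon^{M+1}y^{M+1}\big)$, as claimed.

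\textbf{Main obstacle.} The delicate point is Step 2: obtaining an explicit per-order bound whose radius is compatible with the hypothesis \eqref{ApEq:Lemma_cond}. The sufficient convergence condition of Moan--Niesen, $\int\|A\|<\pi$, guarantees convergence. However, the elementary recursive bound has a smaller radius, roughly $\xi\approx1.0868$, and $2(1-1/\Lambda)$ may exceed $\xi$.

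My first attempt will be to absorb the factor $2$ and the radius into the constant by reading \eqref{ApEq:Lemma_cond} as ``$\epsilon y$ small enough,'' since the lemma only asserts an $O(\cdot)$ statement. If a uniform statement is needed, I will instead use the improved bounds of Ref.~\cite{sharma_hamiltonian_2024}, which control the truncated Magnus remainder directly rather than term-by-term.
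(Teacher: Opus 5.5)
Your proposal follows essentially the same route as the paper, which also writes the product as a time-ordered exponential of a piecewise-constant generator, identifies the Magnus terms with the BCH terms, invokes the $\pi$ convergence criterion, and bounds the difference of exponentials by the tail $\|\sum_{p>M}\Omega_p\|$ before summing a geometric series with ratio $\epsilon\max_j\|Y_j\|\le 1-1/\Lambda$. (The paper's Duhamel bound has no exponential prefactor; this is legitimate because all exponents are anti-Hermitian, so both exponentials are unitary.) Your caution in Step 2 is justified: the paper simply asserts $\|\Omega_p\|\le(\epsilon\max_j\|Y_j\|)^p$ without a derivation or explicit radius, and your reading of the hypothesis as ``$\epsilon y$ sufficiently small'' is what makes the argument rigorous. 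On the remaining range $\epsilon y\ge\delta_0$ the bound holds trivially, since the truncated BCH sum is anti-Hermitian, so both sides are unitary and differ in norm by at most $2\le 2\delta_0^{-(M+1)}(\epsilon y)^{M+1}$.
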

\begin{proof} 
 Define the time-dependent Hamiltonian 
\begin{equation}
    \label{apeq:magnus_bch}
    H= \left\{ \begin{matrix}
        \epsilon Y_2  & 0 \leq t < 1 \\
        \epsilon Y_1  & 1 \leq t < 2.
    \end{matrix} \right.
\end{equation}
From Ref.~\cite{blanes_magnus_2009} (section 2.8), we see that the $p^{th}$ order BCH expansion coincides with $p^{th}$ order Magnus expansion of the time-dependent Hamiltonian in Eq.~\eqref{apeq:magnus_bch}. Using the Magnus expansion, we can write 
\begin{equation}
     \exp \lf -i \epsilon Y_2 \ri\exp \lf -i \epsilon Y_1 \ri = \mathcal{T}_{\longleftarrow}\exp \lf -i \int_0^2 ds H(s) \ri = \exp\lf  \sum_{p = 1}^{\infty} \Omega_p \ri,
\end{equation}
where $\mathcal{T}_{\longleftarrow}$ is the time ordering operator. The Magnus expansion can be shown to converge as the condition in Eq. \eqref{ApEq:Lemma_cond} implies that $\epsilon \lf \sum_{j = 1}^2 \| Y_j \| \ri \leq \pi$~\cite{blanes_magnus_2009}. Then, using the bound proved in Ref.~\cite{sharma_hamiltonian_2024} (App.~B), using Duhalmel's principle, we see that 
\begin{align}
    \label{ApEq:MB_line_a}
    \left\|\exp\lf  \sum_{p = 1}^{\infty} \Omega_p \ri - \exp\lf  \sum_{p = 1}^{M} \Omega_p \ri\right\| &\leq \left\| \sum_{p = M+1}^{\infty} \Omega_p  \right\|, \\
    \label{ApEq:MB_line_b}
    &\leq \sum_{p = M+1}^{\infty} \|\Omega_p\|, \\
    \label{ApEq:MB_line_c}
    &\leq \sum_{p = M+1}^{\infty} \lf \epsilon\lf\max_{j}\|Y_j\|\ri\ri^p, \\
    \label{ApEq:MB_line_d}
    &\leq \Lambda \lf \epsilon^{M+1}\lf\max_{j}\|Y_j\|\ri^{M+1} \ri,
\end{align}
where in line \eqref{ApEq:MB_line_a}, we have used the upper bound proved in Ref.~\cite{sharma_hamiltonian_2024}, in line \eqref{ApEq:MB_line_b} we have used the triangle inequality, in line \eqref{ApEq:MB_line_c} we have bounded the $p^{th}$ order Magnus term and in line \eqref{ApEq:MB_line_d} we have used the closed form formula for the sum of an infinite geometric series and Eq. \eqref{ApEq:Lemma_cond}.
\end{proof}

Noting that $\|H_0\|, \|A_0\| \leq \|H_{\text{tot}}\|$, we can apply the above lemma to Eq.~\eqref{apeq:DD_V_simplified}
\begin{equation}
    V(N, \tau) = [\exp\lf-2i \tau H_0 - \tau^2 [H_0, A_0] \ri + O\lf\|H_{\text{tot}}\|^3\tau^3\ri]^{N/2}.
\end{equation}
Expanding the above, we get.
\begin{equation}
\label{ApEq:DD_full_eq}
    V(N,\tau) = \exp \left[ -i \tau \lf N H_0  \ri  - \frac{N \tau^2}{2} \lf \left[H_0,  H_{\text{tot}}^{C \tc} + H_{\text{tot}}^{\tc C} \right] \ri\right] + O\lf \frac{N}{2} \|H_{\text{tot}}\|^3 \tau^3 \ri.
\end{equation}

\subsection{Error at the level of Fisher information}
Ultimately, we are interested in calculating the scaling at the level of quantum Fisher information. We will do this using perturbation theory. We start by re-parameterizing  the operator in Eq.~\eqref{ApEq:DD_full_eq}.  Substituting $t = N \tau$ and allowing a slight abuse of notation for clarity, we get 
\begin{equation}
    V(t,\tau) = \exp \left( -i t H_0   - \frac{t\tau}{2}  \left[H_0,  H_{\text{tot}}^{C \tc} + H_{\text{tot}}^{\tc C} \right] \right) + O\lf \frac{t}{2} \|H_{\text{tot}}\|^3 \tau^2 \ri.
\end{equation}
Taking partial derivative with respect to $t$ 
\begin{equation}
    \frac{\partial V(t,\tau)}{\partial t} = \lf- i H_0 - \frac{\tau}{2} \left[H_0,  H_{\text{tot}}^{C \tc} + H_{\text{tot}}^{\tc C} \right]  \ri V(t,\tau) + O \lf \|H_{\text{tot}}\|^3 \tau^2\ri.
\end{equation}
Going to the interaction picture with respect to $H_0$, we can show that 
\begin{equation}
    V(t,\tau) = U_0(t)  ~ T_{\longleftarrow} \exp\lf - \frac{\tau}{2} \int_0^t ds U_0^\dagger(s) \left[H_0,  H_{\text{tot}}^{C \tc} + H_{\text{tot}}^{\tc C} \right] U_0(s) \ri + O \lf t\|H_{\text{tot}}\|^3 \tau^2\ri,
\end{equation}
where $U_0(t) = \exp \lf  -i H_0 t \ri $. Expanding the interaction unitary using Dyson series and using the fact that $\Pi_C \ket{\psi_0} = \ket{\psi_0}$ and $\exp \lf  -i H_{0} t \ri \Pi_C = \exp \lf  -i H^{CC}_{\text{tot}} t \ri \Pi_C$, we get 
\begin{align}
     \label{ApEq:V_tau_expansion}
     V(t,\tau) \ketbra{\psi_0}{\psi_0}  \otimes \rho_B V^\dagger(t,\tau)  &=  \exp \lf  -i H^{CC}_{\text{tot}} t \ri  \ketbra{\psi_0}{\psi_0} \otimes \rho_B \exp \lf  +i H^{CC}_{\text{tot}} t \ri  \notag \\
     &~~~  + \sum_{n = 1}^{\infty}  \lf \frac{-\tau}{2} \ri^n \int_0^t ds_1 \dots \int_0^{s_{n - 1}} ds_n ~ U_0(t) \mathcal{L}_{\text{int}} \lf s_1,\dots \lf \mathcal{L}_{\text{int}} \lf s_n,  \ketbra{\psi_0}{\psi_0}  \otimes \rho_B \ri \ri \ri U_0^\dagger(t) \\
     &~~~+O \lf t\|H_{\text{tot}}\|^3 \tau^2\ri,
\end{align}
where  the interaction Liouvillian is given by 
\begin{equation}
    \mathcal{L}_{\text{int}} \lf s, \rho \ri = \left[ U_0^\dagger (s)\left[H_0, H_{\text{tot}}^{C \tc} + H_{\text{tot}}^{\tc C} \right] U_0(s), \rho \right].
\end{equation}
Denote the term with the interaction Liouvillian as $R$
\begin{equation}
    R := \sum_{n = 1}^{\infty}  \lf \frac{-\tau}{2} \ri^n \int_0^t ds_1 \dots \int_0^{s_{n - 1}} ds_n ~ U_0(t) \mathcal{L}_{\text{int}} \lf s_1,\dots \lf \mathcal{L}_{\text{int}} \lf s_n,  \ketbra{\psi_0}{\psi_0}  \otimes \rho_B \ri \ri \ri U_0^\dagger(t).
\end{equation}
Tracing out the environment degrees of freedom in Eq. \eqref{ApEq:V_tau_expansion}, we can construct the reduced density matrix as
\begin{align}
    \label{Apeq:parttrace_expansion}
    \rho_S(t) &:= \operatorname{Tr}_{\text{B}} \lf V(t,\tau) \ketbra{\psi_0}{\psi_0}  \otimes \rho_B V^\dagger(t,\tau) \ri = U_S(t) \ketbra{\psi_0}{\psi_0} U^\dagger_S(t) + \tau \rho_{\text{pert}} + O(t^{2} \tau \|H_{\text{tot}}\|^4), 
\end{align}
where the unitary $U_S$ is given by $U_S(t) = \exp \lf -i t \omega_0 \Pi_C \mathds{1}_A \otimes G \Pi_C \ri$. $\rho_{\text{pert}}$ has the explicit form,  
\begin{align}
\label{ApEq:rho_pert}
   \rho_{\text{pert}} = -\frac{1}{2}\int_0^t ds_1 \operatorname{Tr_B} \lf U_0(t) \mathcal{L}_{\text{int}}\lf s_1, \ketbra{\psi_0}{\psi_0} \otimes \rho_B\ri U_0^\dagger(t) \ri .
\end{align}
The expansion in Eq. \eqref{Apeq:parttrace_expansion} is guaranteed to converge only if each summand in Eq.~\eqref{ApEq:V_tau_expansion} has finite Schatten-one norm. This is because of the nature of upper bounds of partial traces. In a formula due to Rastegin, we are guaranteed that~\cite{rastegin_bounding_partial_trace}
\begin{equation}
    \| \operatorname{Tr}_2 (O)\|_{\infty} \leq \|O\|_1.
\end{equation}
As a consequence, we will assume that all the relevant operators have finite Schatten-1 norm. 
 Note that the term that is zeroth order in $\tau$ in Eq.~\eqref{ApEq:rho_pert} only couples the codespace with the error space, that is 
 \begin{align}
     \label{ApEq:Diag_zero}
     0 &= \Pi_C U_0(t)\left[ U_0^\dagger (s)\left[H_0, H_{\text{tot}}^{C \tc} + H_{\text{tot}}^{\tc C} \right] U_0(s), \ketbra{\psi_0}{\psi_0}  \otimes \rho_B\right] U_0^\dagger(t) \Pi_C, \\
     &= \Pi_{\tilde{C}} U_0(t)\left[ U_0^\dagger (s)\left[H_0, H_{\text{tot}}^{C \tc} + H_{\text{tot}}^{\tc C} \right] U_0(s), \ketbra{\psi_0}{\psi_0}  \otimes \rho_B\right] U_0^\dagger(t) \Pi_{\tilde{C}}  .
 \end{align}
 Using Eq.~\eqref{ApEq:Diag_zero}, we immediately see that $\Pi_C \rho_{\text{pert}} \Pi_C = \Pi_{\tilde{C}} \rho_{\text{pert}} \Pi_{\tilde{C}} = O(t^{{2}} \tau \|H_{\text{tot}}\|^{4})$. When $\tau = 0$, $\rho_S(t)$ is a pure state and has a single non-trivial eigenvalue $1$ with corresponding eigenvector $U_S(t) \ket{\psi_0}$. Using perturbation theory to calculate the eigenvalues and eigenvalues of $\rho_S(t)$ to the order of $O(\tau{\|H_{\textrm{tot}}\|})$ for $\tau \neq 0$, we first note that using $U_S(t) \ket{\psi_0} = \Pi_C U_S(t)\ket{\psi_0}$
\begin{equation}
\label{ApEq:Largest_eigval_perturb_expand}
    \lambda_0(\tau) = 1 - \tau \bra{\psi_0} U_S^\dagger(t) \rho_{\text{pert}} U_S(t) \ket{\psi_0} = 1 - O\lf t^{{2}}\tau^2 \|H_{\text{tot}}\|^{4}\ri.
\end{equation}
Thus, the largest eigenvalue changes only at the order of $\tau^2$. For the remaining eigenvalues, we can perform degenerate perturbation theory.  But there is a more elegant solution. Note that the eigenvalues can be seen as functions of $\tau$ with the properties that for $j\neq0$ 
\begin{equation}
    \lambda_{j}(0) = 0, \quad \lambda_{j}(\tau) \geq 0 ~\forall \tau \geq 0.
\end{equation}
This means that $\tau = 0$ is a local minima and hence $\partial_{\tau} \lambda_j \geq 0$. Note that we only do a one sided derivative, and hence we cannot immediately guarantee that $\lambda_j = 0$, because we only know that the eigenvalue decreases from the right and have no information on its behavior from the left. However, the additional constraint $\sum_j \lambda_j (\tau) = 1$ for all $\tau \geq 0$, guarantees that the first derivative terms are 0. Using that $\lambda_0(0) = 1$ and $\partial_\tau \lambda_0 (0) = 0$, we can show that the normalisation condition reduces to
\begin{equation}
    \tau \sum_{j \geq 1} \partial_\tau \lambda_j(0) + O(\tau^2) = 0.
\end{equation}
Since, $\partial_{\tau} \lambda_j \geq 0$, the above condition can only be satisfied if $\partial_{\tau} \lambda_j = 0$. Hence, we have shown that the eigenvalues change only at the order of $\tau^2$. Since only $\lambda_0(0) \neq 0$, to calculate the QFI to the order of $O(\tau{\|H_{\textrm{tot}}\|})$ we only need to perturbatively expand $\ket{\lambda_0}$. Noting again that $\rho_\textrm{pert}$ only couples the codespace to the error space, we have
\begin{equation}
    \ket{\lambda_0} = U_{{S}}(t) \ket{\psi_0} + \tau \sum_{k} \braket{k|\rho_{\textrm{pert}} U_S(t) |\psi_0}|k\rangle + O(\tau^2{\|H_{\textrm{tot}}\|^{2}}), 
\end{equation}
where $\ket{k}$ forms a basis on the error space, i.e, $\Pi_{\tilde{C}} = \sum_k \ketbra{k}{k}$. Consider the QFI given in the form (see Ref.~\cite{liu_QFI_review}, Corollary 2.2.1) 
\begin{equation}
    \label{ApEq:QFI}
    \mathcal{F}(t) = \sum_{ \lambda_j \neq 0} \frac{\left(\partial_{\omega_0} \lambda_j\right)^2}{\lambda_j}+\sum_{ \lambda_j \neq 0} 4 \lambda_j\left\langle\partial_{\omega_0} \lambda_j \mid \partial_{\omega_0} \lambda_j\right\rangle-\sum_{\lambda_j, \lambda_k \neq 0} \frac{8 \lambda_j \lambda_k}{\lambda_j+\lambda_k}\left|\left\langle\partial_{\omega_0} \lambda_j \mid \lambda_k\right\rangle\right|^2 .
\end{equation}
 It is easy to see that the first term is of $O(\tau^2)$ as the eigenvalues change only in that order. Similarly, it is easy to see that for all $j \neq 0$ in the second term already contribute only to the order of $O(\tau^2)$. Taking the derivative of $\ket{\lambda_0}$, we find 
\begin{equation}
    \ket{\partial_{\omega_0} \lambda_0} = -i t \Pi_C {\mathds{1}_A \otimes G}\Pi_C U_S(t) \ket{\psi_0} + \tau \sum_k \partial_{\omega_0} \braket{k|\rho_{\textrm{pert}} U_S(t) |\psi_0}|k\rangle + {O(\tau^2\|H_{\textrm{tot}}\|)}.
\end{equation}
Note that the first term is in the codespace while the second is in the error space. Thus, we show that $\braket{\partial_{\omega_0} \lambda_0|\partial_{\omega_0} \lambda_0}$ does not contribute at the order of $O(\tau)$
\begin{equation}
     \braket{\partial_{\omega_0} \lambda_0|\partial_{\omega_0} \lambda_0} = t^2 \braket{\psi_0 \Big|  \lf{U_{S}(t)^{\dagger}} \Pi_C {\mathds{1}_A \otimes G}
\Pi_C\ri^2 {U_{S}(t)}\Big| \psi_0 } + O(\tau^2).
\end{equation}
Thus, the second term also does not contribute towards $\mathcal{F}(t)$  to order $O(\tau)$. It is also easy to notice that, for the same reasons as above 
\begin{equation}
    \label{ApEq:Third_term_first_part}
    \braket{\lambda_0|\partial_{\omega_0} \lambda_0} = -i{\langle\psi_{0}|U_{S}(t)^{\dag}}\lf t \Pi_C \mathds{1}_A{\otimes G} \Pi_C \ri{U_{S}(t)|\psi_{0}\rangle} + O(\tau^2{\|H_{\textrm{tot}}\|}).
\end{equation}
Consider the third term in Eq.~\eqref{ApEq:QFI}. We can split the summation into four pieces: $j = 0, k = 0$, $j \neq 0, k \neq 0$, $j \neq 0, k = 0$ and $j \neq 0, k \neq 0$, with the latter three immediately contributing only to $O(\tau^2)$, due to the scaling of the eigenvalues. Using Eqs.~\eqref{ApEq:Third_term_first_part} and \eqref{ApEq:Largest_eigval_perturb_expand} we see that the first piece does not have terms at the $O(\tau)$. More specifically 
\begin{equation}
    \frac{4 \lambda_0^2}{\lambda_0} \left|\left\langle\partial_{\omega_0} \lambda_0 \mid \lambda_0\right\rangle\right|^2 =  4 t^2\braket{ \psi_0 \Big |{U_{S}(t)^{\dag}} \Pi_C \mathds{1}_A{\otimes G} \Pi_C{U_{S}(t)}  \Big| \psi_0 }^{{2}} + O(\tau^2).
\end{equation}
Thus, the third term also does not contribute at the order of $O(\tau)$. Combining everything, we show that 
\begin{equation}
    \mathcal{F}(t) = 4 t^2 \lf \braket{\psi_0 \Big|{U_{S}(t)^{\dag}}  \lf \Pi_C \mathds{1}_A{\otimes G} \Pi_C\ri^2 {U_{S}(t)}\Big|\psi_0 } -  \braket{ \psi_0 \Big |{U_{S}(t)^{\dag}} \Pi_C \mathds{1}_A{\otimes G} \Pi_C {U_{S}(t)} \Big| \psi_0}  \ri + O(\tau^2).
\end{equation}

\section{AQED example: spontaneous decay of a two-level system}
\label{ap:spon_decay}
This section will closely follow the derivations in Ref.~\cite{breuer_book} section 10.1. Consider the Hamiltonian under the rotating wave approximation 
\begin{equation}
    H = \omega_0 \sigma_+\sigma_- + \sum_k \omega_k b_k ^\dagger b_k + \sum_k \sigma_+ \otimes g_k b_k  + \sigma_- \otimes g^*_k b^\dagger_k. 
\end{equation}
The Hamiltonian conserves the total number of excitations as $N = \sum_k b_k^\dagger b_k + \sigma_+ \sigma_-$. Using the prescription in Lemma \ref{lem:Error_detection}, the codespace is spanned by $\ket{00} , \ket{11}$. Thus, assuming the environment begins in the vacuum, the total state of the system, ancilla and the environment can be written as 
\begin{equation}
     \label{Eq:Spont_decay_param_with_ancilla}
    \ket{\psi(t)} = \alpha(t) \ket{0}\ket{0}\ket{0}^{\otimes n} + \beta(t) \ket{1}\ket{1}\ket{0}^{\otimes n} + \sum_k d_k(t) \ket{0} \ket{1}\ket{1_k}.
\end{equation}
Following the analysis in  Ref.~\cite{breuer_book} (section 10.1), $\beta(t)$ can be represented by an integro-differential equation 
\begin{equation}
\label{ApEq:integro_diff_1}
    \dot{\beta}(t) + i  \omega_0 \beta(t) + \int_{0}^t~ d\tau  ~ \int_0^\infty J(\omega) e^{-i \omega (t - \tau)}  \beta(\tau)=0,
\end{equation}
where $J(\omega) = \sum_k |g_k|^2 \delta(\omega - \omega_k)$ is the spectral density associated with the environment. Note that Eq.~\eqref{ApEq:integro_diff_1} describes the Schrodinger picture evolution, and hence differs from the one in Ref. \cite{breuer_book}, which is in the interaction picture. We consider a Lorentzian spectral density with detuning, as used in the Jaynes-Cummings model 
\begin{equation}
    J(\omega) = \frac{1}{2 \pi} \frac{\gamma_0 \lambda^2}{(\omega_0 - \Delta - \omega)^2 + \lambda^2},
\end{equation}
where $\lambda$ denotes the spectral width of the coupling,  $\tau_B = \lambda^{-1}$ is the reservoir correlation time and $ \tau_S = \gamma^{-1}_0$ represents the characteristic system timescale. For the plots in this article, we chose $\lambda = \omega_0 $, $\gamma_0 =\Delta = 5 \omega_0 $. Solving Eq.~\eqref{ApEq:integro_diff_1} via the Laplace transform we get \cite{breuer_book} 
\begin{equation}
    \label{Eq:JC_detuning_beta}
    \beta (t) = \beta(0) e^{-(\lambda - i \Delta  + i \omega_0)t/2 } \lf \cosh \lf \frac{dt}{2} \ri + \frac{\lambda - i \Delta}{d}\sinh\lf \frac{dt}{2} \ri \ri,
\end{equation}
where $d = \sqrt{(\lambda + i \Delta)^2 - 2 \gamma_0 \lambda}$. Using Eq. \eqref{Eq:JC_detuning_beta} one can calculate the state at time $t$ for an uninterrupted evolution, and hence we can calculate the Fisher information for $\omega_0$.

\subsection*{Evolution with mid-circuit measurements and reset}
Consider measurements that are performed at intervals of $\tau \ll \lambda^{-1}$. Define $\beta_\tau := \beta(\tau)/\beta(0)$. The state just before the first measurement will be given as
\begin{equation}
     \ket{\psi(\tau)} = \alpha(0) \ket{0}\ket{0}\ket{0}^{\otimes n} + \beta(\tau) \ket{1}\ket{1}\ket{0}^{\otimes n} + \sum_k d_k(\tau) \ket{0} \ket{1}\ket{1_k}.
\end{equation}

Under the boundary conditions that $\alpha(0) = \beta(0) = 1/\sqrt{2}$, the probability of projecting into the error space is given by $\frac{1}{2} \lf 1 - |\beta_\tau|^2 \ri$. Since we reset to the initial state if the state is projected onto the error space, after $1$ measurement, the state of the system is given by the ensemble 
\begin{equation}
    \rho^{(1)} := \left\{ \frac{1}{2} \lf 1 + |\beta_\tau|^2 \ri, \ket{\psi_1}  ~; \frac{1}{2} \lf 1 - |\beta_\tau|^2 \ri, \ket{\psi_0} \right\},
\end{equation}
where 
\begin{equation}
    \ket{\psi_n} = \frac{1}{\sqrt{(1 + |\beta_\tau^n|^2)}} \lf \ket{00} + \beta_\tau^n \ket{11} \ri.
\end{equation}
Now consider the state $\ket{\psi_1}$ that has evolved for $\tau$. As before the probability that this will collapse into the error space is given by
\begin{equation}
    p_E = 1 - |\alpha(0)|^2 - |\beta(\tau)|^2 = \frac{1 + |\beta_\tau|^2 - 1 - |\beta^2_\tau|^2}{1 + |\beta_\tau|^2} = \frac{|\beta_\tau|^2 \lf 1 - |\beta_\tau|^2 \ri}{1 + |\beta_\tau|^2},
\end{equation}
where $\alpha(0) = \sqrt{(1 + |\beta_\tau|^2)}^{-1} $ and $\beta(\tau) = \beta_\tau^2 \sqrt{(1 + |\beta_\tau|^2)}^{-1}$, where we have assumed that the environment is reset whenever the state is reset. Thus, after $2$ measurements, the state of the system is given by the ensemble 
\begin{equation}
    \rho^{(2)} := \left\{ \frac{1}{2} \lf 1 + |\beta_\tau|^4 \ri, \ket{\psi_2}  ~; \frac{1}{4} \lf1 - |\beta_\tau|^4 \ri, \ket{\psi_1} ~; \frac{1}{4} \lf1 - |\beta_\tau|^4 \ri, \ket{\psi_0} \right\}.
\end{equation}

Continuing the pattern, one can prove using induction that after $n$ rounds of measurement the state of the system is given by, 
\begin{equation}
    \rho^{(n)} = \{p_n, \ket{\psi_n} ~; p_{n-1} \ket{\psi_{n - 1}} ~; \dots ~;p_0 \ket{\psi_0}\},
\end{equation}
where the probabilities $p_{j}$ are given by, 

\begin{equation}
    p_j  = \left \{ \begin{matrix}
         \frac{1}{2} \lf 1 + |\beta_\tau|^{2n} \ri  & \text{if }~~ j = n, \\
         \frac{1}{4} \lf 1 -  |\beta_\tau|^{2} \ri \lf 1+ |\beta_\tau|^{2j} \ri \left[\frac{1}{2} \lf 1 + |\beta_\tau|^{2} \ri\right]^{n - j - 1} & \text{otherwise}.
    \end{matrix} \right.
\end{equation}

Thus, after $n$ rounds of measurements we have the ensemble decomposition of the density matrix. We use this to calculate the average QFI $F_{\mathrm{avg}}(t)$, as 
\begin{equation}
   F_{\mathrm{avg}}(t) = \sum_{i = 1}^n p_i \mathcal{F} \lf \ketbra{\psi_i}{\psi_i}\ri.  
\end{equation}
If we keep track of the measurement record, we know the exact state $\ket{\psi_n}$ after the entire evolution. In this case, the average QFI is the expected QFI over all possible measurement records (or trajectories). If we forget the measurement record, we only have access to the density matrix $\rho^{(n)}$. Using the convexity of the QFI, we can show that for this case the average QFI upper bounds the QFI. 

\section{Simulations of AQEC and AQED for perpendicular dephasing}
\label{ap:Perp_dephasing_sim}
 \begin{figure}[htbp!]
     \centering
     \includegraphics[width=0.48\linewidth]{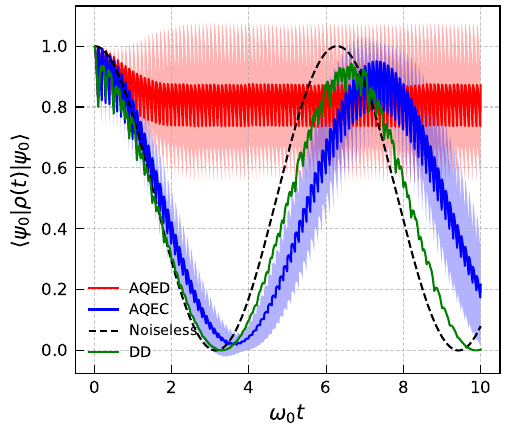}
     \hfill
     \includegraphics[width=0.48\linewidth]{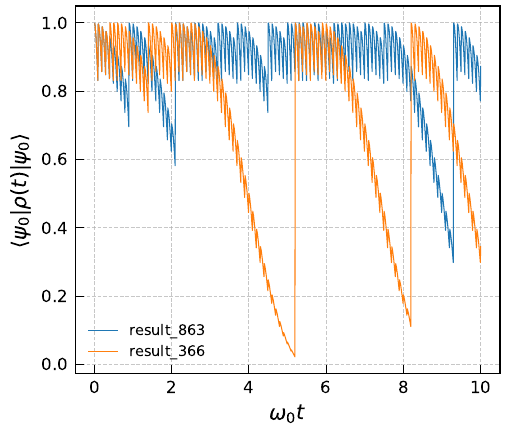}
     \caption{(Left) We plot the mean evolution of $10000$ different trajectories along with the standard deviation for $\omega_0 \delta t = 0.1$. The AQEC and DD evolutions are slightly damped, compared to the noiseless evolution, while the AQED evolution is extremely noisy. (Right) We plot two randomly chosen trajectories for the AQED simulation.}
     \label{Apfig:Perp_dephase_0p1}
 \end{figure}
In this section, we present the numerical simulation of the perpendicular dephasing model used to support our analysis regarding the robustness of the AQEC scheme in comparison with the AQED scheme. For computational tractability, we model the environmental modes as qubits connected to the system in a star topology. The system and ancilla are also represented as qubits. For the plots presented in this article, the number of environmental modes is fixed to $5$. The total Hamiltonian is given by 
\begin{equation}
     H = \omega_0 \mathds{1}_A \otimes \sigma_z \otimes \mathds{1}_E + \sum_j g_{env} (j) \mathds{1}_A \otimes \sigma_x \otimes \sigma_x^{(j)} + \sum_j \Omega_{env}(j) \mathds{1}_A \otimes \mathds{1}_{S} \otimes \sigma_z^{(j)},
\end{equation}
where $\sigma_x^{(i)}$ acts on the $i^{th}$ environmental mode, and the summation runs over all modes. For the plots in this paper, we fix the mode frequencies and the coupling constants as 
\begin{equation}
      \vec{\Omega}_{env} = [-0.9881\omega_{0},1.7832\omega_{0},-12.6627\omega_{0},-5.2682\omega_{0},-0.3333\omega_{0}], \quad  \vec{g}_{env} = (g/\sqrt{5})[\omega_{0},\omega_{0},\omega_{0},\omega_{0},\omega_{0}].
\end{equation}
The mode frequencies were randomly sampled from a standard Cauchy distribution with location parameter $x_0 = 0$ and scale parameter $s = 1$, to avoid any bias due to regularity. The parameters were chosen such that, in the asymptotic limit of infinite environmental modes, the spectral density approaches a Lorentzian distribution multiplied by a factor of $g$. The signal and the error generators are given by 
 \begin{equation}
    G = \sigma_z, \quad E_1 = \sigma_x,
\end{equation}
 which satisfy the conditions to construct an AQEC code. Since the error generator is $\sigma_x$, the above Hamiltonian generates dephasing along the \textit{x}-axis. The initial state is given by $\ket{\psi_0}$
\begin{equation}
    \ket{\psi_0} = \frac{1}{\sqrt{2}} \lf \ket{00} + \ket{11}\ri \otimes ]\ket{0}^{\otimes 5}.
\end{equation}
 \begin{figure}[htbp!]
     \centering
     \includegraphics[width=0.48\linewidth]{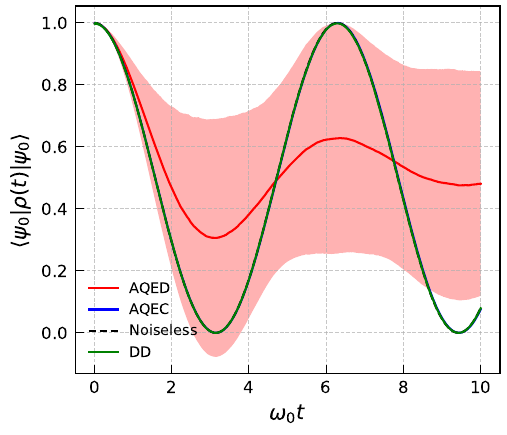}
     \hfill
     \includegraphics[width=0.48\linewidth]{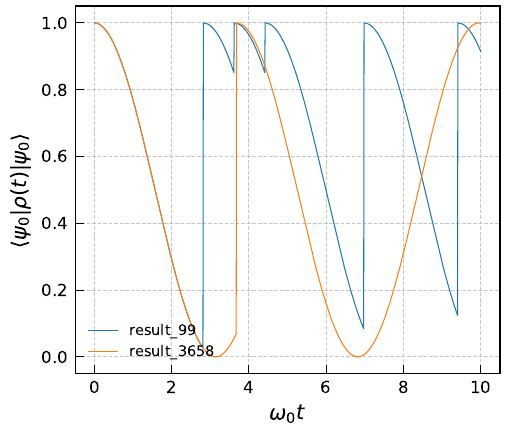}
     \caption{(Left) We plot the mean evolution of a $10000$ different trajectories along with its standard deviation for $\omega_0 \delta t = 0.01$. The AQEC and DD evolutions closely match the noiseless evolution, while the AQED evolution is still damped. (Right) We plot two trajectories for the AQED simulation. The big jumps back to a fidelity of $1$ corresponds to the time steps, where the system was projected into the error subspace and reset back to the initial state.}
     \label{Apfig:Perp_dephase_0p01}
 \end{figure} 
 We then perform three distinct evolutions: an AQED-type simulation, where the state is reset each time it is projected into the error space, an AQEC-type simulation, where the recovery operator (in this case, $\sigma_x$ on the system) is applied, and a DD-type evolution where a unitary control is implemented instead of the measurement. The AQED and AQEC evolutions are simulated  using a stochastic trajectory approach. At each measurement step, a random number $q$ is drawn uniformly from the interval $[0,1]$. If $q < p_{\text{err}}$, the probability of projecting into the error subspace, the state is projected into the error subspace. Otherwise, we project it into the codespace. The plots shown here are for a $10000$ different trajectory runs. We evaluate the fidelity between the reduced system-ancilla state and the initial Bell state, which effectively captures the Rabi oscillations of the system.  In Figs.~\ref{Apfig:Perp_dephase_0p1},~\ref{Apfig:Perp_dephase_0p01}, and~\ref{Apfig:Perp_dephase_0p001}, we plot the fidelity of state at time $t$ with the initial state against the total time $t$ for the measurement intervals, $\omega_0\delta t = 0.1$, $0.01$, and $0.001$ respectively, for $g = 10\omega_{0}$. This captures the Rabi oscillation of the state whose frequency carries information about the parameter. The solid lines represent the means and the shaded regions represent the standard deviations for $10000$ trajectories.  We also plot a few individual trajectories for the AQED evolution to highlight the effect of reset recovery.  \par
  \begin{figure}[htbp!]
     \centering
     \includegraphics[width=0.48\linewidth]{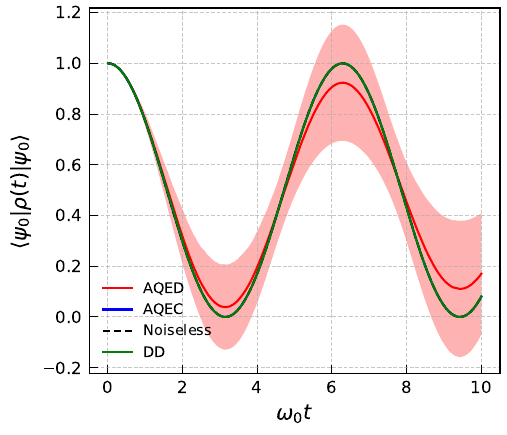}
     \hfill
     \includegraphics[width=0.48\linewidth]{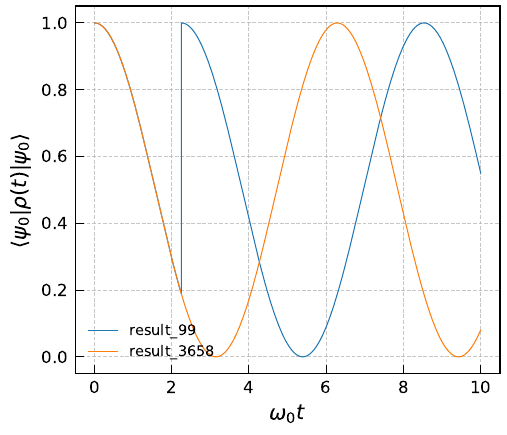}
     \caption{(Left) We plot the mean evolution of a $10000$ different trajectories along with its standard deviation for $\omega_0 \delta t = 0.001$. The AQEC and DD evolutions very closely matche the noiseless evolution, while the AQED evolution is still slightly damped. (Right) We plot two randomly chosen trajectories for the AQED simulation. The big jumps back to a fidelity of $1$ corresponds to the time steps where the system was projected into the error subspace and reset back to the initial state.}
     \label{Apfig:Perp_dephase_0p001}
 \end{figure}
 From the Figs.~\ref{Apfig:Perp_dephase_0p1},~\ref{Apfig:Perp_dephase_0p01}, and~\ref{Apfig:Perp_dephase_0p001}, we clearly see that the AQEC and the DD schemes out perform the AQED schemes for all the three different measurement intervals. The big jumps in AQED trajectories back to a fidelity of $1$ corresponds to the time steps, where the system was projected into the error subspace and reset back to the initial state. Note that the parameters considered here place the simulation beyond the regime of leading order calculations, and hence, our claims regarding the robustness of schemes also seem to hold true for a larger range of parameters.
 \begin{figure}
     \centering
     \includegraphics[width=0.48\linewidth]{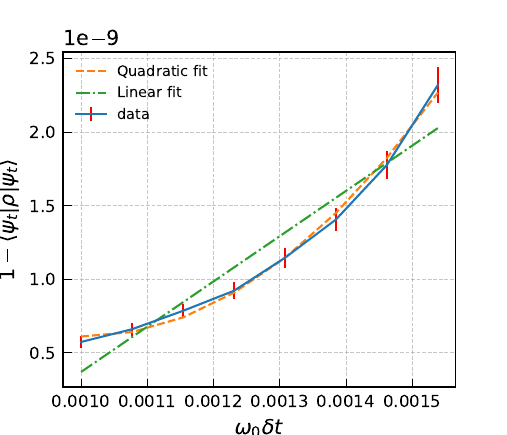}
     \includegraphics[width=0.48\linewidth]{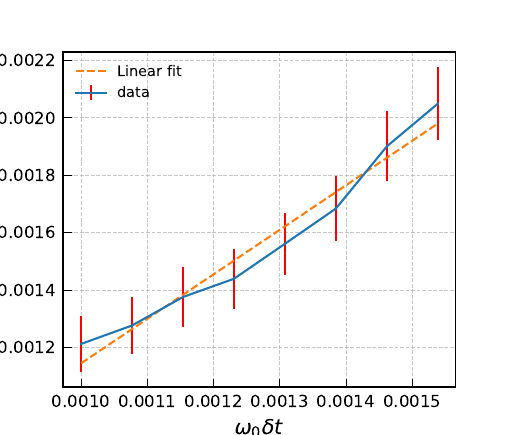}
     \includegraphics[width=0.48\linewidth]{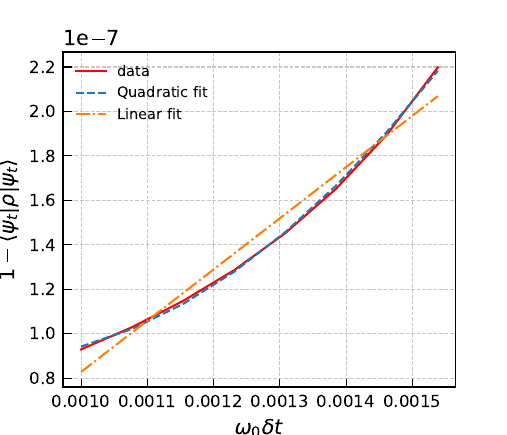}
     \caption{Infidelity of the state at time $\omega_0 t = 10$ ($\rho(t)$) with the exact evolution ($\ket{\psi_t}$) vs $\omega_{0}\delta t$. The upper-left plot shows the AQEC evolution, while the upper right plots the AQED evolution. The red bars are standard error $\sigma/\sqrt{N}$ for $N = 10000$ trajectories. The bottom plot shows the DD evolution. We plot linear and quadratic fits generated using the \textit{polyfit} function in \textit{numpy} library of python.}
     \label{Apfig:scaling_plots}
 \end{figure}
Finally, we plot the infidelity of the state at time $\omega_{0}t = 10$ with the state resulting from noiseless evolution $\ket{\psi_t}$ against different measurement time steps $\delta t$ for $g = \omega_{0}$ in Fig.~\ref{Apfig:scaling_plots}. We clearly see that the infidelity scales as $O(\delta t^2)$ for AQEC and the DD evolutions, while it scales as $O(\delta t)$ for AQED evolution.

\end{document}